\journalname{Quantum Information Processing}
\begin{document}

\title{New entanglement-assisted MDS quantum codes from constacyclic codes
}

\titlerunning{EAQMDS codes from constacyclic codes}        

\author{Mehmet E. Koroglu}


\institute{Mehmet E. Koroglu \at
              Y\i ld\i z Technical University, Department of Mathematics, Faculty of Art and Sciences, 34220, Esenler, Istanbul-Turkey \\
              \email{mkoroglu@yildiz.edu.tr}}

\date{Received: date / Accepted: date}

\maketitle

\begin{abstract}
Construction of good quantum codes via classical codes
is an important task for quantum information and quantum computing.
In this work, by virtue of a decomposition of the defining set of constacyclic
codes we have constructed eight new classes of entanglement-assisted quantum maximum distance separable codes.
\keywords{Entanglement-assisted quantum error-correcting codes\and Constacyclic codes\and MDS codes}
\subclass{MSC 94B05 \and MSC 94B15 \and MSC 81P70 \and MSC 81P45}
\end{abstract}

\section{Introduction}

Quantum error-correcting (QEC for brevity) codes were introduced for
security of quantum information. Construction of good quantum codes via
classical codes is a crucial task for quantum information and quantum
computing (see Refs. \cite%
{Ashikhmin,Calderbank,Calderbank1,ChenH,Ketkar,Qian,Qian1,Steane,Xiaoyan}
for example). A $q$-ary quantum code $Q,$ denoted by parameters $\llbracket  %
n,k,d\rrbracket  _{q},$ is a $q^{k} $ dimensional subspace of the Hilbert
space $\mathbb{C}^{q^{n}}.$ A quantum code $\mathcal{C}$ with parameters $%
\llbracket  n,k,d\rrbracket  _{q}$ satisfy the quantum Singleton bound: $%
k\leq n-2d+2$ (see \cite{Ketkar}). If $k=n-2d+2,$ then $\mathcal{C}$ is
called a quantum maximum-distance-separable (MDS) code. In recent years,
many researchers have been working to find quantum MDS codes via
constacyclic codes (for instance, see \cite%
{Chen,Guardia,Kai1,Krishna,Xiaoyan,Zhang}).

Entanglement-assisted quantum error correcting (EAQEC for short) codes use
pre-existing entanglement between the sender and receiver to improve
information rate. For further details about EAQEC for example, see \cite%
{Brun,Fujiwara,Grassl,Hsieh,Hsieh1,Lai,Li,Lu,Wilde}.

Recently, many papers have been devoted for obtaining EAQEC codes derived
from classical error correcting codes. Some of these papers can be
summarized as follows. In \cite{Xiaoyan}, based on classical quaternary
constacyclic codes, some parameters for quantum codes were obtained. In \cite%
{ChenJ}, a decomposition of the defining set of negacyclic codes has been
proposed and by virtue of the proposed decomposition four classes of EAQEC
codes have been constructed. Fan et al., have constructed five classes of
entanglement-assisted quantum MDS (EAQMDS for short) codes based on
classical MDS codes by exploiting one or more pre-shared maximally entangled
states \cite{Fan}. Qian and Zhang have constructed some new classes of
maximum distance separable (MDS) linear complementary dual (LCD) codes with
respect to Hermitian inner product and as an application, they have
constructed new families of MDS maximal EAQEC codes in \cite{Qian2}. In \cite%
{Lu1}, Lu et al. constructed six classes of $q$-ary EAQMDS codes based on
classical negacyclic MDS codes. In \cite{Guenda}, Guenda et al. have shown
that the number of shared pairs required to construct an EAQEC code is
related to the hull of the classical codes. Using this fact, they gave
methods to construct EAQEC codes requiring desirable amounts of
entanglement. Further, they constructed maximal entanglement EAQEC codes
from LCD codes.

In this paper, based on a decomposition of the defining set of constacyclic
codes we have obtained eight new families of EAQMDS codes as follows:

\begin{enumerate}
\item $\llbracket n,n-\frac{6}{5}\left( q-7\right) -4\lambda -1,\frac{3}{5}%
\left( q-7\right) +2\lambda +4;5\rrbracket_{q},$ where $n=\frac{q^{2}+1}{10}%
, $ $1\leq \lambda \leq \frac{q+3}{10},$ $q$ is odd and $q\equiv 7\left( mod%
\text{ }10\right) .$

\item $\llbracket n,n-\frac{4}{5}\left( 2q+1\right) -4\lambda +7,\frac{2}{5}%
\left( 2q+1\right) +2\lambda +2;9\rrbracket_{q},$ where $n=\frac{q^{2}+1}{10}%
,$ $1\leq \lambda \leq \frac{q+3}{10},$ $q$ is odd and $q\equiv 7\left( mod%
\text{ }10\right) .$

\item $\llbracket n,n-\frac{6}{5}\left( q-3\right) -4\lambda +3,\frac{3}{5}%
\left( q-3\right) +2\lambda +2;5\rrbracket_{q},$ where $n=\frac{q^{2}+1}{10}%
, $ $1\leq \lambda \leq \frac{q-3}{10},$ $q$ is odd and $q\equiv 3\left( mod%
\text{ }10\right) .$

\item $\llbracket n,n-\frac{8}{5}\left( q-3\right) -4\lambda +7,\frac{4}{5}%
\left( q-3\right) +2\lambda +2;9\rrbracket_{q},$ where $n=\frac{q^{2}+1}{10}%
, $ $1\leq \lambda \leq \frac{q-3}{10},$ $q$ is odd and $q\equiv 3\left( mod%
\text{ }10\right) .$

\item $\llbracket n,n-\frac{6}{5}\left( q-2\right) -4\lambda +4,\frac{3}{5}%
\left( q-2\right) +2\lambda +1;4\rrbracket_{q},$ where $n=\frac{q^{2}+1}{5},$
$1\leq \lambda \leq \frac{q+3}{5},$ $q=2^{e}$ and $q\equiv 2\left( mod\text{
}10\right) .$

\item $\llbracket n,n-\frac{2}{5}\left( 3q-14\right) -4\lambda ,\frac{\left(
3q-14\right) }{5}+2\lambda +3;4\rrbracket_{q},$ where $n=\frac{q^{2}+1}{5},$
$1\leq \lambda \leq \frac{q+2}{5},$ $q=2^{e}$ and $q\equiv 8\left( mod\text{
}10\right) .$

\item $\llbracket n,n-\frac{6}{5}\left( q-2\right) -4\lambda +4,\frac{3}{5}%
\left( q-2\right) +2\lambda +1;4\rrbracket_{q},$ where $n=\frac{q^{2}+1}{13}%
, $ $1\leq \lambda \leq \frac{q+3}{5},$ $q=2^{e}$ and $q\equiv 5\left( mod%
\text{ }13\right) .$

\item $\llbracket n,n-\frac{6}{5}\left( q-4\right) -4\lambda -8,\frac{3}{5}%
\left( q-4\right) +2\lambda +4;4\rrbracket_{q},$ where $n=\frac{q^{2}+1}{17}%
, $ $1\leq \lambda \leq \frac{q+4}{17},$ $q=2^{e}$ and $q\equiv 13\left( mod%
\text{ }17\right) .$
\end{enumerate}

The rest of the paper is organized as follows. In Sect. 2, we review basics
about linear codes and constacyclic codes. In Sect. 3, we review some basics
about EAQEC codes. In Sect. 4 and Sect. 5, we define a decomposition of the
defining set of constacyclic codes and based on this method we construct
eight families of EAQMDS codes. The last section contains some comparative
results and concludes this paper.

\section{Basics about constacyclic codes}

In this section, we review some preliminaries of constacyclic codes. For
further and detailed information, readers may refer to \cite%
{Guardia,Krishna,Kai,Chen,Zhang}.

For given a positive integer $e$ and prime number $p,$ let $q=p^{e}$ and $%
\mathbb{F}_{q^{2}}$ be the finite field of $q^{2}$ elements. The Hermitian
inner product of $u=\left( u_{0},\ldots ,u_{n-1}\right) $ and $v=\left(
v_{0},\ldots ,v_{n-1}\right) \in \mathbb{F}_{q^{2}}^{n}$ is defined to be $%
\left\langle u,v\right\rangle _{H}=\sum\limits_{i=0}^{n-1}u_{i}v_{i}^{q}.$
If $\mathcal{C} $ is a $k$-dimensional subspace of $\mathbb{F}_{q^{2}}^{n},$
then $\mathcal{C}$ is called as a $q^{2}$-ary linear code of length $n$ and
dimension $k$ and denoted by $\left[ n,k\right] _{q^{2}}.$ The weight $%
wt\left( c\right) $ of a codeword $c\in \mathcal{C}$ is defined as the
number of its nonzero coordinates. The minimum nonzero weight $d$ amongst
all codewords of $\mathcal{C}$ is said to be the minimum weight of $\mathcal{%
C}.$ A linear code $\mathcal{C}$ of length $n $ is said to be constacyclic
if for any codeword $\left( c_{0},\ldots ,c_{n-1}\right) \in \mathcal{C}$ we
have that $\left( \alpha c_{n-1},\ldots ,c_{n-2}\right) \in \mathcal{C},$
where $0\neq \alpha \in \mathbb{F}_{q^{2}}.$ It can be seen that $xc\left(
x\right) $ corresponds to a constacyclic shift of $c\left( x\right) $ in the
quotient ring $\mathbb{F}_{q^{2}}\left[ x\right] /\left\langle x^{n}-\alpha
\right\rangle ,$ where $c\left( x\right) =c_{0}+c_{1}x+\ldots
+c_{n-1}x^{n-1}.$ Then, a $q^{2} $-ary constacyclic code $\mathcal{C}$ of
length $n$ is an ideal of $\mathbb{F}_{q^{2}}\left[ x\right] /\left\langle
x^{n}-\alpha \right\rangle $ and $\mathcal{C}$ is generated by a monic
polynomial $g\left( x\right) $ such that $g\left( x\right) |\left(
x^{n}-\alpha \right) .$ If $\gcd \left( q,n\right) =1,$ then $x^{n}-\alpha $
doesn't have multiple roots.

Let $m$ be the multiplicative order of $q^{2}$ in modulo $rn,$ where $r=q+1,$
and suppose that $\delta $ is a primitive $rn^{th}$ root of unity in $%
\mathbb{F}_{q^{2}}^{\ast }$ such that $\delta ^{n}=\alpha .$ Let $\zeta
=\delta ^{r}$, then $\zeta $ is a primitive $n^{th}$ root of unity.
Therefore, the roots of $x^{n}-\alpha $ are $\left\{ \delta ,\delta
^{1+r},\ldots ,\delta ^{1+r(n-1)}\right\} .$ Hence, it follows that $%
x^{n}-\alpha =\prod\limits_{i=0}^{n-1}\left( x-\zeta ^{ri}\right) .$

The $q^{2}$-cyclotomic coset of $i$ modulo $rn$ is defined by
\[
\mathcal{C}_{i}=\left\{ \left. iq^{2j}\left( {mod}\text{ }rn\right)
\right\vert j\in \mathbb{Z}\right\}.
\]

The Hermitian dual of a linear code $\mathcal{C}$ of length $n$ is defined
as $\mathcal{C}^{\bot _{H}}=\left\{ \left. u\in \mathbb{F}%
_{q^{2}}^{n}\right\vert \left\langle u,v\right\rangle _{H}=0\mathcal{\ }%
\text{for all }v\in \mathcal{C}\right\}.$ A $q^{2}$-ary linear code $%
\mathcal{C}$ of length $n$ is called Hermitian self-orthogonal if $\mathcal{C%
}\subseteq \mathcal{C}^{\bot _{H}}.$

Let $\mathcal{O}_{rn}=\left\{ \left. 1+rj\right\vert 0\leq j\leq n-1\right\}
.$ Then, the defining set of a constacyclic code $\mathcal{C=}\left\langle
g\left( x\right) \right\rangle $ of length $n$ is the set $Z=\left\{ \left.
i\in \mathcal{O}_{rn}\right\vert \delta ^{i}\text{ is a root of }g\left(
x\right) \right\} .$ If $\mathcal{C}$ is an $\left[ n,k\right] _{q^{2}}$ $%
\alpha $-constacyclic code with defining set $Z,$ then the Hermitian dual $%
\mathcal{C}^{\bot _{H}}$ of $\mathcal{C}$ is an $\alpha ^{-q}$-constacyclic
code with defining set $Z^{\bot _{H}}=\left\{ \left. z\in \mathcal{O}%
_{rn}\right\vert -qz\text{ }\left( {mod}\text{ }rn\right) \notin Z\right\} .$

As in cyclic codes, there exists BCH bound for $\alpha $-constacyclic (see
\cite{Aydin,Krishna}) as follows.

\begin{proposition}
\label{prop1} \cite{Aydin,Krishna} \emph{(The BCH bound for constacyclic
codes)} Let $\mathcal{C=}\left\langle g\left( x\right) \right\rangle $ be a $%
q^{2}$-ary $\alpha $-constacyclic code of length $n,$ where $\alpha $ is an
primitive $r^{th}$ root of unity. If the polynomial $g\left( x\right)$ has
the elements $\left\{ {\left. \delta ^{{1+}r{j}}\right\vert l\leq j\leq l+d-2%
}\right\}$ as the roots, where $\delta $ is a $rn^{th}$ primitive root of
unity with $\delta ^{n}=\alpha .$ Then, the minimum distance of $\mathcal{C}$
is at least $d$.
\end{proposition}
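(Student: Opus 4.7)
The plan is to mirror the classical BCH-bound argument for cyclic codes, adapted to the shifted root structure $\delta^{1+rj}$ characteristic of constacyclic codes. I would argue by contradiction: suppose there is a nonzero codeword $c(x) = \sum_{i=0}^{n-1} c_i x^i \in \mathcal{C}$ of Hamming weight $w \leq d-1$, with support $\{i_1,\ldots,i_w\} \subseteq \{0,1,\ldots,n-1\}$ and nonzero coefficients $c_{i_1},\ldots,c_{i_w}$. The aim is to derive a contradiction from the vanishing $c(\delta^{1+rj}) = 0$ for all $l \leq j \leq l+d-2$.

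Since $g(x) \mid c(x)$ and each $\delta^{1+rj}$ in the specified range is a root of $g(x)$, evaluating $c$ at these points yields the homogeneous linear system
\[
\sum_{k=1}^{w} c_{i_k}\,\delta^{(1+rj)i_k} \;=\; 0, \qquad j = l, l+1, \ldots, l+d-2,
\]
in the $w$ unknowns $c_{i_1},\ldots,c_{i_w}$. Selecting the first $w$ equations (say $j = l,\ldots,l+w-1$), the coefficient matrix has entries $\delta^{(1+rj)i_k}$. Factoring $\delta^{(1+rl)i_k}$ out of the $k$-th column and setting $\zeta = \delta^r$, the residual matrix has $(j,k)$-entry $\zeta^{(j-l)i_k}$, i.e.\ a Vandermonde matrix in the values $\zeta^{i_1},\ldots,\zeta^{i_w}$.

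The essential point is that $\delta$ is a \emph{primitive} $rn$-th root of unity, so $\zeta = \delta^r$ has order exactly $n$; consequently $\zeta^{i_1},\ldots,\zeta^{i_w}$ are pairwise distinct because $i_k \in \{0,1,\ldots,n-1\}$. The Vandermonde determinant is therefore nonzero, and since the column factors $\delta^{(1+rl)i_k}$ are also nonzero, the full coefficient matrix is invertible. This forces $c_{i_k}=0$ for every $k$, contradicting the choice of $c(x)$. Hence every nonzero codeword has weight at least $d$, establishing the claim. The only step worth double-checking is the primitivity argument together with the column scaling; the rest is the standard cyclic-code BCH manipulation transported through $\delta^n = \alpha$.
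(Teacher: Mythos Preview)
Your argument is correct and is precisely the standard Vandermonde proof of the BCH bound, adapted to the constacyclic root pattern $\delta^{1+rj}$. Note, however, that the paper does not supply its own proof of this proposition: it is quoted as a known result with citations to \cite{Aydin,Krishna}, so there is no in-paper proof to compare against. Your write-up is essentially what one finds in those references, and the only delicate step---that $\zeta=\delta^{r}$ has exact order $n$ so that the $\zeta^{i_k}$ are pairwise distinct---is handled correctly.
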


The following proposition give a criterion to determine whether or not an $%
\alpha $-constacyclic code of length $n$ over $\mathbb{F}_{q^{2}}$ is
Hermitian dual containing (see \cite{Kai1} Lemma 2.2).

\begin{proposition}
\label{prop2} Let $\alpha \in \mathbb{F}_{q^{2}}^{\ast }$ be of order $r. $
If $\mathcal{C}$ is an $\alpha $-constacyclic code of length $n$ over $%
\mathbb{F}_{q^{2}}$ with defining set $Z\subseteq \mathcal{O}_{rn},$ then $%
\mathcal{C} $ contains its Hermitian dual code if and only if $Z\cap \left(
-qZ\right) =\emptyset ,$ where $-qZ=\left\{ \left. -qz\left( {mod}\text{ }%
rn\right) \right\vert z\in Z\right\} .$
\end{proposition}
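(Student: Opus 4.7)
The plan is to translate the set-theoretic containment $\mathcal{C}^{\bot_H}\subseteq \mathcal{C}$ into a statement about defining sets and then apply the explicit formula for the defining set of the Hermitian dual recorded just before the statement. First I would observe that in the situation of interest $\alpha$ has order $r$ with $r\mid q+1$, so $\alpha^{-q}=\alpha$; consequently $\mathcal{C}^{\bot_H}$, which is a priori $\alpha^{-q}$-constacyclic, is in fact $\alpha$-constacyclic, and its defining set $Z^{\bot_H}$ lives in the same index set $\mathcal{O}_{rn}$ as $Z$. This is exactly what makes it meaningful to compare the two codes through their defining sets.

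Next I would recall the basic dictionary between $\alpha$-constacyclic codes and their defining sets: since $\gcd(q,n)=1$, the polynomial $x^n-\alpha$ has no repeated roots, and a constacyclic code $\langle g(x)\rangle$ is completely determined by the set of $rn$-th roots of unity that annihilate its generator. Therefore, for two $\alpha$-constacyclic codes of length $n$ with defining sets $Z_1$ and $Z_2$, inclusion of codes is reversed on defining sets:
\[
\langle g_1(x)\rangle \supseteq \langle g_2(x)\rangle \iff Z_1\subseteq Z_2.
\]
Applied to our situation, $\mathcal{C}^{\bot_H}\subseteq \mathcal{C}$ becomes $Z\subseteq Z^{\bot_H}$.

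Finally, I would plug in the description $Z^{\bot_H}=\{z\in\mathcal{O}_{rn}\mid -qz\pmod{rn}\notin Z\}$ recalled in the preliminaries. The condition $Z\subseteq Z^{\bot_H}$ then reads: for every $z\in Z$, the residue $-qz\pmod{rn}$ does not lie in $Z$. Equivalently, no element of $-qZ$ belongs to $Z$, i.e.\ $Z\cap(-qZ)=\emptyset$. Running the implications in both directions gives the biconditional. The only delicate point in the argument is keeping the conventions on $\alpha$, $-q$, and the ambient index set $\mathcal{O}_{rn}$ consistent; once the identification $\alpha^{-q}=\alpha$ has been justified and the description of $Z^{\bot_H}$ imported, the rest of the proof is essentially a rewriting exercise rather than a substantive obstacle.
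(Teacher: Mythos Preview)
Your argument is correct. Note, however, that the paper does not actually supply its own proof of this proposition: it is quoted from \cite{Kai1}, Lemma~2.2, and stated as a preliminary fact without justification. So there is no ``paper's proof'' to compare against here.

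That said, your proof is sound and is essentially the standard one. The key observations are exactly as you identify: (i) since $r\mid q+1$ (in the paper's setup $r=q+1$), one has $-q\equiv 1\pmod r$ and hence $\alpha^{-q}=\alpha$, so $\mathcal{C}^{\bot_H}$ is again $\alpha$-constacyclic and its defining set sits in the same index set $\mathcal{O}_{rn}$; (ii) containment of $\alpha$-constacyclic codes reverses containment of defining sets; and (iii) the description of $Z^{\bot_H}$ recalled in the preliminaries converts $Z\subseteq Z^{\bot_H}$ into $Z\cap(-qZ)=\emptyset$. Each step is valid, and the equivalence is immediate once these are in place.
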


\section{Basics about entanglement-assisted quantum codes}

In this section, we review some basic notions and results of EAQEC codes.
The following result is about the Singleton bound of classical linear codes.

\begin{proposition}
\label{prop3} \cite{MacWilliams} \emph{(Singleton bound)} If an $\left[ n,k,d%
\right] $ linear code $\mathcal{C}$ over $\mathbb{F}_{q}$ exists, then $%
k\leq n-d+1.$ If $k=n-d+1,$ then $\mathcal{C}$ is called an MDS code.
\end{proposition}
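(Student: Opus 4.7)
The plan is to prove the Singleton bound by puncturing the code on $d-1$ coordinates and arguing that the resulting projection, restricted to $\mathcal{C}$, is injective. Concretely, let $\pi\colon \mathbb{F}_{q}^{n}\to \mathbb{F}_{q}^{n-d+1}$ be the projection onto the first $n-d+1$ coordinates (any fixed choice of $n-d+1$ coordinate positions works equally well). First I would restrict $\pi$ to $\mathcal{C}$ and study $\ker(\pi|_{\mathcal{C}})$.

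The key step is the injectivity argument. Suppose $c,c'\in \mathcal{C}$ with $\pi(c)=\pi(c')$; then $c-c'\in \mathcal{C}$ by linearity and $c-c'$ is supported inside the $d-1$ punctured coordinates, so its Hamming weight satisfies $\mathrm{wt}(c-c')\leq d-1$. By the definition of the minimum distance $d$ of $\mathcal{C}$, the only codeword of weight strictly less than $d$ is the zero codeword, whence $c=c'$. Therefore $\pi|_{\mathcal{C}}$ is an $\mathbb{F}_{q}$-linear injection from $\mathcal{C}$ into $\mathbb{F}_{q}^{n-d+1}$.

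From the injection I would conclude $|\mathcal{C}|\leq |\mathbb{F}_{q}^{n-d+1}|$, i.e.\ $q^{k}\leq q^{n-d+1}$, and hence $k\leq n-d+1$. The MDS case $k=n-d+1$ is then just the equality case of this inequality, which matches the definition given in the proposition.

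There is essentially no obstacle here: the bound is the classical Singleton bound and the proof is a one-line puncturing argument, which is why the paper simply cites \cite{MacWilliams}. The only point that deserves a sentence of care is that the argument is purely combinatorial and does \emph{not} use linearity of $\mathcal{C}$ in an essential way beyond the subtraction step; the same proof in fact bounds $|\mathcal{C}|\leq q^{n-d+1}$ for any (not necessarily linear) code of minimum distance $d$, and then linearity converts the size bound into the dimension bound $k\leq n-d+1$.
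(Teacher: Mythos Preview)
Your proof is correct and is precisely the classical puncturing argument for the Singleton bound found in \cite{MacWilliams}; the paper itself does not supply a proof but merely cites that reference, so there is nothing further to compare.
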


Let $H$ be an $\left( n-k\right) \times n$ parity check matrix of $\mathcal{C%
}$ over $\mathbb{F}_{q^{2}}.$ Then, $\mathcal{C}^{\bot _{H}}$ has an $%
n\times \left( n-k\right) $ generator matrix $H^{\ast },$ where $H^{\ast }$
is the conjugate transpose matrix of $H$ over $\mathbb{F}_{q^{2}}.$

The following is called the Hermitian method and it enable us to construct
EAQEC codes from classical linear codes.

\begin{theorem}
\label{th1} \cite{Lu} If $\mathcal{C}$ is a classical code and $H$ is its
parity check matrix over $\mathbb{F}_{q^{2}},$ then there exists EAQEC codes
with parameters $\llbracket  n,2k-n+c,d;c \rrbracket  _{q},$ where $%
c=rank\left( HH^{\ast }\right) .$
\end{theorem}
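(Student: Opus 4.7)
The plan is to realize the classical code as the commuting part of an entanglement-assisted stabilizer and to show that the number of pre-shared ebits needed equals the symplectic defect measured by $HH^{*}$.

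First, I would fix an $\mathbb{F}_{q}$-basis $\{1,\beta\}$ of $\mathbb{F}_{q^{2}}$ and use the standard identification $v=a+\beta b\mapsto (a,b)$ to map $\mathbb{F}_{q^{2}}^{n}$ into the phase space $\mathbb{F}_{q}^{2n}$ of $n$ qudits of dimension $q$. Each row $\mathbf{h}_{i}$ of $H$ together with $\beta\mathbf{h}_{i}$ yields two vectors in $\mathbb{F}_{q}^{2n}$, and the $\mathbb{F}_{q}$-span of the resulting $2(n-k)$ vectors coincides with the image of the row space of $H$. A short trace computation shows that the symplectic pairing between the vectors coming from $\mathbf{h}_{i}$ and $\mathbf{h}_{j}$ equals $\operatorname{tr}_{q^{2}/q}(\mathbf{h}_{i}\mathbf{h}_{j}^{*})$, so the whole commutation table among the candidate Pauli generators is controlled by the matrix $HH^{*}$.

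Second, I would perform a symplectic Gram--Schmidt over $\mathbb{F}_{q}$ on those $2(n-k)$ generators, decomposing them into an isotropic subspace together with $c$ hyperbolic pairs, and check via a linear-algebra argument that the number of hyperbolic pairs matches $\operatorname{rank}_{\mathbb{F}_{q^{2}}}(HH^{*})$. For each hyperbolic pair I would introduce a fresh maximally entangled state $|\Phi^{+}\rangle$ shared with the receiver and graft Bob's half onto one member of the pair, thereby upgrading the candidate set to a genuine abelian stabilizer $S_{\mathrm{ea}}$ on the joint $(n+c)$-qudit system.

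Third, I would count dimensions: $S_{\mathrm{ea}}$ imposes $2(n-k)$ independent $\mathbb{F}_{q}$-constraints on $n+c$ physical qudits, so the code space has dimension $q^{(n+c)-2(n-k)}=q^{\,2k-n+c}$, as required. The minimum-distance claim is then immediate, since any undetectable error on Alice's side must correspond to an element of $\mathcal{C}\setminus \mathcal{C}^{\bot_{H}}$ and so has Hamming weight at least $d$. The main obstacle will be the bookkeeping in step two, specifically verifying that the $\mathbb{F}_{q}$-symplectic rank of the commutation form on the $2(n-k)$ generators equals $2\operatorname{rank}_{\mathbb{F}_{q^{2}}}(HH^{*})$ so that precisely $c=\operatorname{rank}(HH^{*})$ ebits suffice; everything else reduces to standard linear algebra once this Hermitian-to-symplectic translation is carried out carefully.
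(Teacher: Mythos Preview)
The paper does not prove Theorem~\ref{th1} at all: it is quoted verbatim as a known result from \cite{Lu} (and ultimately from the Brun--Devetak--Hsieh formalism), so there is no in-paper argument to compare against. Your outline is the standard route to this statement---identify $\mathbb{F}_{q^{2}}^{n}$ with the symplectic space $\mathbb{F}_{q}^{2n}$, observe that commutation of the associated Pauli generators is governed by the Hermitian Gram matrix $HH^{\ast}$, run symplectic Gram--Schmidt to peel off $c$ hyperbolic pairs, and attach one ebit per pair---and it is essentially correct.

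Two small points worth tightening. First, in step two you should be explicit that the $2(n-k)\times 2(n-k)$ symplectic Gram matrix over $\mathbb{F}_{q}$ is, after a suitable choice of $\beta$, a block matrix built from $HH^{\ast}$ and its conjugate, so that its $\mathbb{F}_{q}$-rank is exactly $2\,\mathrm{rank}_{\mathbb{F}_{q^{2}}}(HH^{\ast})$; this is the key bookkeeping you flagged, and it follows because a Hermitian matrix of rank $c$ over $\mathbb{F}_{q^{2}}$ yields an alternating form of rank $2c$ over $\mathbb{F}_{q}$ under the trace map. Second, the distance argument needs a word more care in the entanglement-assisted setting: the logical operators are those Paulis on Alice's $n$ qudits that commute with every generator restricted to Alice's side, modulo the isotropic part, and this set is precisely the image of $\mathcal{C}$, so the minimum weight is indeed $d(\mathcal{C})$. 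With those two clarifications your sketch is complete.
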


\begin{proposition}
\label{prop4} \cite{Brun,Grassl} Assume that $\mathcal{C}$ is an EAQEC code
with parameters $\llbracket  n,k,d;c\rrbracket  _{q}$, if $d\leq (n+2)/2,$
then $\mathcal{C} $ satisfies the entanglement-assisted Singleton bound $%
n+c-k\geq 2(d-1).$ If $\mathcal{C}$ satisfies the equality $n+c-k=2(d-1)$
for $d\leq (n+2)/2,$ then it is called an EAQMDS code.
\end{proposition}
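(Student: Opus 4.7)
The plan is to reduce the entanglement-assisted Singleton bound to the ordinary quantum Singleton bound via the extended-code viewpoint. First, I would recall the stabilizer description of EAQEC codes from \cite{Brun}: an $\llbracket n,k,d;c\rrbracket_q$ code corresponds to a subgroup of the Pauli group on $n+c$ qudits, where the sender holds $n$ qudits (including $c$ halves of pre-shared maximally entangled pairs) and the receiver keeps the remaining $c$ halves noise-free.

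Second, I would verify that this joint $(n+c)$-qudit system carries the structure of an ordinary stabilizer code with effective parameters $\llbracket n+c,k,d\rrbracket_q$. The delicate point, and the main obstacle in making the argument rigorous, is showing that the minimum distance carries over: although errors act only on the sender's $n$ qudits, the noiselessness of the receiver's $c$ qudits makes the Knill--Laflamme conditions for the extended stabilizer code equivalent to the entanglement-assisted error-correction conditions on the original code. This equivalence, proved in \cite{Brun,Grassl}, is the heart of the bound and the step where one must be careful not to double-count or misassign weights between the two halves of the ebits.

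With the lift in hand, the third step is purely algebraic: apply the standard quantum Singleton bound $K\leq N-2D+2$ to the extended $\llbracket n+c,k,d\rrbracket_q$ code to obtain
\[
k \leq (n+c) - 2d + 2,
\]
which rearranges to the desired inequality $n+c-k\geq 2(d-1)$.

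The hypothesis $d\leq (n+2)/2$ delineates exactly the regime in which the quantum Singleton bound is unobstructed by no-cloning, so the rearrangement above is meaningful; in that range, codes attaining equality are by definition the EAQMDS codes named in the second sentence of the statement, which requires no further argument. The only technical content is therefore concentrated in the passage from the EA stabilizer formalism to the extended code, after which the bound follows by a one-line manipulation.
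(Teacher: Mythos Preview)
The paper does not supply a proof of this proposition: it is quoted verbatim as a known result with citations to \cite{Brun,Grassl}, so there is no argument in the text to compare your sketch against. Your outline is therefore not in competition with anything the author wrote; it is simply additional commentary on a borrowed statement.

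As for the sketch itself, your strategy of passing to the enlarged $(n+c)$-qudit system and invoking the ordinary quantum Singleton bound is the standard heuristic, but the step you flag as ``delicate'' is also where a genuine gap sits. You assert that the extended system carries an ordinary stabilizer code with parameters $\llbracket n+c,k,d\rrbracket_q$, yet the minimum distance of that extended code is measured over \emph{all} $n+c$ qudits, whereas the distance $d$ of the EAQEC code refers only to errors supported on the sender's $n$ qudits. There is no a priori reason the two coincide: low-weight logical operators touching the receiver's halves could in principle drop the extended distance below $d$, which would break the inequality you derive. You acknowledge this and then defer the resolution to \cite{Brun,Grassl}---but that is exactly the source being cited for the proposition itself, so the argument becomes circular rather than self-contained. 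If you want an independent proof, you must either establish the distance-preservation claim directly (e.g.\ by analysing the normalizer structure and showing that any logical operator can be taken to act trivially on the receiver's ebit halves) or use an alternative route such as reducing to a classical code over $\mathbb{F}_{q^2}$ and applying the classical Singleton bound.
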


A definition for decomposition of the defining set of cyclic codes was given
in \cite{Lu}. In the following, we give a decomposition of the defining set
of constacyclic codes, which is the same as negacyclic case defined by Chen
et al. in \cite{ChenJ}.

\begin{definition}
Let $\alpha\in \mathbb{F}_{q^{2}}^{*}$ be a primitive $r^{th}$ root of unity
and $\mathcal{C}$ be an $\alpha$-constacyclic code of length $n$ with
defining set $Z.$ Assume that $Z_{1}=Z\cap (-qZ)$ and $Z_{2}=Z\backslash
Z_{1},$ where $-qZ=\left\{ {rn-qx|x\in Z}\right\}$ $r$ is a factor of $q+1.$
Then, $Z=Z_{1}\cup Z_{2}$ is called a decomposition of the defining set of $%
\mathcal{C}$.
\end{definition}

In \cite{ChenJ}, Chen et al. showed that the number of entangled states
required for negacyclic codes is $c=|Z_{1}|,$ which is the same for
constacyclic codes.

\begin{lemma}
\label{lm1} Let $\mathcal{C}$ be an $\alpha$-constacyclic code of length $n$
over $\mathbb{F}_{q^{2}},$ where $gcd(n,q)=1.$ Suppose that $Z$ is the
defining set of the $\alpha$-constacyclic code $\mathcal{C}$ and $%
Z=Z_{1}\cup Z_{2}$ is a decomposition of $Z.$ Then, the number of entangled
states required is $c=|Z_{1}|.$
\end{lemma}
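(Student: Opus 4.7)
The plan is to invoke Theorem \ref{th1}, which identifies $c$ with $\mathrm{rank}(HH^{*})$, and then to compute $HH^{*}$ explicitly from the BCH-style parity-check matrix of $\mathcal{C}$. First I would fix a parity-check matrix $H$ whose rows are indexed by the elements $z\in Z$, with the row corresponding to $z$ equal to $(1,\delta^{z},\delta^{2z},\ldots,\delta^{(n-1)z})$, where $\delta$ is the primitive $rn$-th root of unity with $\delta^{n}=\alpha$ fixed earlier in the paper. This is a legitimate parity-check matrix because its rows evaluate codewords at precisely the roots $\delta^{z}$ for $z\in Z$ of the generator polynomial $g(x)$.

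Next I would compute the $(z,z')$-entry of $HH^{*}$. Since conjugation in $\mathbb{F}_{q^{2}}$ is the Frobenius $x\mapsto x^{q}$, the $(k,z')$-entry of $H^{*}$ is $\delta^{qz'k}$, and hence
\[
(HH^{*})_{z,z'}=\sum_{k=0}^{n-1}\delta^{(z+qz')k}.
\]
Because $\delta$ has order $rn$, the geometric-sum identity shows that this sum equals $n$ when $z+qz'\equiv 0\pmod{rn}$, and equals $0$ otherwise: indeed if $z+qz'\not\equiv 0\pmod{rn}$ then $\delta^{z+qz'}\neq 1$, while $\delta^{n(z+qz')}=\alpha^{z+qz'}$, and one checks using $r\mid q+1$ and $z,z'\in\mathcal{O}_{rn}$ that $r\mid z+qz'$, forcing $\delta^{n(z+qz')}=1$ and so the numerator in the geometric-sum formula vanishes. (This parity observation is the one slightly fiddly ingredient; the rest is routine.)

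With that, the $(z,z')$-entry of $HH^{*}$ is nonzero exactly when $z'\equiv -q^{-1}z\pmod{rn}$. Note that $\gcd(q,rn)=1$ since $\gcd(q,n)=1$ by hypothesis and $\gcd(q,r)=1$ follows from $r\mid q+1$, so $-q^{-1}$ makes sense modulo $rn$. Consequently each row of $HH^{*}$ has at most one nonzero entry, equal to $n$. The row indexed by $z$ is nonzero iff there exists $z'\in Z$ with $z=-qz'\pmod{rn}$, i.e.\ iff $z\in Z\cap(-qZ)=Z_{1}$. The map $z\mapsto -q^{-1}z\bmod rn$ is a bijection on $\mathcal{O}_{rn}$, so the nonzero rows place their unique nonzero entries in distinct columns; therefore they are linearly independent.

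Combining these observations gives $\mathrm{rank}(HH^{*})=|Z_{1}|$, and Theorem \ref{th1} then yields $c=|Z_{1}|$. The main (and essentially only) obstacle is the modular bookkeeping to verify that the sum $\sum_{k}\delta^{(z+qz')k}$ truly vanishes whenever $z+qz'\not\equiv 0\pmod{rn}$, which requires using both the relation $r\mid q+1$ and the fact that the elements of $Z$ lie in the coset $\mathcal{O}_{rn}=\{1+rj:0\leq j\leq n-1\}$.
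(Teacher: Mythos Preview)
The paper does not actually prove this lemma: it simply remarks that Chen et al.\ \cite{ChenJ} established the negacyclic case and asserts that the constacyclic case is the same, then states Lemma~\ref{lm1} without argument. Your write-up is precisely the standard rank computation behind that citation, so in substance you are supplying the proof the paper omits rather than taking a different route.

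One technicality you should tidy up. The matrix $H$ with rows $(1,\delta^{z},\ldots,\delta^{(n-1)z})$ has entries in the splitting field $\mathbb{F}_{q^{2m}}$ of $x^{n}-\alpha$, not in $\mathbb{F}_{q^{2}}$, so it is not literally the parity-check matrix over $\mathbb{F}_{q^{2}}$ to which Theorem~\ref{th1} refers. The repair is standard: because $Z$ is a union of $q^{2}$-cyclotomic cosets, applying the Frobenius $x\mapsto x^{q^{2}}$ entrywise to the row indexed by $z$ yields the row indexed by $q^{2}z\in Z$, so the row space of $H$ is Galois-stable and hence has a basis with entries in $\mathbb{F}_{q^{2}}$. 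Concretely, there is an invertible matrix $P$ over $\mathbb{F}_{q^{2m}}$ with $H':=PH$ lying over $\mathbb{F}_{q^{2}}$ and serving as a genuine parity-check matrix of $\mathcal{C}$. Then $H'(H')^{*}=P\,HH^{*}\,P^{*}$, and $P^{*}$ is invertible since $\det(P^{*})=(\det P)^{q}\neq 0$; as rank is unchanged by invertible left/right multiplication and by extending the ground field, $\mathrm{rank}\bigl(H'(H')^{*}\bigr)=\mathrm{rank}(HH^{*})$. With this adjustment your geometric-sum computation and the bijection $z\mapsto -q^{-1}z$ on $\mathcal{O}_{rn}$ give $\mathrm{rank}(HH^{*})=|Z_{1}|$ exactly as you wrote, and Theorem~\ref{th1} finishes the job.
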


\section{Construction of EAQMDS codes from constacyclic codes ($q$ is odd)}

Throughout this section, $q$ is an odd prime, $r=q+1$ and $s=\frac{q^{2}+1}{2%
}.$ The multiplicative order of $q$ modulo $n$ is denoted by $ord_{n}\left(
q\right) .$ Let $\alpha \in \mathbb{F}_{q^{2}}^{\ast }$ be a primitive $%
r^{th}$ root of unity. Here, we need to emphasize that the codes we give in
this section are different from the codes given in \cite{Lu2}, because they
obtained some parameters with the number of entangled states $c=1.$

\subsection{EAQMDS codes of length $n=\frac{q^{2}+1}{10},$ where $q\equiv
7\left( mod\text{ }10\right) $}

Note that $n=\frac{q^{2}+1}{10},$ and so $ord_{rn}\left( q^{2}\right) =2.$
This means that each $q^{2}$-cyclotomic coset modulo $rn$ includes one or
two elements. Let $q\equiv 7\left( mod\text{ }10\right) $ and $s=\frac{%
q^{2}+1}{2}.$ It can be easily seen that the $q^{2}$-cyclotomic cosets
modulo $rn$ containing some integers from $1$ to $rn$ are $C_{s}=\left\{
s\right\} $ and $C_{s-rj}=\left\{ s-rj,s+rj\right\} ,$ where $1\leq j\leq
\frac{q+1}{2}.$

\begin{lemma}
\label{lm5*} Let $q\equiv 7\left( mod\text{ }10\right) .$ If $\mathcal{C}$
is a $q^{2}$-ary constacyclic code of length $n$ and defining set $Z=\cup
_{j=1}^{\lambda }C_{s-rj},$ where $1\leq \lambda \leq \frac{3\left(
q-7\right) }{10}+1,$ then $\mathcal{C}^{\perp _{H}}\subseteq \mathcal{C}.$
\end{lemma}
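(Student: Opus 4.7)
The plan is to invoke Proposition~\ref{prop2} and show directly that $Z\cap(-qZ)=\emptyset$. The key first observation is that $s=(q^{2}+1)/2=5n$, from which
\begin{equation*}
-qs \equiv -5qn+5(q+1)n = 5n = s \pmod{rn},
\end{equation*}
so that $-q\,C_{s-rj}\equiv\{s+qrj,\,s-qrj\}\pmod{rn}$ for every $j$. A putative element of $Z\cap(-qZ)$ would then force a congruence $s\pm rj_{1}\equiv s\pm qrj_{2}\pmod{rn}$ for some $1\le j_{1},j_{2}\le\lambda$. Since $\gcd(r,n)=1$ (a brief Euclidean reduction, using $\gcd(q+1,q^{2}+1)=2$ together with the fact that $n=(q^{2}+1)/10$ is odd when $q\equiv 7\pmod{10}$), this collapses to the single statement
\begin{equation*}
qj_{2}\not\equiv \pm j_{1}\pmod{n}\qquad\text{for all } 1\le j_{1},j_{2}\le\lambda.
\end{equation*}

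Next, I would dispose of these non-congruences by a bounded-range argument. Since $\lambda<q$ and $\lambda\le\frac{3(q-7)}{10}+1$, the quantity $qj_{2}\pm j_{1}$ is positive and strictly smaller than $3n$; any failure must therefore take the form $qj_{2}\pm j_{1}=kn$ with $k\in\{1,2\}$. Writing $q=10m+7$ (so the bound reads $\lambda\le 3m+1$), each of the four resulting equations, reduced modulo $q$, constrains $j_{1}$ by means of the inverses $10^{-1}\equiv 7m+5\pmod{q}$ and $5^{-1}\equiv 4m+3\pmod{q}$. A direct computation shows that in every case the smallest positive representative of the forced residue class of $j_{1}$ is at least $3m+2$, which contradicts $j_{1}\le\lambda$.

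The main obstacle is this final case analysis: the bound $\lambda\le\frac{3(q-7)}{10}+1$ is sharp, so each of the four sign-and-$k$ combinations must be checked individually to confirm that the forced value of $j_{1}$ lies strictly outside $[1,\lambda]$, with small cases such as $q=7$ (where $\lambda=1$) requiring separate verification. Once this bookkeeping is carried out, Proposition~\ref{prop2} delivers $\mathcal{C}^{\perp_{H}}\subseteq\mathcal{C}$ as required.
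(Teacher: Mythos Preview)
Your proposal is correct and follows essentially the same approach as the paper: both invoke Proposition~\ref{prop2}, reduce a putative collision in $Z\cap(-qZ)$ to a congruence $j\pm qk\equiv 0\pmod{n}$ (using $s=5n$), bound the range so that $j\pm qk\in\{\pm n,\pm 2n\}$, and then check that each case forces $j$ outside $[1,\lambda]$. Your parametrization $q=10m+7$ and the explicit inverses $10^{-1},5^{-1}$ modulo $q$ are cosmetic variants of the paper's division-algorithm bookkeeping; note also that the $\gcd(r,n)=1$ verification is harmless but unnecessary, since $ra\equiv rb\pmod{rn}$ always reduces to $a\equiv b\pmod{n}$.
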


\begin{proof}
From Proposition \ref{prop2}, it is sufficient to prove that $Z\cap \left(
-qZ\right) =\emptyset .$ Assume that $Z\cap \left( -qZ\right) \neq \emptyset
.$ Then, there exists two integers $j,k,$ where $1\leq j,k\leq \frac{3\left(
q-7\right) }{10}+1,$ such that $s-rj\equiv -q\left( s-rk\right) \left( mod%
\text{ }rn\right) $ or $s-rj\equiv -q\left( s+rk\right) $ $\left( mod\text{ }%
rn\right) .$

\textbf{Case 1:} Let $s-rj\equiv -q\left( s-rk\right) \left( mod\text{ }%
rn\right) .$ This is equivalent to $s\equiv j+qk\left( mod\text{ }n\right) .$
As $s\equiv 0\left( mod\text{ }n\right) ,$ we get $j+qk\equiv 0\left( mod%
\text{ }n\right) .$ Since $1\leq j,k\leq \frac{3\left( q-7\right) }{10}+1,$ $%
q+1\leq j+qk\leq \frac{3\left( q-7\right) }{10}+1+q\left( \frac{3\left(
q-7\right) }{10}+1\right) <3n.$ Then, we have that $j+qk\equiv n\left( mod%
\text{ }n\right) $ or $j+qk\equiv 2n\left( mod\text{ }n\right) .$ If $%
j+qk=n, $ then $j+qk=\frac{q^{2}+1}{10}=q\frac{\left( q-7\right) }{10}+\frac{%
7q+1}{10}.$ By division algorithm, $j=\frac{7q+1}{10}.$ This is a
contradiction, because $0\leq j\leq \frac{3\left( q-7\right) }{10}.$ If $%
j+qk=2n,$ then $j+qk=\frac{2\left( q^{2}+1\right) }{10}=q\frac{2\left(
q-7\right) }{10}+\frac{2\left( 7q+1\right) }{10}.$ From division algorithm, $%
j=\frac{2\left( 7q+1\right) }{10}.$ This contradicts with the fact $0\leq
j\leq \frac{3\left( q-7\right) }{10}.$

\textbf{Case 2:} Let $s-rj\equiv -q\left( s+rk\right) \left( mod\text{ }%
rn\right) .$ This is equivalent to $s\equiv j-qk\left( mod\text{ }n\right) .$
Since $s\equiv 0\left( mod\text{ }n\right) ,$ we have $j-qk\equiv 0\left( mod%
\text{ }n\right) .$ Also we have $1\leq j,k\leq \frac{3\left( q-7\right) }{10%
}.$ This results in $-3n<1-3q\left( \frac{3\left( q-7\right) }{10}+1\right)
\leq j-qk\leq 1-q<0.$ Thus, the solution is $j-qk\equiv -2n\left( mod\text{ }%
n\right) $ or $j-qk\equiv -n\left( mod\text{ }n\right) .$ If $j-qk=-2n,$
then $j-qk=\frac{-2\left( q^{2}+1\right) }{10}=-q\frac{2\left( q-7\right) }{%
10}+\frac{2\left( 7q-1\right) }{10}.$ From division algorithm, $j=\frac{%
2\left( 7q-1\right) }{10}.$ This is a contradiction, because $1\leq j\leq
\frac{3\left( q-7\right) }{10}+1.$ If $j-qk=-n,$ then $j-qk=\frac{-\left(
q^{2}+1\right) }{10}=-q\frac{\left( q-7\right) }{10}-\frac{\left(
7q+1\right) }{10}.$ By division algorithm, $j=\frac{-\left( 7q+1\right) }{10}%
\equiv \frac{q\left( q-7\right) }{10}\left( mod\text{ }n\right) .$ This is a
contradiction, because $0\leq j\leq \frac{3\left( q-7\right) }{10}.$
\end{proof}

\begin{lemma}
\label{lm6*} Let $q\equiv 7\left( mod\text{ }10\right) .$ If $\mathcal{C}$
is a $q^{2}$-ary constacyclic code of length $n$ and defining set $\overline{%
Z}=\cup _{j=\frac{3\left( q-7\right) }{10}}^{t}C_{s-rj},$ where $1\leq t\leq
\frac{\left( q-7\right) }{10},$ then $\mathcal{C}^{\perp _{H}}\subseteq
\mathcal{C}.$
\end{lemma}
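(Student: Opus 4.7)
The plan is to follow the same strategy as in Lemma \ref{lm5*} and invoke Proposition \ref{prop2}, thereby reducing the assertion $\mathcal{C}^{\perp _{H}}\subseteq \mathcal{C}$ to verifying that $\overline{Z}\cap(-q\overline{Z})=\emptyset$. Since every coset has the shape $C_{s-rj}=\{s-rj,s+rj\}$ and $s=\frac{q^{2}+1}{2}=5n$, any putative element of $\overline{Z}\cap(-q\overline{Z})$ produces indices $j,k$ in the admissible range of $\overline{Z}$ satisfying either $s-rj\equiv -q(s-rk)\pmod{rn}$ or $s-rj\equiv -q(s+rk)\pmod{rn}$. Using $s\equiv 0\pmod n$ and dividing out by $r=q+1$, these reduce respectively to the two scalar congruences $j+qk\equiv 0\pmod n$ and $j-qk\equiv 0\pmod n$, exactly as in Lemma \ref{lm5*}.

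The bulk of the proof then consists in ruling out each congruence under the new index bounds. I would first bound $j+qk$: since both indices now lie in the shifted window near $\frac{3(q-7)}{10}$, the product $qk$ is of order $\frac{3q(q-7)}{10}$, so $j+qk$ lies in a narrow band that meets at most two multiples of $n=\frac{q^{2}+1}{10}$, say $mn$ for $m\in\{2,3\}$. For each such candidate, writing $mn=q\lfloor mn/q\rfloor+(mn\bmod q)$ via the division algorithm pins down $j\bmod q$ to a specific residue, and I would verify that this residue falls outside the allowed interval for $j$, producing the contradiction. The mirror argument handles $j-qk\equiv 0\pmod n$: here $j-qk$ is negative and of controlled magnitude, restricting $m$ to a pair of negative values, and each is discarded by the same division-algorithm computation.

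The main obstacle is not any single inequality but the arithmetic bookkeeping: the shift of the index range by $\frac{3(q-7)}{10}$ replaces the remainders $\frac{7q\pm 1}{10}$ and $\frac{2(7q\pm 1)}{10}$ that appeared in Lemma \ref{lm5*} with different expressions that must be matched against the new interval boundaries. Particular care will be needed at the boundary cases where $j$ or $k$ attains the smallest or largest admissible value, because there the candidate residue may come close to the interval boundary and the contradiction must rely on the exact congruence class of $q$ modulo $10$, together with the constraint $1\leq t\leq \frac{q-7}{10}$ that prevents the window from overflowing. Once these residue computations are tabulated, the non-intersection is immediate and Proposition \ref{prop2} finishes the proof.
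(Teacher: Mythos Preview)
Your proposal is correct and follows essentially the same approach as the paper: the paper's own proof of this lemma consists of the single sentence ``The proof for this lemma is very similar to the proof of Lemma \ref{lm5*},'' so your plan to reduce via Proposition \ref{prop2} to the two congruences $j+qk\equiv 0\pmod n$ and $j-qk\equiv 0\pmod n$ and eliminate each by bounding and the division algorithm is exactly what is intended.
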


\begin{proof}
The proof for this lemma is very similar to the proof of Lemma \ref{lm5*}.
\end{proof}

As an immediate result of Lemma \ref{lm6*} we have $\overline{Z}\cap \left(
-q\overline{Z}\right) =\emptyset .$

\begin{lemma}
\label{lm7*} Let $q\equiv 7\left( mod\text{ }10\right) $ and $s=\frac{q^{2}+1%
}{2}.$ Then we have the following:

\begin{enumerate}
\item $-qC_{s}=C_{s}=\left\{ s\right\} ,$

\item $-qC_{s-r\frac{\left( q+3\right) }{10}}=C_{s-r\left( \frac{3q-1}{10}%
\right) }=\left\{ s-r\left( \frac{3q-1}{10}\right) ,s+r\left( \frac{3q-1}{10}%
\right) \right\} ,$

\item $-qC_{s-r\left( \frac{2q-4}{10}\right) }=C_{s-r\left( \frac{2q+1}{5}%
\right) }=\left\{ s-r\frac{\left( 2q+1\right) }{5},s+r\frac{\left(
2q+1\right) }{5}\right\} .$
\end{enumerate}
\end{lemma}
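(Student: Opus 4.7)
The plan is to reduce all three identities to the single fundamental congruence
\[
-qs \equiv s \pmod{rn},
\]
which follows at once from $s=(q^{2}+1)/2=5n$ and $r=q+1$: indeed $-qs-s=-(q+1)s=-rs=-5rn$. Squaring this identity gives $q^{2}s\equiv s\pmod{rn}$, so $C_{s}=\{s\}$, and applying $-q$ once more yields $-qC_{s}=\{-qs\}=\{s\}=C_{s}$, which is exactly part~(1).

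For parts~(2) and~(3) I would use the same identity to rewrite
\[
-q(s-rj)\equiv s+qrj \pmod{rn},
\]
so that identifying $-qC_{s-rj}$ with some $C_{s-rk}$ collapses to deciding whether $s+qrj\equiv s-rk$ or $s+qrj\equiv s+rk\pmod{rn}$, i.e.\ whether $qj+k\equiv 0$ or $qj-k\equiv 0\pmod{n}$. Each claim then reduces to a one-line polynomial identity in $q$. For part~(2), with $j=(q+3)/10$ and $k=(3q-1)/10$, a direct computation gives $qj-k=(q^{2}+3q-3q+1)/10=(q^{2}+1)/10=n$, hence $-q(s-rj)\equiv s+rk\pmod{rn}$, which lies in $C_{s-rk}$. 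For part~(3), with $j=(2q-4)/10=(q-2)/5$ and $k=(2q+1)/5$, the computation gives $qj+k=(2q^{2}-4q+4q+2)/10=(q^{2}+1)/5=2n$, hence $-q(s-rj)\equiv s-rk\pmod{rn}$.

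To finish, I would justify that each stated coset $C_{s-rk}$ really is the two-element set $\{s-rk,s+rk\}$: applying the fundamental identity again, $q^{2}(s-rk)\equiv s-q^{2}rk\pmod{rn}$, and since $(q^{2}+1)k=10nk\equiv 0\pmod{n}$, this equals $s+rk\pmod{rn}$. As $0<rk<rn$ for the values of $k$ in question, the elements $s\pm rk$ are distinct modulo $rn$, and since $\mathrm{ord}_{rn}(q^{2})=2$ each $q^{2}$-coset has at most two elements, so the coset is exactly $\{s-rk,s+rk\}$. The arithmetic throughout is routine; the only substantive step is the observation $-qs\equiv s\pmod{rn}$, after which everything collapses to checking that certain small linear expressions in $q$ equal $n$ or $2n$. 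The hypothesis $q\equiv 7\pmod{10}$ plays its role only in ensuring that the fractions $(q+3)/10$, $(3q-1)/10$, $(q-2)/5$ and $(2q+1)/5$ are integers, so that the cosets written in the statement make sense.
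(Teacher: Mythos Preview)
Your argument is correct and follows essentially the same route as the paper: both proofs rest on the single identity $-qs\equiv s\pmod{rn}$ (derived from $s=5n$) and then reduce each part to a one-line expansion of $qj$ in terms of $n$, your version simply packaging this as the condition $qj\pm k\equiv 0\pmod n$. The only small imprecision is in your last paragraph: the inequality $0<rk<rn$ by itself does not force $s-rk\not\equiv s+rk\pmod{rn}$ --- what you actually need is $n\nmid 2k$, which holds here because $n=(q^{2}+1)/10$ is odd whenever $q\equiv 7\pmod{10}$.
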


\begin{proof}
\begin{enumerate}
\item Since $s=\frac{q^{2}+1}{2}=5n,$ we have $%
-qs=-5qn=-5qn-5n+5n=-(q+1)5n+5n\equiv 5n\left( mod\text{ }rn\right) .$ This
implies that $-qC_{s}=C_{s}.$

\item Observe that $-q\left( s-r\frac{\left( q+3\right) }{10}\right) =-qs+qr%
\frac{\left( q+3\right) }{10}\equiv 5n+qr\frac{\left( q+3\right) }{10}\left(
mod\text{ }rn\right) .$ We conclude that $5n+r\left( \frac{q^{2}+1}{10}+%
\frac{3q-1}{10}\right) \equiv s+r\frac{\left( 3q-1\right) }{10}\left( mod%
\text{ }rn\right) .$

\item It is enough to show that $-q\left( s-r\left( \frac{2q-4}{10}\right)
\right) \equiv s-r\frac{\left( 2q+1\right) }{5}\left( mod\text{ }rn\right) .$
It follows that $-q\left( s-r\left( \frac{2q-4}{10}\right) \right) \equiv
-qs+rq\left( \frac{2q-4}{10}\right) \equiv -qs+r\left( \frac{2\left(
q^{2}+1\right) }{10}-\frac{4q+2}{10}\right) $ $\left( mod\text{ }rn\right) ,$
and so we have $-q\left( s-r\left( \frac{2q-4}{10}\right) \right) \equiv s-r%
\frac{\left( 2q+1\right) }{5}\left( mod\text{ }rn\right) .$
\end{enumerate}
\end{proof}

In Theorem \ref{th4*}, we give a class of EAQMDS codes of length $n=\frac{%
q^{2}+1}{10}$ and with entangled states $c=5.$

\begin{theorem}
\label{th4*} Let $q\equiv 3\left( mod\text{ }10\right) .$ If $\mathcal{C}$
is an $q^{2}$-ary $\alpha $-constacyclic code of length $n$ with defining
set $Z=\cup _{j=0}^{\frac{3\left( q-7\right) }{10}+1+\lambda }C_{s-rj},$
then there exists EAQMDS codes with parameters
\[
\llbracket n,n-\frac{6}{5}\left( q-7\right) -4\lambda -1,\frac{3}{5}\left(
q-7\right) +2\lambda +4;5\rrbracket_{q},
\]%
where $1\leq \lambda \leq \frac{\left( q+3\right) }{10}.$
\end{theorem}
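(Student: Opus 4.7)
The plan is to apply the Hermitian construction of Theorem \ref{th1} to the $\alpha$-constacyclic code $\mathcal{C}$ with defining set $Z$ and then verify that the resulting EAQEC code saturates the entanglement-assisted Singleton bound of Proposition \ref{prop4}. Three ingredients are needed: the cardinality $|Z|$ (controlling the classical dimension), a BCH lower bound on the minimum distance, and the cardinality $c=|Z_{1}|$ of the intersection $Z\cap(-qZ)$ (controlling the entanglement parameter, by Lemma \ref{lm1}). For the first two, note that $ord_{rn}(q^{2})=2$, so $C_{s}$ is a singleton and each $C_{s-rj}$ with $1\leq j\leq M:=\tfrac{3(q-7)}{10}+1+\lambda$ has two elements $\{s-rj,\,s+rj\}$; hence $|Z|=1+2M=\tfrac{3(q-7)}{5}+2\lambda+3$, and the elements of $Z$ viewed in $\mathcal{O}_{rn}$ form the consecutive block $s-rM,\,s-r(M-1),\,\ldots,\,s,\,\ldots,\,s+rM$ of $2M+1$ terms of the progression $1+rj$, so Proposition \ref{prop1} yields $d\geq 2M+2=\tfrac{3(q-7)}{5}+2\lambda+4$.

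The heart of the argument is to show $|Z_{1}|=5$. A coset $C_{s-rj}$ of $Z$ lies in $Z_{1}$ if and only if its image $-qC_{s-rj}$ is again one of the cosets comprising $Z$, since $-q$ acts as an involution on the set of cyclotomic cosets. Lemma \ref{lm7*} enumerates the three cosets in this index range whose $-q$-image takes the expected form: $C_{s}$ is fixed by $-q$; the pair $\{C_{s-r(q+3)/10},\,C_{s-r(3q-1)/10}\}$ is swapped; and the pair $\{C_{s-r(q-2)/5},\,C_{s-r(2q+1)/5}\}$ is swapped. Both $\tfrac{q+3}{10}$ and $\tfrac{3q-1}{10}$ lie in $[0,M]$ for every $\lambda\geq 1$, contributing $1+2+2=5$ elements to $Z_{1}$. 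For the third pair, $\tfrac{q-2}{5}\in[0,M]$ but $\tfrac{2q+1}{5}>M$ precisely because $\lambda\leq\tfrac{q+3}{10}$, so this coset contributes nothing. Together with Lemmas \ref{lm5*} and \ref{lm6*}, which rule out any further coincidence $-qC_{s-rj}\in Z$ for $j\in[1,M]$, this gives $c=|Z_{1}|=5$.

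Writing $k_{C}=n-|Z|$, Theorem \ref{th1} and Lemma \ref{lm1} produce an EAQEC code with parameters $\llbracket n,\,2k_{C}-n+c,\,d;\,c\rrbracket_{q}$; substituting $|Z|=\tfrac{3(q-7)}{5}+2\lambda+3$ and $c=5$ yields exactly the announced dimension $n-\tfrac{6}{5}(q-7)-4\lambda-1$ and a distance at least $\tfrac{3(q-7)}{5}+2\lambda+4$, and a direct check shows $n+c-k=2(d-1)$, so Proposition \ref{prop4} certifies that the code is EAQMDS. The main obstacle is the $Z_{1}$ computation: although Lemma \ref{lm7*} supplies the three candidate involutions, one must verify that the cap $\lambda\leq\tfrac{q+3}{10}$ is precisely what excludes the pair $\{\tfrac{q-2}{5},\tfrac{2q+1}{5}\}$, and that no further coincidences occur for $j\in[0,M]$; this calls for arithmetic of the same flavor as in the proof of Lemma \ref{lm5*}, extended over the enlarged index range.
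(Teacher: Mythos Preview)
Your plan is correct and follows the same strategy as the paper: compute $|Z|$, invoke the BCH bound, identify $Z_{1}=Z\cap(-qZ)$ explicitly as $C_{s}\cup C_{s-r(q+3)/10}\cup C_{s-r(3q-1)/10}$ (hence $c=5$), and then apply Theorem~\ref{th1} and Proposition~\ref{prop4}. Your involution viewpoint (a coset lies in $Z_{1}$ iff its $-q$-image index also falls in $[0,M]$) is just a reorganisation of the paper's four-way block decomposition $Z=\bigl(\cup_{j=0}^{3(q-7)/10+1}C_{s-rj}\bigr)\cup\bigl(\cup_{j=3(q-7)/10+2}^{M}C_{s-rj}\bigr)$ and the resulting four intersections; the underlying arithmetic is identical.

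One imprecision: you invoke Lemma~\ref{lm6*} to exclude further coincidences, but the paper does \emph{not} use Lemma~\ref{lm6*} in the proof of this theorem (it is reserved for the $c=9$ result, Theorem~\ref{th5*}). Lemma~\ref{lm5*} only handles indices $j,k\in[1,\tfrac{3(q-7)}{10}+1]$; the cross terms (first block against second block) and the second block against itself are \emph{not} covered by either lemma and are disposed of in the paper by the direct case analysis you allude to in your last paragraph. So your closing remark is exactly right --- that extra arithmetic is genuinely required --- but the appeal to Lemma~\ref{lm6*} should be dropped. Your observation that the cap $\lambda\le\tfrac{q+3}{10}$ is precisely what keeps $\tfrac{2q+1}{5}$ outside $[0,M]$ is correct and is the reason the third relation in Lemma~\ref{lm7*} contributes nothing to $Z_{1}$ here.
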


\begin{proof}
Since the defining set of given $\alpha $-constacyclic code $\mathcal{C}$ of
length $n$ is $Z=\cup _{j=0}^{\frac{3}{10}\left( q-7\right) +1+\lambda
}C_{s-rj},$ and the cardinality $\left\vert Z\right\vert =\frac{3}{5}\left(
q-7\right) +2\lambda +3,$ then from Proposition \ref{prop1} and Proposition %
\ref{prop3}, $\mathcal{C}$ is a $q^{2}$-ary MDS $\alpha $-constacyclic code
with parameters
\[
\left[ n,n-\left( \frac{3}{5}\left( q-7\right) +2\lambda +3\right) ,\frac{3}{%
5}\left( q-7\right) +2\lambda +4\right] _{q^{2}}.
\]%
Thus, we have the following:%
\begin{eqnarray*}
Z_{1} &=&Z\cap (-qZ)= \\
&&\left( \left( \cup _{j=0}^{\frac{3}{10}\left( q-7\right)
+1}C_{s-rj}\right) \cup \left( \cup _{j=\frac{3}{10}\left( q-7\right) +2}^{%
\frac{3}{10}\left( q-7\right) +1+\lambda }C_{s-rj}\right) \right) \cap \\
&&\left( -q\left( \cup _{j=0}^{\frac{3}{10}\left( q-7\right)
+1}C_{s-rj}\right) \cup -q\left( \cup _{j=\frac{3}{10}\left( q-7\right) +7}^{%
\frac{3}{10}\left( q-7\right) +1+\lambda }C_{s-rj}\right) \right) \\
&=&\left( \left( \cup _{j=0}^{\frac{3}{10}\left( q-7\right)
+1}C_{s-rj}\right) \cap -q\left( \cup _{j=0}^{\frac{3}{10}\left( q-7\right)
+1}C_{s-rj}\right) \right) \cup \\
&&\left( \left( \cup _{j=0}^{\frac{3}{10}\left( q-7\right)
+1}C_{s-rj}\right) \cap -q\left( \cup _{j=\frac{3}{10}\left( q-7\right) +2}^{%
\frac{3}{10}\left( q-7\right) +1+\lambda }C_{s-rj}\right) \right) \\
&&\cup \left( \left( \cup _{j=\frac{3}{10}\left( q-7\right) +2}^{\frac{3}{10}%
\left( q-7\right) +1+\lambda }C_{s-rj}\right) \cap -q\left( \cup _{j=0}^{%
\frac{3}{10}\left( q-7\right) +1}C_{s-rj}\right) \right) \\
&&\cup \left( \left( \cup _{j=\frac{3}{10}\left( q-7\right) +2}^{\frac{3}{10}%
\left( q-7\right) +1+\lambda }C_{s-rj}\right) \cap -q\left( \cup _{j=\frac{3%
}{10}\left( q-7\right) +2}^{\frac{3}{10}\left( q-7\right) +1+\lambda
}C_{s-rj}\right) \right) .
\end{eqnarray*}%
We claim that
\[
Z_{1}=Z\cap (-qZ)=C_{s}\cup C_{s-r\frac{3q-1}{10}}\cup C_{s-r\frac{\left(
q+3\right) }{10}}.
\]%
From Lemma \ref{lm5*}, we have $\left( \cup _{j=1}^{\frac{3}{10}\left(
q-7\right) +1}C_{s-rj}\right) \cap -q\left( \cup _{j=1}^{\frac{3}{10}\left(
q-7\right) +1}C_{s-rj}\right) =\emptyset .$ By examining the coset structure
of the defining set $Z,$ we can see that if $j=0,$ then $\ C_{s}\cap
-qC_{s}=\left\{ s\right\} .$

Hence, we need to show that%
\[
\left( \cup _{j=1}^{\frac{3}{10}\left( q-7\right) +1}C_{s-rj}\right) \cap
-q\left( \cup _{j=\frac{3}{10}\left( q-7\right) +2}^{\frac{3}{10}\left(
q-7\right) +1+\lambda }C_{s-rj}\right) =C_{s-r\frac{\left( q+3\right) }{10}%
},
\]%
\[
\left( \cup _{j=\frac{3}{10}\left( q-7\right) +2}^{\frac{3}{10}\left(
q-7\right) +1+\lambda }C_{s-rj}\right) \cap -q\left( \cup _{j=1}^{\frac{3}{10%
}\left( q-7\right) +1}C_{s-rj}\right) =C_{s-r\frac{3q-1}{10}},
\]%
\[
\left( \cup _{j=\frac{3}{10}\left( q-7\right) +2}^{\frac{3}{10}\left(
q-7\right) +1+\lambda }C_{s-rj}\right) \cap -q\left( \cup _{j=\frac{3}{10}%
\left( q-7\right) +2}^{\frac{3}{10}\left( q-7\right) +1+\lambda
}C_{s-rj}\right) =\emptyset .
\]

We first show that
\[
\left( \cup _{j=\frac{3}{10}\left( q-7\right) +2}^{\frac{3}{10}\left(
q-7\right) +1+\lambda }C_{s-rj}\right) \cap -q\left( \cup _{j=1}^{\frac{3}{10%
}\left( q-7\right) +1}C_{s-rj}\right) =C_{s-r\frac{3q-1}{10}}.
\]

We have the following:%
\begin{eqnarray*}
&&\left( \cup _{j=\frac{3}{10}\left( q-7\right) +2}^{\frac{3}{10}\left(
q-7\right) +1+\lambda }C_{s-rj}\right) \cap -q\left( \cup _{j=1}^{\frac{3}{10%
}\left( q-7\right) +1}C_{s-rj}\right) \\
&=&\left( C_{s-r\frac{3q-1}{10}}\cup \left( \cup _{j=\frac{3}{10}\left(
q-7\right) +3}^{\frac{3}{10}\left( q-7\right) +1+\lambda }C_{s-rj}\right)
\right) \cap -q\left( \cup _{j=1}^{\frac{3}{10}\left( q-7\right)
+1}C_{s-rj}\right) \\
&=&\left( C_{s-r\frac{3q-1}{10}}\cap -q\left( \cup _{j=1}^{\frac{3}{10}%
\left( q-7\right) +1}C_{s-rj}\right) \right) \\
&&\cup \left( \left( \cup _{j=\frac{3}{10}\left( q-7\right) +3}^{\frac{3}{10}%
\left( q-7\right) +1+\lambda }C_{s-rj}\right) \cap -q\left( \cup _{j=1}^{%
\frac{3}{10}\left( q-7\right) +1}C_{s-rj}\right) \right) .
\end{eqnarray*}%
We claim that
\[
\left( C_{s-r\frac{3q-1}{10}}\cap -q\left( \cup _{j=1}^{\frac{3}{10}\left(
q-7\right) +1}C_{s-rj}\right) \right) =C_{s-r\frac{3q-1}{10}}
\]%
and
\[
\left( \cup _{j=\frac{3}{10}\left( q-7\right) +3}^{\frac{3}{10}\left(
q-7\right) +1+\lambda }C_{s-rj}\right) \cap -q\left( \cup _{j=1}^{\frac{3}{10%
}\left( q-7\right) +1}C_{s-rj}\right) =\emptyset ,
\]%
where $1\leq \lambda \leq \frac{q+3}{10}.$

Contrary to the our claim, let assume that
\begin{eqnarray*}
&&\left( \cup _{j=\frac{3}{10}\left( q-7\right) +3}^{\frac{3}{10}\left(
q-7\right) +1+\lambda }C_{s-rj}\right) \cap -q\left( \cup _{j=1}^{\frac{3}{10%
}\left( q-7\right) +1}C_{s-rj}\right) \\
&=&\left( \cup _{j=3}^{\lambda }C_{s-r\left( j+\frac{3}{10}\left( q-7\right)
+1\right) }\right) \cap -q\left( \cup _{j=1}^{\frac{3}{10}\left( q-7\right)
+1}C_{s-rj}\right) \neq \emptyset ,
\end{eqnarray*}%
where $1\leq \lambda \leq \frac{q+3}{10}.$ If $\left( \cup _{j=3}^{\lambda
}C_{s-r\left( j+\frac{3}{10}\left( q-7\right) +1\right) }\right) \cap
-q\left( \cup _{j=1}^{\frac{3}{10}\left( q-7\right) +1}C_{s-rj}\right) \neq
\emptyset ,$ then there exists two integers $u^{\prime }$ and $v^{\prime },$
where $3\leq u^{\prime }\leq \frac{q+3}{10},$ and $1\leq v^{\prime }\leq
\frac{3\left( q-7\right) }{10}+1$ such that $s-r\left( u^{\prime }+\frac{3}{%
10}\left( q-7\right) +1\right) \equiv -q\left( s-rv^{\prime }\right) $ $%
\left( mod\text{ }rn\right) $ or $s-r\left( u^{\prime }+\frac{3}{10}\left(
q-7\right) +1\right) \equiv -q\left( s+rv^{\prime }\right) $ $\left( mod%
\text{ }rn\right) .$

\textbf{Case 1:} Let $s-r\left( u^{\prime }+\frac{3}{10}\left( q-7\right)
+1\right) \equiv -q\left( s-rv^{\prime }\right) $ $\left( mod\text{ }%
rn\right) .$ It follows that $s\equiv \left( u^{\prime }+\frac{3}{10}\left(
q-7\right) +1+qv^{\prime }\right) \ \left( mod\text{ }n\right) .$ We know
that $s\equiv 0\ \left( mod\text{ }n\right) .$ Then, $0\leq u^{\prime
}+qv^{\prime }<n.$ This is possible only when $u^{\prime }+qv^{\prime }=n.$
Let $u^{\prime }+qv^{\prime }=\frac{q^{2}+1}{10},$ then $u^{\prime
}+qv^{\prime }=q\frac{\left( q-7\right) }{10}+\frac{7q+1}{10}.$ This
requires that $u^{\prime }=\frac{7q+1}{10},$ which is in contradiction with $%
3\leq u^{\prime }\leq \frac{q+3}{10}.$

\textbf{Case 2:} Let $s-r\left( u^{\prime }+\frac{3}{10}\left( q-7\right)
+1\right) \equiv -q\left( s+rv^{\prime }\right) $ $\left( mod\text{ }%
rn\right) .$ Then, $s\equiv \left( u^{\prime }+\frac{3}{10}\left( q-7\right)
+1-qv^{\prime }\right) \ \left( mod\text{ }n\right) .$ It follows that $%
-3n<u^{\prime }-qv^{\prime }<0,$ This is possible only when $u^{\prime
}-qv^{\prime }=-n$ or $u^{\prime }-qv^{\prime }=-2n.$ Let $u^{\prime
}-qv^{\prime }=-\frac{q^{2}+1}{10},$ then $u^{\prime }-qv^{\prime }=-q\frac{%
\left( q-3\right) }{10}-\frac{3q+1}{10}.$ This requires that $u=-\frac{3q+1}{%
10},$ which contradicts with $2\leq u\leq \frac{q-3}{10}.$ The case $%
u^{\prime }-qv^{\prime }=-2n$ can be shown in a similar way with the case $%
u^{\prime }-qv^{\prime }=-n.$

The above discussions show that
\[
\left( \cup _{j=\frac{3}{10}\left( q-7\right) +2}^{\frac{3}{10}\left(
q-7\right) +1+\lambda }C_{s-rj}\right) \cap -q\left( \cup _{j=1}^{\frac{3}{10%
}\left( q-7\right) +1}C_{s-rj}\right) =C_{s-r\frac{3q-1}{10}}.
\]%
From \ref{lm7*}, we have $-q\left( s-r\frac{\left( q+3\right) }{10}\right)
\equiv s-r\frac{3q-1}{10}$ $\left( mod\text{ }rn\right) .$ This fact says
that

\begin{eqnarray*}
-q\left( \left( \cup _{j=\frac{3}{10}\left( q-7\right) +2}^{\frac{3}{10}%
\left( q-7\right) +1+\lambda }C_{s-rj}\right) \cap -q\left( \cup _{j=1}^{%
\frac{3}{10}\left( q-7\right) +1}C_{s-rj}\right) \right) &=&-qC_{s-r\frac{%
3q-1}{10}} \\
&=&C_{s-r\frac{\left( q+3\right) }{10}},
\end{eqnarray*}%
and it follows that
\[
\left( \cup _{j=1}^{\frac{3}{10}\left( q-7\right) +1}C_{s-rj}\right) \cap
-q\left( \cup _{j=\frac{3}{10}\left( q-7\right) +2}^{\frac{3}{10}\left(
q-7\right) +1+\lambda }C_{s-rj}\right) =C_{s-r\frac{\left( q+3\right) }{10}%
}.
\]

Contrary to the our claim, assume that%
\begin{eqnarray*}
&&\left( \cup _{j=\frac{3}{10}\left( q-7\right) +2}^{\frac{3}{10}\left(
q-7\right) +1+\lambda }C_{s-rj}\right) \cap -q\left( \cup _{j=\frac{3}{10}%
\left( q-7\right) +2}^{\frac{3}{10}\left( q-7\right) +1+\lambda
}C_{s-rj}\right) \\
&=&\left( \cup _{j=2}^{\lambda +1}C_{s-r\left( j+\frac{3}{10}\left(
q-7\right) \right) }\right) \cap -q\left( \cup _{j=1}^{\lambda }C_{s-r\left(
j+\frac{3}{10}\left( q-7\right) \right) }\right) \neq \emptyset ,
\end{eqnarray*}%
where $1\leq \lambda \leq \frac{q+3}{10}.$ If $\left( \cup _{j=2}^{\lambda
+1}C_{s-r\left( j+\frac{3}{10}\left( q-7\right) \right) }\right) \cap
-q\left( \cup _{j=1}^{\lambda }C_{s-r\left( j+\frac{3}{10}\left( q-7\right)
\right) }\right) \neq \emptyset ,$ then there exists two integers $b^{\prime
}$ and $b^{\prime \prime },$ where $1\leq b^{\prime },b^{\prime \prime }\leq
\frac{q+3}{10},$ such that $s-r\left( b^{\prime }+\frac{3}{10}\left(
q-7\right) \right) \equiv -q\left( s-rb^{\prime \prime }\right) $ $\left( mod%
\text{ }rn\right) $ or $s-r\left( b^{\prime }+\frac{3}{10}\left( q-7\right)
\right) \equiv -q\left( s+rb^{\prime \prime }\right) $ $\left( mod\text{ }%
rn\right) .$

\textbf{Case 1:} Let $s-r\left( b^{\prime }+\frac{3}{10}\left( q-7\right)
\right) \equiv -q\left( s-rb^{\prime \prime }\right) $ $\left( mod\text{ }%
rn\right) .$ It follows that $s\equiv \left( b^{\prime }+\frac{3}{10}\left(
q-7\right) +qb^{\prime \prime }\right) \ \left( mod\text{ }n\right) .$ It is
known that $s\equiv 0\ \left( mod\text{ }n\right) .$ Then, we have $2+q\leq
b^{\prime }+qb^{\prime \prime }\leq \frac{\left( q+3\right) }{10}+q\frac{%
\left( q+3\right) }{10}=n+\frac{2q+1}{5}<2n.$ This is possible only when $%
b^{\prime }+qb^{\prime \prime }=n.$ Then, $b^{\prime }+qb^{\prime \prime }=q%
\frac{\left( q+3\right) }{10}+\frac{-3q+1}{10}.$ This means that $b^{\prime
}=\frac{-3q+1}{10},$ which is in contradiction with fact $1\leq b^{\prime
}\leq \frac{q+3}{10}.$

\textbf{Case 2:} Let $s-r\left( b^{\prime }+\frac{3}{10}\left( q-7\right)
\right) \equiv -q\left( s+rb^{\prime \prime }\right) $ $\left( mod\text{ }%
rn\right) .$ Then, $s\equiv \left( b^{\prime }+\frac{3}{10}\left( q-7\right)
-qb^{\prime \prime }\right) \ \left( mod\text{ }n\right) .$ It follows that $%
-2n<-\frac{\left( q^{2}+1\right) }{10}-\frac{\left( 3q-11\right) }{10}\leq
b^{\prime }+qb^{\prime \prime }\leq \frac{-9q+3}{10}<0.$ This is possible
only when $b^{\prime }+qb^{\prime \prime }=-n.$ Let $b^{\prime }+qb^{\prime
\prime }=-\frac{q^{2}+1}{10},$ then $b^{\prime }+qb^{\prime \prime }=-q\frac{%
\left( q+3\right) }{10}-\frac{3q+1}{10}.$ This requires that $b^{\prime }=%
\frac{3q-1}{10},$ which contradicts with $1\leq b^{\prime }\leq \frac{q+3}{10%
}.$

Consequently, we have%
\[
\left( \cup _{j=\frac{3}{10}\left( q-7\right) +2}^{\frac{3}{10}\left(
q-7\right) +1+\lambda }C_{s-rj}\right) \cap -q\left( \cup _{j=\frac{3}{10}%
\left( q-7\right) +2}^{\frac{3}{10}\left( q-7\right) +1+\lambda
}C_{s-rj}\right) =\emptyset .
\]

From Lemma \ref{lm1},\ we have $c=5,$ and by Theorem \ref{th1}, there exists
EAQMDS codes with parameters
\[
\llbracket n,n-\frac{6}{5}\left( q-7\right) -4\lambda -1,\frac{3}{5}\left(
q-7\right) +2\lambda +4;5\rrbracket_{q},
\]%
where $1\leq \lambda \leq \frac{q+3}{10}.$
\end{proof}

\begin{example}
We present some parameters of EAQMDS codes obtained from Theorem \ref{th4*}
in Table \ref{table3*}.
\end{example}

\begin{table}[tbp]
\caption{Some EAQMDS codes obtained by Theorem \protect\ref{th3*}.}
\label{table3*}\centering
\begin{tabular}{cc}
\hline
$q$ & $\llbracket n,n-\frac{6}{5}\left( q-7\right) -4\lambda -1,\frac{3}{5}%
\left( q-7\right) +2\lambda +4;5\rrbracket_{q}$ \\ \hline
$17$ & $\llbracket17,4,12{;5}\rrbracket_{17}$ \\
$17$ & $\llbracket29,8,14{;5}\rrbracket_{17}$ \\
$37$ & $\llbracket137,96,24{;5}\rrbracket_{37}$ \\
$37$ & $\llbracket137,92,26{;5}\rrbracket_{37}$ \\
$37$ & $\llbracket137,88,28{;5}\rrbracket_{37}$ \\
$47$ & $\llbracket221,168,30{;5}\rrbracket_{47}$ \\
$47$ & $\llbracket221,164,32{;5}\rrbracket_{47}$ \\
$47$ & $\llbracket221,160,34{;5}\rrbracket_{47}$ \\
$47$ & $\llbracket221,156,36{;5}\rrbracket_{47}$ \\ \hline
\end{tabular}%
\end{table}

Let $Z$ and $\overline{Z}$ be the sets defined in Lemma \ref{lm5*} and Lemma %
\ref{lm6*} respectively. Define $T=Z\cup \overline{Z}\cup F=\cup _{j=0}^{%
\frac{2q+1}{5}+\lambda },$ where $F=\cup _{j=\frac{2q+1}{5}+1}^{\frac{2q+1}{5%
}+1+\lambda }$ and $1\leq \lambda \leq \frac{q+3}{10}.$ By combining results
of Lemma \ref{lm5*}, Lemma \ref{lm6*}, Lemma \ref{lm7*} and Theorem \ref%
{th4*} we see that the number of entangled states $c=9.$ Based on this fact,
in Theorem \ref{th5*}, we give a class of EAQMDS codes of length $n=\frac{%
q^{2}+1}{10}$ and with entangled states $c=9.$

\begin{theorem}
\label{th5*} Let $q\equiv 7\left( mod\text{ }10\right) .$ If $\mathcal{C}$
is an $q^{2}$-ary $\alpha $-constacyclic code of length $n$ with defining
set $T=\cup _{j=0}^{\frac{2q+1}{5}+\lambda },$ then there exists EAQMDS
codes with parameters
\[
\llbracket n,n-\frac{4}{5}\left( 2q+1\right) -4\lambda +7,\frac{2}{5}\left(
2q+1\right) +2\lambda +2;9\rrbracket_{q},
\]%
where $1\leq \lambda \leq \frac{\left( q+3\right) }{10}.$
\end{theorem}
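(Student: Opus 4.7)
The plan is to mirror the decomposition argument used in Theorem \ref{th4*}, but with the defining set enlarged to $T=\cup_{j=0}^{(2q+1)/5+\lambda}C_{s-rj}$ so that two additional cyclotomic cosets get paired with their $-q$-images inside the defining set, pushing the count of entangled states from $5$ to $9$. First I would compute $|T|$: since $C_{s}=\{s\}$ and every other coset $C_{s-rj}$ with $1\le j$ has exactly two elements, $|T|=1+2\bigl((2q+1)/5+\lambda\bigr)=\tfrac{2(2q+1)}{5}+2\lambda+1$. Combining Proposition \ref{prop1} with Proposition \ref{prop3} then shows that $\mathcal{C}$ is a classical MDS $\alpha$-constacyclic code with parameters
\[
\Bigl[n,\; n-\tfrac{2(2q+1)}{5}-2\lambda-1,\; \tfrac{2(2q+1)}{5}+2\lambda+2\Bigr]_{q^{2}}.
\]

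The core step is to show $|Z_{1}|=|T\cap(-qT)|=9$. I would split $T$ as $Z\cup\overline{Z}\cup F$ exactly as introduced just before the theorem and expand the intersection into the nine pairwise terms $X\cap(-qY)$ for $X,Y\in\{Z,\overline{Z},F\}$. Lemmas \ref{lm5*} and \ref{lm6*} already eliminate $Z\setminus\{C_{s}\}$ and $\overline{Z}$ internally, while Lemma \ref{lm7*}(1) fixes $C_{s}$, contributing $1$. Lemma \ref{lm7*}(2) pairs $C_{s-r(q+3)/10}\in Z$ with $C_{s-r(3q-1)/10}\in\overline{Z}$, contributing $4$ via the two cross-terms, which reproduces the $c=5$ mechanism from Theorem \ref{th4*}. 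The genuinely new contribution comes from Lemma \ref{lm7*}(3): the index $(2q-4)/10$ lies in $Z$, while $-q$ sends that coset to $C_{s-r(2q+1)/5}$ whose index equals the left endpoint of $F$; this produces $4$ more elements in $Z_{1}$. Summing gives $1+4+4=9$.

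The main obstacle is to rule out all remaining contributions, i.e.\ to show that the cross-intersections $Z\cap(-qF)$, $F\cap(-qZ)$, $\overline{Z}\cap(-qF)$, $F\cap(-q\overline{Z})$, and $F\cap(-qF)$ yield nothing beyond the two pairs already identified. This requires the same division-algorithm case analysis as in the proof of Theorem \ref{th4*}: assuming $s-ru\equiv -q(s\mp rv)\pmod{rn}$ with indices drawn from the enlarged ranges reduces to $u\pm qv\equiv 0\pmod{n}$; bounding $u\pm qv$ between consecutive multiples of $n$ forces $u$ into residues such as $(7q+1)/10$, $-(3q+1)/10$, etc., each of which lies strictly outside the admissible interval provided $\lambda\le (q+3)/10$. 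The work is lengthy but entirely parallels the two-case template in Theorem \ref{th4*}, so I would largely reference those estimates rather than redo them.

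Finally, Lemma \ref{lm1} turns $|Z_{1}|=9$ into $c=9$, and Theorem \ref{th1} delivers an EAQEC code with parameters $\llbracket n,\,2k-n+9,\,d;\,9\rrbracket_{q}$. Substituting $k=n-\tfrac{2(2q+1)}{5}-2\lambda-1$ and $d=\tfrac{2(2q+1)}{5}+2\lambda+2$ gives precisely the parameters in the statement, and the identity
\[
n+c-k=\tfrac{4(2q+1)}{5}+4\lambda+2=2(d-1)
\]
together with $d\le(n+2)/2$ certifies, via Proposition \ref{prop4}, that $\mathcal{C}$ is an EAQMDS code.
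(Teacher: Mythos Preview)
Your proposal is correct and follows essentially the same approach as the paper: the paper's own proof of Theorem~\ref{th5*} is the single sentence ``The proof is a direct result of Lemma~\ref{lm5*}, Lemma~\ref{lm6*}, Lemma~\ref{lm7*} and Theorem~\ref{th4*},'' and your outline is precisely a fleshed-out version of that reference, using the same decomposition $T=Z\cup\overline{Z}\cup F$, the same three pairing identities from Lemma~\ref{lm7*} to locate the nine elements of $Z_{1}$, and the same template of division-algorithm bounds from Theorem~\ref{th4*} to exclude further contributions. (One minor quibble: the index $(2q+1)/5$ is actually the first index \emph{beyond} the range of Theorem~\ref{th4*} rather than the left endpoint of $F$ as the paper defines it, but since both are contained in $T$ this does not affect the count.)
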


\begin{proof}
The proof is a direct result of Lemma \ref{lm5*}, Lemma \ref{lm6*}, Lemma %
\ref{lm7*} and Theorem \ref{th4*}.
\end{proof}

\begin{example}
We present some parameters of EAQMDS codes obtained from Theorem \ref{th5*}
in Table \ref{table4*}.
\end{example}

\begin{table}[tbp]
\caption{Some EAQMDS codes obtained by Theorem \protect\ref{th2*}.}
\label{table4*}\centering
\begin{tabular}{cc}
\hline
$q$ & $\llbracket n,n-\frac{4}{5}\left( 2q+1\right) -4\lambda +7,\frac{2}{5}%
\left( 2q+1\right) +2\lambda +2;9\rrbracket_{q}$ \\ \hline
$17$ & $\llbracket29,4,18{;9}\rrbracket_{17}$ \\
$37$ & $\llbracket137,80,34{;9}\rrbracket_{37}$ \\
$37$ & $\llbracket137,76,36{;9}\rrbracket_{37}$ \\
$37$ & $\llbracket137,72,38{;9}\rrbracket_{37}$ \\
$47$ & $\llbracket221,148,42{;9}\rrbracket_{47}$ \\
$47$ & $\llbracket221,144,44{;9}\rrbracket_{47}$ \\
$47$ & $\llbracket221,140,46{;9}\rrbracket_{47}$ \\
$47$ & $\llbracket221,136,48{;9}\rrbracket_{47}$ \\ \hline
\end{tabular}%
\end{table}

\subsection{EAQMDS codes of length $n=\frac{q^{2}+1}{10},$ where $q\equiv
3\left( mod\text{ }10\right) $}

Note that $n=\frac{q^{2}+1}{10},$ and so $ord_{rn}\left( q^{2}\right) =2.$
This means that each $q^{2}$-cyclotomic coset modulo $rn$ includes one or
two elements. Let $q\equiv 3\left( mod\text{ }10\right) $ and $s=\frac{%
q^{2}+1}{2}.$ It is easy to see that the $q^{2}$-cyclotomic cosets modulo $%
rn $ containing some integers from $1$ to $rn$ are $C_{s}=\left\{ s\right\} $
and $C_{s-rj}=\left\{ s-rj,s+rj\right\} ,$ where $1\leq j\leq \frac{q-1}{2}.$

\begin{lemma}
\label{lm2*} Let $q\equiv 3\left( mod\text{ }10\right) .$ If $\mathcal{C}$
is a $q^{2}$-ary constacyclic code of length $n$ and defining set $Z=\cup
_{j=1}^{\lambda }C_{s-rj},$ where $1\leq \lambda \leq \frac{3\left(
q-3\right) }{10},$ then $\mathcal{C}^{\perp _{H}}\subseteq \mathcal{C}.$
\end{lemma}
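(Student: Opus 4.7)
The plan is to mirror the structure of the proof of Lemma \ref{lm5*}, invoking Proposition \ref{prop2} to reduce the claim $\mathcal{C}^{\perp_H} \subseteq \mathcal{C}$ to showing that $Z \cap (-qZ) = \emptyset$. Suppose for contradiction that this intersection is nonempty. Then there exist integers $j,k$ with $1 \leq j,k \leq \frac{3(q-3)}{10}$ satisfying either $s - rj \equiv -q(s - rk)\ (\mathrm{mod}\ rn)$ (Case 1) or $s - rj \equiv -q(s + rk)\ (\mathrm{mod}\ rn)$ (Case 2). Since $s = \frac{q^2+1}{2} = 5n \equiv 0\ (\mathrm{mod}\ n)$, reducing modulo $n$ turns these into $j + qk \equiv 0\ (\mathrm{mod}\ n)$ and $j - qk \equiv 0\ (\mathrm{mod}\ n)$ respectively.

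The core arithmetic identity to exploit is that $q \equiv 3\ (\mathrm{mod}\ 10)$ makes $\frac{q-3}{10}$ an integer, so I can write
\[
n = \frac{q^{2}+1}{10} = q\cdot\frac{q-3}{10} + \frac{3q+1}{10},
\]
which is the canonical division-algorithm decomposition because $0 < \frac{3q+1}{10} < q$. For Case 1, I would bound $q+1 \leq j+qk \leq \frac{3(q-3)(q+1)}{10}$ and check (a direct quadratic comparison) that the only positive multiples of $n$ in this range are $n$ and $2n$. Writing $j + qk = n$ forces $j = \frac{3q+1}{10}$ by the division algorithm, but $\frac{3q+1}{10} > \frac{3q-9}{10} = \frac{3(q-3)}{10}$, contradicting the range of $j$. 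The subcase $j+qk = 2n$ similarly forces $j = \frac{6q+2}{10}$, again exceeding $\frac{3(q-3)}{10}$.

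For Case 2, I would rewrite $j - qk \equiv 0\ (\mathrm{mod}\ n)$ as $qk - j = n$ or $qk - j = 2n$ after bounding $|j-qk|$ (the range forces the multiple to lie in $\{-n,-2n\}$). Substituting the decomposition of $n$ gives $q(k - \frac{q-3}{10}) = j + \frac{3q+1}{10}$, and analogously $q(k - \frac{q-3}{5}) = j + \frac{3q+1}{5}$ in the $2n$ subcase. Since $0 < j + \frac{3q+1}{10} \leq \frac{6q-8}{10} < q$ (and similarly in the $2n$ subcase), the right side is a positive integer strictly less than $q$, so it cannot be a multiple of $q$ — contradiction.

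The main obstacle is not conceptual but bookkeeping: I must check that the upper bound $\lambda \leq \frac{3(q-3)}{10}$ is exactly the tight threshold that makes every forced residue fall outside $[1,\frac{3(q-3)}{10}]$. In particular, the $2n$ subcase in Case 1 is the one that really pins down the value of $\lambda$, since $\frac{6q+2}{10}$ sits just barely above $\frac{3(q-3)}{10}$ for small $q \equiv 3\ (\mathrm{mod}\ 10)$, so one must verify the inequality $\frac{6q+2}{10} > \frac{3(q-3)}{10}$ (equivalently $3q + 11 > 0$) rather than taking it for granted.
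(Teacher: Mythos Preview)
Your proposal is correct and follows essentially the same route as the paper: invoke Proposition~\ref{prop2}, split into the two congruence cases, bound $j\pm qk$ to restrict the possible multiples of $n$, and derive a contradiction from the division-algorithm decomposition $n=q\cdot\tfrac{q-3}{10}+\tfrac{3q+1}{10}$. Your handling of Case~2 via the observation that $0<j+\tfrac{3q+1}{10}<q$ (and similarly for $2n$) forces a nonzero multiple of $q$ to be too small is a slightly cleaner variant of the paper's division-algorithm extraction, but the underlying argument is the same.
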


\begin{proof}
By Proposition \ref{prop2}, it is sufficient to prove that $Z\cap \left(
-qZ\right) =\emptyset .$ Suppose that $Z\cap \left( -qZ\right) \neq
\emptyset .$ Then, there exists two integers $j,k,$ where $1\leq j,k\leq
\frac{3\left( q-3\right) }{10},$ such that $s-rj\equiv -q\left( s-rk\right)
\left( mod\text{ }rn\right) $ or $s-rj\equiv -q\left( s+rk\right) \left( mod%
\text{ }rn\right) .$

\textbf{Case 1:} Let $s-rj\equiv -q\left( s-rk\right) \left( mod\text{ }%
rn\right) .$ This is equivalent to $s\equiv j+qk\left( mod\text{ }n\right) .$
As $s\equiv 0\left( mod\text{ }n\right) ,$ we get $j+qk\equiv 0\left( mod%
\text{ }n\right) .$ Since $1\leq j,k\leq \frac{3q-9}{10},$ $q+1\leq j+qk\leq
\frac{3q-9}{10}+q\frac{3q-9}{10}=\left( q+1\right) \frac{3\left( q-3\right)
}{10}<3n.$ Then, we have that $j+qk\equiv n\left( mod\text{ }n\right) $ or $%
j+qk\equiv 2n\left( mod\text{ }n\right) .$ If $j+qk=n,$ then $j+qk=\frac{%
q^{2}+1}{10}=q\frac{\left( q-3\right) }{10}+\frac{3q+1}{10}.$ By division
algorithm, $j=\frac{3q+1}{10}.$ This is a contradiction, because $0\leq
j\leq \frac{3\left( q-3\right) }{10}.$ If $j+qk=2n,$ then $j+qk=\frac{%
2\left( q^{2}+1\right) }{10}=q\frac{2\left( q-3\right) }{10}+\frac{2\left(
3q+1\right) }{10}.$ By division algorithm, $j=\frac{2\left( 3q+1\right) }{10}%
.$ This is a contradiction, because $0\leq j\leq \frac{3\left( q-3\right) }{%
10}.$

\textbf{Case 2:} Let $s-rj\equiv -q\left( s+rk\right) \left( mod\text{ }%
rn\right) .$ This is equivalent to $s\equiv j-qk\left( mod\text{ }n\right) .$
By $s\equiv 0\left( mod\text{ }n\right) ,$ we have $j-qk\equiv 0\left( mod%
\text{ }n\right) .$ Since $1\leq j,k\leq \frac{3\left( q-3\right) }{10},$ $%
-4n<1-3q\frac{\left( q-3\right) }{10}\leq j-qk\leq \frac{-7q-9}{10}<-n.$ We
have that $j-qk\equiv -2n\left( mod\text{ }n\right) $ or $j-qk\equiv
-3n\left( mod\text{ }n\right) .$ If $j+qk=-2n,$ then $j-qk=\frac{-2\left(
q^{2}+1\right) }{10}=-q\frac{2\left( q-3\right) }{10}+\frac{2\left(
3q-1\right) }{10}.$ By division algorithm, $k=\frac{2\left( q-3\right) }{10}%
. $ This is a contradiction, because $0\leq k\leq \frac{3\left( q-3\right) }{%
10}.$ If $j+qk=-3n,$ then $j-qk=\frac{3\left( q^{2}+1\right) }{10}=-q\frac{%
3\left( q-3\right) }{10}+\frac{3\left( 3q-1\right) }{10}.$ By division
algorithm, $k=\frac{3\left( q-3\right) }{10}.$ This is a contradiction,
because $0\leq k\leq \frac{3\left( q-3\right) }{10}.$
\end{proof}

\begin{lemma}
\label{lm4*} Let $q\equiv 3\left( mod\text{ }10\right) $ and $s=\frac{q^{2}+1%
}{2}.$ Then we have the following:

\begin{enumerate}
\item $-qC_{s}=C_{s}=\left\{ s\right\} ,$

\item $-qC_{s-r\frac{\left( q-3\right) }{10}}=C_{s-r\left( \frac{3q+1}{10}%
\right) }=\left\{ s-r\left( \frac{3q+1}{10}\right) ,s+r\left( \frac{3q+1}{10}%
\right) \right\} ,$

\item $-qC_{s-r\left( \frac{q+2}{5}\right) }=C_{s-r\left( \frac{2q-1}{5}%
\right) }=\left\{ s-r\frac{\left( 2q-1\right) }{5},s+r\frac{\left(
2q-1\right) }{5}\right\} .$
\end{enumerate}
\end{lemma}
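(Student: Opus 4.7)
My plan is to exploit two structural identities valid under the hypothesis $q\equiv 3\pmod{10}$: first, $s=5n$, which is immediate from $n=(q^{2}+1)/10$ and $s=(q^{2}+1)/2$; and second, $q^{2}=10n-1$. In each of the three parts I would compute $-q$ times the given representative, reduce modulo $rn$ using these identities, and then recognize the outcome as an element of the claimed coset $\{s-rj,\,s+rj\}$. The divisibility prerequisites --- that $(q-3)/10$, $(3q+1)/10$, $(q+2)/5$ and $(2q-1)/5$ are all integers --- follow automatically from $q\equiv 3\pmod{10}$, so every intermediate expression remains in $\mathbb{Z}$.

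For part (1), I would write $-qs=-5qn=-5n(q+1)+5n=-5rn+5n\equiv 5n=s\pmod{rn}$, which gives $-qC_{s}\subseteq C_{s}$; since $|C_{s}|=1$, equality follows. For part (2), starting from $-q\bigl(s-r(q-3)/10\bigr)\equiv s+qr(q-3)/10\pmod{rn}$ by part (1), I would rewrite $q(q-3)/10=(q^{2}-3q)/10=n-(3q+1)/10$, so that $qr(q-3)/10\equiv -r(3q+1)/10\pmod{rn}$. The image therefore equals $s-r(3q+1)/10$, which sits in the two-element coset $C_{s-r(3q+1)/10}=\{s-r(3q+1)/10,\,s+r(3q+1)/10\}$; because $-q$ sends a two-element $q^{2}$-cyclotomic coset bijectively to a two-element $q^{2}$-cyclotomic coset, the asserted set equality follows. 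Part (3) proceeds identically: I would rewrite $q(q+2)/5=(q^{2}+2q)/5=2n+(2q-1)/5$, reduce to obtain $-q\bigl(s-r(q+2)/5\bigr)\equiv s+r(2q-1)/5\pmod{rn}$, and note that this lies in $C_{s-r(2q-1)/5}$.

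The main obstacle is purely bookkeeping; no deeper ingredient than the two identities $s=5n$ and $q^{2}+1=10n$ is needed. The one subtlety to keep in mind is that, since each relevant $q^{2}$-cyclotomic coset has the symmetric form $\{s-rj,\,s+rj\}$, an image of the form $s+rj'$ still correctly witnesses membership in $C_{s-rj'}$; mislabelling the sign at this step would produce a wrong-looking identity. Once this is handled carefully, the three computations collapse to a few lines each and the lemma follows. The structural parallel with Lemma~\ref{lm7*} is evident: the roles of $(q-7)/10$ and $(q+3)/10$ from the $q\equiv 7\pmod{10}$ case are simply replaced by $(q-3)/10$ and $(3q+1)/10$ in the congruence arithmetic modulo $rn$.
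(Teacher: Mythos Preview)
Your proposal is correct and is essentially identical to the paper's approach: the paper does not give a separate proof of Lemma~\ref{lm4*}, but your argument mirrors exactly the computations carried out in the proof of the parallel Lemma~\ref{lm7*} (using $s=5n$, $-qs\equiv s\pmod{rn}$, and the rewriting $q(q-3)/10=n-(3q+1)/10$ and $q(q+2)/5=2n+(2q-1)/5$). Your added remark that landing on $s+rj'$ still identifies the coset $C_{s-rj'}$ makes explicit a point the paper leaves implicit in part~(3).
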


\begin{lemma}
\label{lm3*} Let $q\equiv 3\left( mod\text{ }10\right) .$ If $\mathcal{C}$
is a $q^{2}$-ary constacyclic code of length $n$ and defining set $\overline{%
Z}=\cup _{j=\frac{3\left( q-3\right) }{10}}^{t}C_{s-rj},$ where $1\leq t\leq
\frac{\left( q-3\right) }{10},$ then $\mathcal{C}^{\perp _{H}}\subseteq
\mathcal{C}.$
\end{lemma}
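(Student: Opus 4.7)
The plan is to mirror the argument of Lemma \ref{lm2*}, adapting it to the shifted index window that defines $\overline{Z}$. By Proposition \ref{prop2}, proving $\mathcal{C}^{\perp_H}\subseteq \mathcal{C}$ reduces to verifying $\overline{Z}\cap(-q\overline{Z})=\emptyset$. Suppose for contradiction that this intersection is nonempty; then there exist indices $j,k$ in the range defining $\overline{Z}$ such that either
\[
s-rj\equiv -q(s-rk)\pmod{rn}\qquad\text{or}\qquad s-rj\equiv -q(s+rk)\pmod{rn}.
\]
Using $s\equiv 0\pmod{n}$ (since $s=5n$), the first congruence collapses to $j+qk\equiv 0\pmod{n}$ and the second to $j-qk\equiv 0\pmod{n}$, exactly as in the proofs of Lemma \ref{lm5*} and Lemma \ref{lm2*}.

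Next I would bound $j+qk$ and $j-qk$ in the admissible range for $(j,k)$ to isolate the finitely many multiples of $n$ that each quantity could possibly equal. For each candidate multiple $mn$, I would write
\[
mn=m\cdot\tfrac{q^{2}+1}{10}=q\cdot\tfrac{m(q-3)}{10}+\tfrac{m(3q+1)}{10}
\]
and invoke the division algorithm to force $j$ (or $k$) into a specific residue class modulo $q$, namely a positive integer multiple of $\tfrac{3q+1}{10}$ or of $\tfrac{3q-1}{10}$ (as in Lemma \ref{lm2*}). In every case the forced value should lie outside the admissible interval for $j$, producing the required contradiction and finishing the proof.

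The main obstacle is the bookkeeping around the shifted starting index $\tfrac{3(q-3)}{10}$: because $j$ and $k$ are no longer small, $j+qk$ and $j-qk$ can land in one or two more multiples of $n$ than in Lemma \ref{lm2*}, so additional cases $j+qk\in\{n,2n,3n\}$ and $j-qk\in\{-n,-2n,-3n,-4n\}$ must be enumerated. For each, one needs to verify that the residue $\tfrac{m(3q+1)}{10}$ (or its negative, reduced) cannot fall in $\bigl[\tfrac{3(q-3)}{10},\tfrac{3(q-3)}{10}+t\bigr]$ for any $t\le\tfrac{q-3}{10}$. This check is the only place where the shift interacts nontrivially with the constraint $q\equiv 3\pmod{10}$, but it is purely arithmetic and follows from the same modular identities used in Lemma \ref{lm2*}, so no new ideas are required beyond careful case analysis.
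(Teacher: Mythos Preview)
Your proposal is correct and matches the paper's approach exactly: the paper simply states that the proof is very similar to that of Lemma \ref{lm2*}, and your plan is precisely to replay that argument with the shifted index window, including the additional multiples of $n$ that must be checked. No different ideas are involved.
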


\begin{proof}
The proof for this lemma is very similar to the proof of Lemma \ref{lm2*}.
\end{proof}

As an immediate result of Lemma \ref{lm3*} we have $\overline{Z}\cap \left(
-q\overline{Z}\right) =\emptyset .$

In Theorem \ref{th2*}, we give a class of EAQMDS codes of length $n=\frac{%
q^{2}+1}{10}$ and with entangled states $c=5.$

\begin{theorem}
\label{th2*} Let $q\equiv 3\left( mod\text{ }10\right) .$ If $\mathcal{C}$
is an $q^{2}$-ary $\alpha $-constacyclic code of length $n$ with defining
set $Z=\cup _{j=0}^{\frac{3\left( q-3\right) }{10}+\lambda }C_{s-rj},$ then
there exists EAQMDS codes with parameters
\[
\llbracket n,n-\frac{6}{5}\left( q-3\right) -4\lambda +3,\frac{3}{5}\left(
q-3\right) +2\lambda +2;5\rrbracket_{q},
\]%
where $1\leq \lambda \leq \frac{\left( q-3\right) }{10}.$
\end{theorem}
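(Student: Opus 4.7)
The plan is to mirror the structure of the proof of Theorem \ref{th4*}, adapted to the arithmetic of $q \equiv 3 \pmod{10}$, and to use Lemma \ref{lm2*}, Lemma \ref{lm4*}, Lemma \ref{lm3*}, together with Proposition \ref{prop1}, Proposition \ref{prop3}, Lemma \ref{lm1} and Theorem \ref{th1}. First I would count $\lvert Z\rvert$: the coset $C_s$ contributes one element and each $C_{s-rj}$ with $1\le j\le \frac{3(q-3)}{10}+\lambda$ contributes two, giving $\lvert Z\rvert = \frac{3(q-3)}{5}+2\lambda+1$. By the constacyclic BCH bound (Proposition \ref{prop1}) the consecutive run of $\frac{3(q-3)}{5}+2\lambda+2$ roots forces minimum distance $\ge \frac{3(q-3)}{5}+2\lambda+2$, and by the Singleton bound (Proposition \ref{prop3}) equality holds, so $\mathcal{C}$ is a classical $q^2$-ary MDS $\alpha$-constacyclic code with parameters $[n,\,n-\frac{3(q-3)}{5}-2\lambda-1,\,\frac{3(q-3)}{5}+2\lambda+2]_{q^2}$.

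The heart of the argument is the determination of $Z_1=Z\cap(-qZ)$. I split $Z=A\cup B$ with $A=\bigcup_{j=0}^{3(q-3)/10}C_{s-rj}$ and $B=\bigcup_{j=3(q-3)/10+1}^{3(q-3)/10+\lambda}C_{s-rj}$, and compute the four intersections $A\cap(-qA)$, $A\cap(-qB)$, $B\cap(-qA)$, $B\cap(-qB)$. Lemma \ref{lm2*} (applied to the sub-union for $j\ge 1$) gives that $A\cap(-qA)$ reduces to $C_s\cap(-qC_s)=\{s\}$, using item (1) of Lemma \ref{lm4*}. Lemma \ref{lm3*} takes care of $B\cap(-qB)=\emptyset$. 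For the cross terms I expect, by items (2) and (3) of Lemma \ref{lm4*}, exactly
\[
A\cap(-qB)=C_{s-r(q-3)/10},\qquad B\cap(-qA)=C_{s-r(3q+1)/10},
\]
each of cardinality $2$. Summing gives $\lvert Z_1\rvert = 1+2+2 = 5$.

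The main obstacle, as in the proof of Theorem \ref{th4*}, is verifying that the cross-intersections contribute only these two cosets and nothing more. I would do this by assuming, for contradiction, that there exist $u^\prime$, $v^\prime$ outside the ranges dictated by Lemma \ref{lm4*} with $s-r(u^\prime+\tfrac{3(q-3)}{10})\equiv -q(s\mp rv^\prime)\pmod{rn}$; reducing modulo $n$ using $s\equiv 0 \pmod n$ yields $u^\prime\pm qv^\prime \equiv 0\pmod n$. I then bound $u^\prime\pm qv^\prime$ in the allowed range ($1\le u^\prime,v^\prime \le \tfrac{3(q-3)}{10}+\lambda$ with $\lambda\le \tfrac{q-3}{10}$), so that only a few multiples of $n$ are possible, and apply the division algorithm to each multiple of $n$ written in base $q$. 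In each subcase the quotient/remainder representation $n=q\cdot\tfrac{q-3}{10}+\tfrac{3q+1}{10}$ (and its multiples) forces $u^\prime$ to equal $\tfrac{3q+1}{10}$ or $\tfrac{2(3q+1)}{10}$ or similar, all of which lie outside the allowed range for $u^\prime$, producing the contradiction. The $B\cap(-qB)$ calculation is analogous but simpler, since both indices are already close to $\tfrac{3(q-3)}{10}$.

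Once $c=\lvert Z_1\rvert=5$ is established via Lemma \ref{lm1}, Theorem \ref{th1} gives an EAQEC code with parameters $\llbracket n,\,2k-n+c,\,d;\,c\rrbracket_q$ where $k,d$ come from the classical MDS parameters above. A direct substitution yields
\[
2k-n+c = 2\Bigl(n-\tfrac{3(q-3)}{5}-2\lambda-1\Bigr)-n+5 = n-\tfrac{6}{5}(q-3)-4\lambda+3,
\]
matching the claim, and the equality $n+c-k=2(d-1)$ is immediate, so Proposition \ref{prop4} certifies that the resulting code is EAQMDS, completing the proof.
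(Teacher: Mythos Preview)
Your proposal is correct and follows essentially the same route as the paper: split $Z=A\cup B$ at $j=\tfrac{3(q-3)}{10}$, use Lemma~\ref{lm2*} plus $-qC_s=C_s$ for $A\cap(-qA)$, identify the two cross-term cosets via Lemma~\ref{lm4*}, rule out further collisions by bounding $u'\pm qv'$ against small multiples of $n$ and reading off base-$q$ digits, and conclude $c=5$ via Lemma~\ref{lm1} and Theorem~\ref{th1}. Two small remarks: only items (1) and (2) of Lemma~\ref{lm4*} are needed here (item (3) concerns the indices $(q+2)/5$ and $(2q-1)/5$, which enter only in the $c=9$ result of Theorem~\ref{th3*}); and while invoking Lemma~\ref{lm3*} for $B\cap(-qB)=\emptyset$ is legitimate, the paper instead reproves this directly by the same base-$q$ digit argument.
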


\begin{proof}
Since the defining set of an $\alpha $-constacyclic code $\mathcal{C}$ of
length $n$ is $Z=\cup _{j=0}^{\frac{3}{10}\left( q-3\right) +\lambda
}C_{s-rj},$ and the cardinality of $Z,$ which we denoted by $\left\vert
Z\right\vert ,$ is $\left\vert Z\right\vert =\frac{3}{5}\left( q-3\right)
+2\lambda +1,$ then by Proposition \ref{prop1} and \ref{prop3}, $\mathcal{C}$
is a $q^{2}$-ary MDS $\alpha $-constacyclic code with parameters
\[
\left[ n,n-\left( \frac{3}{5}\left( q-3\right) +2\lambda +1\right) ,\frac{3}{%
5}\left( q-3\right) +2\lambda +2\right] _{q^{2}}.
\]%
Hence, we have the following:%
\begin{eqnarray*}
Z_{1} &=&Z\cap (-qZ)= \\
&&\left( \left( \cup _{j=0}^{\frac{3}{10}\left( q-3\right) }C_{s-rj}\right)
\cup \left( \cup _{j=\frac{3}{10}\left( q-3\right) +1}^{\frac{3}{10}\left(
q-3\right) +\lambda }C_{s-rj}\right) \right) \cap \\
&&\left( -q\left( \cup _{j=0}^{\frac{3}{10}\left( q-3\right)
}C_{s-rj}\right) \cup -q\left( \cup _{j=\frac{3}{10}\left( q-3\right) +1}^{%
\frac{3}{10}\left( q-3\right) +\lambda }C_{s-rj}\right) \right) \\
&=&\left( \left( \cup _{j=0}^{\frac{3}{10}\left( q-3\right) }C_{s-rj}\right)
\cap -q\left( \cup _{j=0}^{\frac{3}{10}\left( q-3\right) }C_{s-rj}\right)
\right) \cup \\
&&\left( \left( \cup _{j=0}^{\frac{3}{10}\left( q-3\right) }C_{s-rj}\right)
\cap -q\left( \cup _{j=\frac{3}{10}\left( q-3\right) +1}^{\frac{3}{10}\left(
q-3\right) +\lambda }C_{s-rj}\right) \right) \\
&&\cup \left( \left( \cup _{j=\frac{3}{10}\left( q-3\right) +1}^{\frac{3}{10}%
\left( q-3\right) +\lambda }C_{s-rj}\right) \cap -q\left( \cup _{j=0}^{\frac{%
3}{10}\left( q-3\right) }C_{s-rj}\right) \right) \\
&&\cup \left( \left( \cup _{j=\frac{3}{10}\left( q-3\right) +1}^{\frac{3}{10}%
\left( q-3\right) +\lambda }C_{s-rj}\right) \cap -q\left( \cup _{j=\frac{3}{%
10}\left( q-3\right) +1}^{\frac{3}{10}\left( q-3\right) +\lambda
}C_{s-rj}\right) \right) .
\end{eqnarray*}%
We claim that
\[
Z_{1}=Z\cap (-qZ)=C_{s}\cup C_{s-r\frac{3q+1}{10}}\cup C_{s-r\frac{\left(
q-3\right) }{10}}.
\]%
From Lemma \ref{lm2}, we have $\left( \cup _{j=1}^{\frac{3}{10}\left(
q-3\right) }C_{s-rj}\right) \cap -q\left( \cup _{j=1}^{\frac{3}{10}\left(
q-3\right) }C_{s-rj}\right) =\emptyset .$ By examining the coset structure
of the defining set $Z,$ we can see that if $j=0,$ then $\ C_{s}\cap
-qC_{s}=\left\{ s\right\} .$

Thus, we need to show that%
\[
\left( \cup _{j=1}^{\frac{3}{10}\left( q-3\right) }C_{s-rj}\right) \cap
-q\left( \cup _{j=\frac{3}{10}\left( q-3\right) +1}^{\frac{3}{10}\left(
q-3\right) +\lambda }C_{s-rj}\right) =C_{s-r\frac{\left( q-3\right) }{10}},
\]%
\[
\left( \cup _{j=\frac{3}{10}\left( q-3\right) +1}^{\frac{3}{10}\left(
q-3\right) +\lambda }C_{s-rj}\right) \cap -q\left( \cup _{j=1}^{\frac{3}{10}%
\left( q-3\right) }C_{s-rj}\right) =C_{s-r\frac{3q+1}{10}},
\]%
\[
\left( \cup _{j=\frac{3}{10}\left( q-3\right) +1}^{\frac{3}{10}\left(
q-3\right) +\lambda }C_{s-rj}\right) \cap -q\left( \cup _{j=\frac{3}{10}%
\left( q-3\right) +1}^{\frac{3}{10}\left( q-3\right) +\lambda
}C_{s-rj}\right) =\emptyset .
\]

We first show that
\[
\left( \cup _{j=\frac{3}{10}\left( q-3\right) +1}^{\frac{3}{10}\left(
q-3\right) +\lambda }C_{s-rj}\right) \cap -q\left( \cup _{j=1}^{\frac{3}{10}%
\left( q-3\right) }C_{s-rj}\right) =C_{s-r\frac{3q+1}{10}}.
\]

We have the following:%
\begin{eqnarray*}
&&\left( \cup _{j=\frac{3}{10}\left( q-3\right) +1}^{\frac{3}{10}\left(
q-3\right) +\lambda }C_{s-rj}\right) \cap -q\left( \cup _{j=1}^{\frac{3}{10}%
\left( q-3\right) }C_{s-rj}\right) \\
&=&\left( C_{s-r\frac{3q+1}{10}}\cup \left( \cup _{j=\frac{3}{10}\left(
q-3\right) +2}^{\frac{3}{10}\left( q-3\right) +\lambda }C_{s-rj}\right)
\right) \cap -q\left( \cup _{j=1}^{\frac{3}{10}\left( q-3\right)
}C_{s-rj}\right) \\
&=&\left( C_{s-r\frac{3q+1}{10}}\cap -q\left( \cup _{j=1}^{\frac{3}{10}%
\left( q-3\right) }C_{s-rj}\right) \right) \\
&&\cup \left( \left( \cup _{j=\frac{3}{10}\left( q-3\right) +2}^{\frac{3}{10}%
\left( q-3\right) +\lambda }C_{s-rj}\right) \cap -q\left( \cup _{j=1}^{\frac{%
3}{10}\left( q-3\right) }C_{s-rj}\right) \right) \\
&=&C_{s-r\frac{3q+1}{10}}.
\end{eqnarray*}%
We claim that
\begin{eqnarray*}
\left( C_{s-r\frac{3q+1}{10}}\cap -q\left( \cup _{j=1}^{\frac{3}{10}\left(
q-3\right) }C_{s-rj}\right) \right) &=&C_{s-r\frac{3q+1}{10}}, \\
\left( \cup _{j=\frac{3}{10}\left( q-3\right) +2}^{\frac{3}{10}\left(
q-3\right) +\lambda }C_{s-rj}\right) \cap -q\left( \cup _{j=1}^{\frac{3}{10}%
\left( q-3\right) }C_{s-rj}\right) &=&\emptyset
\end{eqnarray*}%
where $1\leq \lambda \leq \frac{q-3}{10}.$

Contrary to the claim, assume that
\begin{eqnarray*}
&&\left( \cup _{j=\frac{3}{10}\left( q-3\right) +2}^{\frac{3}{10}\left(
q-3\right) +\lambda }C_{s-rj}\right) \cap -q\left( \cup _{j=1}^{\frac{3}{10}%
\left( q-3\right) }C_{s-rj}\right) \\
&=&\left( \cup _{j=2}^{\lambda }C_{s-r\left( j+\frac{3}{10}\left( q-3\right)
\right) }\right) \cap -q\left( \cup _{j=1}^{\frac{3}{10}\left( q-3\right)
}C_{s-rj}\right) \neq \emptyset ,
\end{eqnarray*}%
where $1\leq \lambda \leq \frac{q-3}{10}.$ If $\left( \cup _{j=2}^{\lambda
}C_{s-r\left( j+\frac{3}{10}\left( q-3\right) \right) }\right) \cap -q\left(
\cup _{j=0}^{\frac{3}{10}\left( q-3\right) }C_{s-rj}\right) \neq \emptyset ,$
then there exists two integers $u$ and $v,$ where $2\leq u\leq \frac{q-3}{10}%
,$ and $1\leq v\leq \frac{3\left( q-3\right) }{10}$ such that $s-r\left( u+%
\frac{3}{10}\left( q-3\right) \right) \equiv -q\left( s-rv\right) $ $\left(
mod\text{ }rn\right) $ or $s-r\left( u+\frac{3}{10}\left( q-3\right) \right)
\equiv -q\left( s+rv\right) $ $\left( mod\text{ }rn\right) .$

\textbf{Case 1:} Let $s-r\left( u+\frac{3}{10}\left( q-3\right) \right)
\equiv -q\left( s-rv\right) $ $\left( mod\text{ }rn\right) .$ It follows
that $s\equiv \left( u+\frac{3}{10}\left( q-3\right) +qv\right) \ \left( mod%
\text{ }n\right) .$ We know that $s\equiv 0\ \left( mod\text{ }n\right) .$
Then, $2+q\leq u+qv\leq \frac{\left( q-3\right) }{10}+3q\frac{\left(
q-3\right) }{10}=\frac{3\left( q^{2}+1\right) }{10}-\frac{\left( 8q+6\right)
}{10}<3n.$ This is possible only when $u+qv=n$ or $u+qv=2n.$ Let $u+qv=\frac{%
q^{2}+1}{10},$ then $u+qv=q\frac{\left( q-3\right) }{10}+\frac{3q+1}{10}.$
This requires that $u=\frac{3q+1}{10},$ which is in contradiction with $%
2\leq u\leq \frac{q-3}{10}.$ Let $u+qv=2\frac{q^{2}+1}{10},$ then $u+qv=2q%
\frac{\left( q-3\right) }{10}+2\frac{3q+1}{10}.$ This requires that $u=\frac{%
3q+1}{10},$ which is in contradiction with $2\leq u\leq \frac{q-3}{10}.$

\textbf{Case 2:} Let $s-r\left( u+\frac{3}{10}\left( q-3\right) \right)
\equiv -q\left( s+rv\right) $ $\left( mod\text{ }rn\right) .$ Then,
\[
s\equiv \left( u+\frac{3}{10}\left( q-3\right) -qv\right) \ \left( mod\text{
}n\right) .
\]
It follows that $-3n<-\frac{\left( 3q^{2}+3\right) }{10}+\frac{9q+23}{10}%
\leq u-qv\leq \frac{q-3}{10}<0,$ This is possible only when $u-qv=-n$ or $%
u-qv=-2n.$ Let $u-qv=-\frac{q^{2}+1}{10},$ then $u-qv=-q\frac{\left(
q-3\right) }{10}-\frac{3q+1}{10}.$ This requires that $u=-\frac{3q+1}{10},$
which contradicts with $2\leq u\leq \frac{q-3}{10}.$ Let $u-qv=-2\frac{%
q^{2}+1}{10},$ then $u-qv=-2q\frac{\left( q-3\right) }{10}-\frac{2\left(
3q+1\right) }{10}.$ This requires that $u=-\frac{2\left( 3q+1\right) }{10},$
which is in contradiction with $2\leq u\leq \frac{q-3}{10}.$

The above discussions show that
\[
\left( \cup _{j=\frac{3}{10}\left( q-3\right) +1}^{\frac{3}{10}\left(
q-3\right) +\lambda }C_{s-rj}\right) \cap -q\left( \cup _{j=1}^{\frac{3}{10}%
\left( q-3\right) }C_{s-rj}\right) =C_{s-r\frac{3q+1}{10}}.
\]%
We have $-q\left( s-r\frac{3q+1}{10}\right) \equiv s-r\frac{\left(
q-3\right) }{10}$ $\left( mod\text{ }rn\right) .$ This fact says that

\[
-q\left( \left( \cup _{j=\frac{3}{10}\left( q-3\right) +1}^{\frac{3}{10}%
\left( q-3\right) +\lambda }C_{s-rj}\right) \cap -q\left( \cup _{j=1}^{\frac{%
3}{10}\left( q-3\right) }C_{s-rj}\right) \right) =-qC_{s-r\frac{3q+1}{10}%
}=C_{s-r\frac{\left( q-3\right) }{10}},
\]%
and it follows that
\[
\left( \cup _{j=1}^{\frac{3}{10}\left( q-3\right) }C_{s-rj}\right) \cap
-q\left( \cup _{j=\frac{3}{10}\left( q-3\right) +1}^{\frac{3}{10}\left(
q-3\right) +\lambda }C_{s-rj}\right) =C_{s-r\frac{\left( q-3\right) }{10}}.
\]

Contrary to the our claim, suppose that%
\begin{eqnarray*}
&&\left( \cup _{j=\frac{3}{10}\left( q-3\right) +1}^{\frac{3}{10}\left(
q-3\right) +\lambda }C_{s-rj}\right) \cap -q\left( \cup _{j=\frac{3}{10}%
\left( q-3\right) +1}^{\frac{3}{10}\left( q-3\right) +\lambda
}C_{s-rj}\right) \\
&=&\left( \cup _{j=2}^{\lambda }C_{s-r\left( j+\frac{3}{10}\left( q-3\right)
\right) }\right) \cap -q\left( \cup _{j=1}^{\lambda }C_{s-r\left( j+\frac{3}{%
10}\left( q-3\right) \right) }\right) \neq \emptyset ,
\end{eqnarray*}%
where $1\leq \lambda \leq \frac{q-3}{10}.$ If $\left( \cup _{j=2}^{\lambda
}C_{s-r\left( j+\frac{3}{10}\left( q-3\right) \right) }\right) \cap -q\left(
\cup _{j=1}^{\lambda }C_{s-r\left( j+\frac{3}{10}\left( q-3\right) \right)
}\right) \neq \emptyset ,$ then there exists two integers $a^{\prime }$ and $%
a^{\prime \prime },$ where $1\leq a^{\prime },a^{\prime \prime }\leq \frac{%
q-3}{10},$ such that $s-r\left( a^{\prime }+\frac{3}{10}\left( q-3\right)
\right) \equiv -q\left( s-ra^{\prime \prime }\right) $ $\left( mod\text{ }%
rn\right) $ or $s-r\left( a^{\prime }+\frac{3}{10}\left( q-3\right) \right)
\equiv -q\left( s+ra^{\prime \prime }\right) $ $\left( mod\text{ }rn\right)
. $

\textbf{Case 1:} Let $s-r\left( a^{\prime }+\frac{3}{10}\left( q-3\right)
\right) \equiv -q\left( s-ra^{\prime \prime }\right) $ $\left( mod\text{ }%
rn\right) .$ It follows that $s\equiv \left( a^{\prime }+\frac{3}{10}\left(
q-3\right) +qa^{\prime \prime }\right) \ \left( mod\text{ }n\right) .$ We
know that $s\equiv 0\ \left( mod\text{ }n\right) .$ Then, $2+q\leq a^{\prime
}+qa^{\prime \prime }\leq \frac{\left( q-3\right) }{10}+3q\frac{\left(
q-3\right) }{10}=\frac{3\left( q^{2}+1\right) }{10}-\frac{\left( 8q+6\right)
}{10}<3n.$ This is possible only when $a^{\prime }+qa^{\prime \prime }=n$ or
$a^{\prime }+qa^{\prime \prime }=2n.$ Let $a^{\prime }+qa^{\prime \prime }=%
\frac{q^{2}+1}{10},$ then $a^{\prime }+qa^{\prime \prime }=q\frac{\left(
q-3\right) }{10}+\frac{3q+1}{10}.$ This requires that $a^{\prime }=\frac{3q+1%
}{10},$ which is in contradiction with $2\leq a^{\prime }\leq \frac{q-3}{10}%
. $ Let $a^{\prime }+qa^{\prime \prime }=2\frac{q^{2}+1}{10},$ then $%
a^{\prime }+qa^{\prime \prime }=2q\frac{\left( q-3\right) }{10}+2\frac{3q+1}{%
10}.$ This requires that $a^{\prime }=\frac{3q+1}{10},$ which is in
contradiction with $2\leq a^{\prime }\leq \frac{q-3}{10}.$

\textbf{Case 2:} Let $s-r\left( a^{\prime }+\frac{3}{10}\left( q-3\right)
\right) \equiv -q\left( s+ra^{\prime \prime }\right) $ $\left( mod\text{ }%
rn\right) .$ Then, $s\equiv \left( a^{\prime }+\frac{3}{10}\left( q-3\right)
-qa^{\prime \prime }\right) \ \left( mod\text{ }n\right) .$ It follows that $%
-3n<-\frac{\left( 3q^{2}+3\right) }{10}+\frac{9q+23}{10}\leq a^{\prime
}-qa^{\prime \prime }\leq \frac{q-3}{10}<0,$ This is possible only when $%
a^{\prime }-qa^{\prime \prime }=-n$ or $a^{\prime }-qa^{\prime \prime }=-2n.$
Let $a^{\prime }-qa^{\prime \prime }=-\frac{q^{2}+1}{10},$ then $a^{\prime
}-qa^{\prime \prime }=-q\frac{\left( q-3\right) }{10}-\frac{3q+1}{10}.$ This
requires that $a^{\prime }=-\frac{3q+1}{10},$ which contradicts with $2\leq
a^{\prime }\leq \frac{q-3}{10}.$ Let $a^{\prime }-qa^{\prime \prime }=-2%
\frac{q^{2}+1}{10},$ then $a^{\prime }-qa^{\prime \prime }=-2q\frac{\left(
q-3\right) }{10}-\frac{2\left( 3q+1\right) }{10}.$ This requires that $%
a^{\prime }=-\frac{2\left( 3q+1\right) }{10},$ which is in contradiction
with $2\leq a^{\prime }\leq \frac{q-3}{10}.$

This means that%
\[
\left( \cup _{j=\frac{3}{10}\left( q-3\right) +1}^{\frac{3}{10}\left(
q-3\right) +\lambda }C_{s-rj}\right) \cap -q\left( \cup _{j=\frac{3}{10}%
\left( q-3\right) +1}^{\frac{3}{10}\left( q-3\right) +\lambda
}C_{s-rj}\right) =\emptyset .
\]

From Lemma \ref{lm1},\ we have $c=5,$ and by Theorem \ref{th1}, there exists
EAQMDS codes with parameters
\[
\llbracket n,n-\frac{6}{5}\left( q-3\right) -4\lambda +3,\frac{3}{5}\left(
q-3\right) +2\lambda +2;5\rrbracket_{q},
\]%
where $1\leq \lambda \leq \frac{q-3}{10}.$
\end{proof}

\begin{example}
We present some parameters of EAQMDS codes obtained from Theorem \ref{th2*}
in Table \ref{table1*}.
\end{example}

\begin{table}[tbp]
\caption{Some EAQMDS codes obtained by Theorem \protect\ref{th2*}.}
\label{table1*}\centering
\begin{tabular}{cc}
\hline
$q$ & $\llbracket n,n-\frac{6}{5}\left( q-3\right) -4\lambda +3,\frac{3}{5}%
\left( q-3\right) +2\lambda +2;5\rrbracket_{q}$ \\ \hline
$13$ & $\llbracket17,4,10{;5}\rrbracket_{13}$ \\
$23$ & $\llbracket53,28,16{;5}\rrbracket_{23}$ \\
$23$ & $\llbracket53,24,18{;5}\rrbracket_{23}$ \\
$43$ & $\llbracket185,136,28{;5}\rrbracket_{43}$ \\
$43$ & $\llbracket185,132,30{;5}\rrbracket_{43}$ \\
$43$ & $\llbracket185,128,32{;5}\rrbracket_{43}$ \\
$43$ & $\llbracket185,124,34{;5}\rrbracket_{43}$ \\
$53$ & $\llbracket281,220,34{;5}\rrbracket_{53}$ \\
$53$ & $\llbracket281,216,36{;5}\rrbracket_{53}$ \\
$53$ & $\llbracket281,212,38{;5}\rrbracket_{53}$ \\
$53$ & $\llbracket281,208,40{;5}\rrbracket_{53}$ \\
$53$ & $\llbracket281,204,42{;5}\rrbracket_{53}$ \\ \hline
\end{tabular}%
\end{table}

Let $Z$ and $\overline{Z}$ be the sets defined in Lemma \ref{lm2*} and Lemma %
\ref{lm3*} respectively. Define $T=Z\cup \overline{Z}\cup F=\cup _{j=0}^{%
\frac{4\left( q-3\right) }{10}+\lambda },$ where $F=\cup _{j=\frac{3\left(
q-3\right) }{10}+1}^{\frac{3\left( q-3\right) }{10}+1+\lambda }$ and $1\leq
\lambda \leq \frac{q-3}{10}.$ By combining results of Lemma \ref{lm2*},
Lemma \ref{lm3*}, Lemma \ref{lm4*} and Theorem \ref{th2*} we see that the
number of entangled states $c=9.$ Based on this fact, in Theorem \ref{th3*},
we give a class of EAQMDS codes of length $n=\frac{q^{2}+1}{10}$ and with
entangled states $c=9.$

\begin{theorem}
\label{th3*} Let $q\equiv 3\left( mod\text{ }10\right) .$ If $\mathcal{C}$
is an $q^{2}$-ary $\alpha $-constacyclic code of length $n$ with defining
set $T=Z\cup \overline{Z}=\cup _{j=0}^{\frac{4\left( q-3\right) }{10}%
+\lambda },$ then there exists EAQMDS codes with parameters
\[
\llbracket n,n-\frac{8}{5}\left( q-3\right) -4\lambda +7,\frac{4}{5}\left(
q-3\right) +2\lambda +2;9\rrbracket_{q},
\]%
where $1\leq \lambda \leq \frac{\left( q-3\right) }{10}.$
\end{theorem}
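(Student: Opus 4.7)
The plan is to imitate the structure of the proof of Theorem \ref{th2*}, pushing the intersection count $|T_1| = |T\cap (-qT)|$ from $5$ up to $9$ by exploiting the two extra cosets promised by Lemma \ref{lm4*}. First I would check that $\mathcal{C}$ is a classical MDS code; then I would do the combinatorial computation of $T_1$ via the decomposition $T = Z\cup \overline{Z}\cup F$; and finally I would plug into the Hermitian method of Theorem \ref{th1} to read off the EAQMDS parameters, checking equality in the entanglement-assisted Singleton bound of Proposition \ref{prop4}.

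For the MDS step, I would compute $|T| = 1 + 2\bigl(\tfrac{4(q-3)}{10}+\lambda\bigr) = \tfrac{4(q-3)}{5} + 2\lambda + 1$, and observe that the elements $1+rj\in T$ trace out $\tfrac{4(q-3)}{5} + 2\lambda + 1$ consecutive values of the $j$-index in $\mathcal{O}_{rn}$, centered at $\tfrac{q-1}{2}$. Proposition \ref{prop1} then gives $d \geq \tfrac{4(q-3)}{5} + 2\lambda + 2$, and Proposition \ref{prop3} forces equality, so $\mathcal{C}$ is a classical MDS $\bigl[n,\,n - \tfrac{4(q-3)}{5} - 2\lambda - 1,\,\tfrac{4(q-3)}{5} + 2\lambda + 2\bigr]_{q^{2}}$ code.

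For the main combinatorial step, I would expand $T\cap(-qT)$ block-by-block. Theorem \ref{th2*} has already identified $Z\cap(-qZ) = C_{s}\cup C_{s-r(q-3)/10}\cup C_{s-r(3q+1)/10}$, contributing $5$ elements. Lemma \ref{lm3*} gives $\overline{Z}\cap(-q\overline{Z}) = \emptyset$, and items (1)--(3) of Lemma \ref{lm4*} show that the new collision $-qC_{s-r(q+2)/5} = C_{s-r(2q-1)/5}$ has both cosets lying inside $T$ exactly when $\lambda \geq 1$, contributing $4$ further elements. It then remains to verify that the cross-intersections $Z\cap(-q\overline{Z})$, $(\overline{Z}\cup F)\cap(-qZ)$, $\overline{Z}\cap(-qF)$, and $F\cap(-qF)$ introduce no additional cosets. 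This is done by the same two-case split $s - rj \equiv -q(s \mp rk)\ (\bmod\ rn)$, reducing to Diophantine equations $j \pm qk \equiv \ell n\ (\bmod\ n)$ with $\ell \in\{\pm 1, \pm 2, \pm 3\}$ and $1 \leq j, k \leq \tfrac{4(q-3)}{10}+\lambda$, and checking via the division algorithm that the only in-range solutions are precisely those corresponding to the four already-identified non-empty pieces. This yields $|T_1| = 9$.

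The hard part will be this final bookkeeping: the widened ranges for $j$ and $k$ now admit $\ell$ up to $3$ in absolute value, so there are roughly twice as many Diophantine sub-cases as in Theorem \ref{th2*}, and the upper cap $\lambda \leq \tfrac{q-3}{10}$ is chosen precisely to prevent a spurious solution from entering the admissible window. Once $c = |T_1| = 9$ is confirmed, Lemma \ref{lm1} combined with Theorem \ref{th1} delivers an EAQEC code with parameters $\llbracket n,\, 2k - n + 9,\, d;\, 9\rrbracket_{q}$, which a direct substitution simplifies to $\llbracket n,\, n - \tfrac{8}{5}(q-3) - 4\lambda + 7,\, \tfrac{4}{5}(q-3) + 2\lambda + 2;\, 9\rrbracket_{q}$. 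Plugging these values into $n + c - k = 2(d-1)$ shows equality, and since $d \leq (n+2)/2$ for $q$ in the stated range, Proposition \ref{prop4} certifies that $\mathcal{C}$ is EAQMDS.
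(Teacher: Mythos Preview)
Your proposal is correct and follows essentially the same route as the paper. The paper's own proof of Theorem \ref{th3*} is extremely brief: it computes $|T|=\tfrac{4}{5}(q-3)+2\lambda+1$ to get the classical MDS parameters, and then simply states that ``the remaining part of the proof is obtained directly from Theorem \ref{th2*}, Lemma \ref{lm3*} and Lemma \ref{lm4*}.'' Your plan unpacks exactly what that sentence means---identifying the five cosets $C_s$, $C_{s-r(q-3)/10}$, $C_{s-r(3q+1)/10}$, $C_{s-r(q+2)/5}$, $C_{s-r(2q-1)/5}$ in $T_1$ via the cited lemmas, checking the remaining cross-terms vanish by the standard two-case Diophantine argument, and then invoking Lemma \ref{lm1} and Theorem \ref{th1}---so you are supplying the details the paper leaves implicit rather than doing anything structurally different.
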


\begin{proof}
Since the defining set of an $\alpha $-constacyclic code $\mathcal{C}$ of
length $n$ is $T=\cup _{j=0}^{\frac{4\left( q-3\right) }{10}+\lambda }$ and
the cardinality of $T,$ which we denoted by $\left\vert T\right\vert ,$ is $%
\left\vert T\right\vert =\frac{4}{5}\left( q-3\right) +2\lambda +1,$ then by
Proposition \ref{prop1} and \ref{prop3}, $\mathcal{C}$ is a $q^{2}$-ary MDS $%
\alpha $-constacyclic code with parameters
\[
\left[ n,n-\left( \frac{4}{5}\left( q-3\right) +2\lambda +1\right) ,\frac{4}{%
5}\left( q-3\right) +2\lambda +2\right] _{q^{2}}.
\]%
The remaining part of the proof is obtained directly from Theorem \ref{th2*}%
, Lemma \ref{lm3*} and Lemma \ref{lm4*}..
\end{proof}

\begin{example}
We present some parameters of EAQMDS codes obtained from Theorem \ref{th3*}
in Table \ref{table2*}.
\end{example}

\begin{table}[tbp]
\caption{Some EAQMDS codes obtained by Theorem \protect\ref{th2*}.}
\label{table2*}\centering
\begin{tabular}{cc}
\hline
$q$ & $\llbracket n,n-\frac{8}{5}\left( q-3\right) -4\lambda +7,\frac{4}{5}%
\left( q-3\right) +2\lambda +2;9\rrbracket_{q}$ \\ \hline
$13$ & $\llbracket17,4,12{;9}\rrbracket_{13}$ \\
$23$ & $\llbracket53,24,20{;9}\rrbracket_{23}$ \\
$23$ & $\llbracket53,20,22{;9}\rrbracket_{23}$ \\
$43$ & $\llbracket185,124,36{;9}\rrbracket_{43}$ \\
$43$ & $\llbracket185,120,38{;9}\rrbracket_{43}$ \\
$43$ & $\llbracket185,116,40{;9}\rrbracket_{43}$ \\
$43$ & $\llbracket185,112,42{;9}\rrbracket_{43}$ \\
$53$ & $\llbracket281,204,44{;9}\rrbracket_{53}$ \\
$53$ & $\llbracket281,200,46{;9}\rrbracket_{53}$ \\
$53$ & $\llbracket281,196,48{;9}\rrbracket_{53}$ \\
$53$ & $\llbracket281,192,50{;9}\rrbracket_{53}$ \\
$53$ & $\llbracket281,188,52{;9}\rrbracket_{53}$ \\ \hline
\end{tabular}%
\end{table}

\section{Construction of EAQMDS codes from constacyclic codes ($q$ is even)}

Throughout this section, we set $q=2^{e},$ $r=q+1,$ $s=\frac{q^{2}-q}{2}.$
The multiplicative order of $q$ modulo $n$ is denoted by $ord_{n}\left(
q\right) .$ Let $\alpha \in \mathbb{F}_{q^{2}}^{\ast }$ be a primitive $%
r^{th}$ root of unity.

\subsection{EAQMDS codes of length $n=\frac{q^{2}+1}{5},$ where $q\equiv
2\left( mod\text{ }10\right) $}

Note that $n=\frac{q^{2}+1}{5},$ and so $ord_{rn}\left( q^{2}\right) =2.$
This means that each $q^{2}$-cyclotomic coset modulo $rn$ includes one or
two elements. Let $q\equiv 2\left( mod\text{ }10\right) $ and $s=\frac{%
q^{2}-q}{2}.$ It is easy to see that the $q^{2}$-cyclotomic cosets modulo $%
rn $ containing some integers from $1$ to $rn$ are $C_{s-rj}=\left\{
s-rj,s+r\left( j+1\right) \right\} ,$ where $0\leq j\leq \frac{q-2}{2}.$

\begin{lemma}
\label{lm2} Let $q\equiv 2\left( mod\text{ }10\right) .$ If $\mathcal{C}$ is
a $q^{2}$-ary constacyclic code of length $n$ and defining set $Z=\cup
_{j=0}^{\lambda }C_{s-rj},$ where $0\leq \lambda \leq \frac{3\left(
q-2\right) }{10}-1,$ then $\mathcal{C}^{\perp _{H}}\subseteq \mathcal{C}.$
\end{lemma}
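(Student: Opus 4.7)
The plan is to invoke Proposition \ref{prop2}, which reduces the Hermitian dual-containing property to checking that $Z \cap (-qZ) = \emptyset$. I would assume the contrary: there exist integers $j, k$ with $0 \le j, k \le \lambda \le \frac{3(q-2)}{10} - 1$ such that some element of $C_{s-rj}$ is congruent modulo $rn$ to $-q$ times some element of $C_{s-rk}$. Since each coset is $C_{s-rk} = \{s - rk,\; s + r(k+1)\}$, the assumption splits into the two cases
\[
s - rj \equiv -q(s - rk) \pmod{rn} \qquad \text{or} \qquad s - rj \equiv -q\bigl(s + r(k+1)\bigr) \pmod{rn}.
\]

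Before case analysis I would record the arithmetic identities special to this setting. From $5n = q^{2}+1$ we get $q^{2} \equiv -1 \pmod{n}$, and from $2s = q^{2}-q$ we get $2s \equiv -(q+1) = -r \pmod{n}$. Also $\gcd(r,n) = 1$, because $r = q+1$ is odd (as $q$ is even) and divides $\gcd(q+1, q^{2}+1) = \gcd(q+1, 2) = 1$. These facts allow me, after collecting $(1+q)s = rs$ on the left in each congruence $\pmod{rn}$, to cancel the common factor $r$ and pass to a congruence $\pmod{n}$.

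In Case 1 the congruence rearranges to $rs \equiv r(j + qk) \pmod{rn}$, hence $s \equiv j + qk \pmod{n}$, i.e. $2(j + qk) \equiv -r \pmod{n}$. The bound on $\lambda$ forces $0 \le j + qk \le (q+1)\lambda < 2n$, so there are only two candidate values of $j+qk$ consistent with $2(j+qk) + (q+1) \in \{n, 2n\}$. For each candidate I would apply the division algorithm with divisor $q$ to extract $j$ and show that $j$ falls outside the admissible range $[0,\, \tfrac{3(q-2)}{10}-1]$, yielding a contradiction. Case 2 proceeds identically, with the rearrangement $s \equiv j - q(k+1) \pmod{n}$, i.e. $2(j - q(k+1)) \equiv -r \pmod{n}$, and the range estimate $-2n < j - q(k+1) < 0$ producing a similar short list of candidates that each violate the bound on $j$.

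The main obstacle will be the characteristic-two parity bookkeeping: since $s$ carries a factor $\tfrac{1}{2}$ and $r$ is odd, one has to invert $2$ modulo the odd integer $n$ (or equivalently keep the $2$ in front and track the target value $2(j+qk)$ directly), and then verify carefully in each subcase that the remainder produced by the division algorithm is neither nonnegative and bounded by $\tfrac{3(q-2)}{10} - 1$. Once this arithmetic is done the argument closes in the same clean way as Lemmas \ref{lm5*} and \ref{lm2*}.
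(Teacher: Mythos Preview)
Your proposal is correct and follows essentially the same route as the paper: invoke Proposition~\ref{prop2}, assume $Z\cap(-qZ)\neq\emptyset$, split into the two coset cases, cancel the factor $r$ to obtain a congruence for $j\pm qk$ modulo $n$, bound the range, and use the division algorithm to force $j$ (or $k$) outside $[0,\tfrac{3(q-2)}{10}-1]$. One small bookkeeping point: from $0\le j+qk<2n$ you get $2(j+qk)+r$ among the multiples of $n$ in $(0,4n+r)$, and parity rules out the even multiples, so the genuine candidates are $n$ and $3n$ (not $n$ and $2n$); the $3n$ case still falls outside the range $(q+1)\lambda$, so the contradiction closes as you intend.
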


\begin{proof}
By Proposition \ref{prop2}, it is sufficient to prove that $Z\cap \left(
-qZ\right) =\emptyset .$ Assume that $Z\cap \left( -qZ\right) \neq \emptyset
.$ Then, there exists two integers $j,k,$ where $0\leq j,k\leq \frac{3\left(
q-2\right) }{10}-1,$ such that $s-rj\equiv -q\left( s-rk\right) \left( mod%
\text{ }rn\right) $ or $s-rj\equiv -q\left( s+rk\right) $ $\left( mod\text{ }%
rn\right) .$

\textbf{Case 1:} Let $s-rj\equiv -q\left( s-rk\right) \left( mod\text{ }%
rn\right) .$ This is equivalent to $s\equiv j+qk\left( mod\text{ }n\right) .$
As $s\equiv r\frac{\left( n-1\right) }{2}\left( mod\text{ }n\right) ,$ we
get $j+qk\equiv r\frac{\left( n-1\right) }{2}\left( mod\text{ }n\right) .$
Since $0\leq j,k\leq \frac{3q-16}{10},$ $0\leq j+qk\leq \frac{3q-16}{10}+q%
\frac{3q-16}{10}<\left( q+1\right) \frac{\left( q-2\right) }{10}<3\left(
q+1\right) \frac{\left( n-1\right) }{2}.$ Then, we have that $j+qk\equiv r%
\frac{\left( n-1\right) }{2}\left( mod\text{ }n\right) $ if and only if $%
j+qk=r\frac{\left( n-1\right) }{2}.$ Since $r\frac{\left( n-1\right) }{2}=r%
\frac{\frac{q^{2}+1}{5}-1}{2}=r\frac{q^{2}-4}{10}=r\frac{q^{2}-2q+2q-4}{10}%
=qr\frac{\left( q-2\right) }{10}+2r\frac{\left( q-2\right) }{10},$ we obtain
$j+qk=qr\frac{\left( q-2\right) }{10}+2r\frac{\left( q-2\right) }{10}.$ By
division algorithm, $j=2r\frac{\left( q-2\right) }{10}.$ This is a
contradiction, because $0\leq j\leq \frac{3\left( q-2\right) }{10}-1.$

\textbf{Case 2:} Let $s-rj\equiv -q\left( s+rk\right) \left( mod\text{ }%
rn\right) .$ This is equivalent to $s\equiv j-qk\left( mod\text{ }n\right) .$
By $s\equiv r\frac{\left( n-1\right) }{2}\left( mod\text{ }n\right) ,$ we
have $j-qk\equiv r\frac{\left( n-1\right) }{2}\left( mod\text{ }n\right) .$
Since $0\leq j,k\leq \frac{3\left( q-2\right) }{10}-1,$ $-3\left( q+1\right)
\frac{\left( n-1\right) }{2}<-q\frac{3q-16}{10}\leq j-qk\leq \frac{3q-16}{10}%
<\left( q+1\right) \frac{\left( n-1\right) }{2}.$ We have that $j-qk\equiv r%
\frac{\left( n-1\right) }{2}\left( mod\text{ }n\right) $ if and only if $%
j-qk=-r\frac{\left( n+1\right) }{2}.$ Since $-r\frac{\left( n+1\right) }{2}%
=-r\frac{\frac{q^{2}+1}{5}+1}{2}=-r\frac{q^{2}+6}{10}=-r\frac{q^{2}-2q+2q+6}{%
10}=-qr\frac{\left( q-2\right) }{10}-r\frac{\left( q+3\right) }{5},$ we
obtain $j-qk=-qr\frac{\left( q-2\right) }{10}-r\frac{\left( q+3\right) }{5}.$
By division algorithm, $k=\frac{\left( q+1\right) \left( q-2\right) }{10}.$
This is a contradiction, because $0\leq k\leq \frac{3\left( q-2\right) }{10}%
-1.$
\end{proof}

\begin{theorem}
\label{th2} Let $q\equiv 2\left( mod\text{ }10\right) .$ If $\mathcal{C}$ is
an $q^{2}$-ary $\alpha $-constacyclic code of length $n$ with defining set $%
Z=$ $\cup _{j=0}^{\frac{3\left( q-2\right) }{10}-1+\lambda }C_{s-rj},$ then
there exists EAQMDS codes with parameters
\[
\llbracket n,n-\frac{6}{5}\left( q-2\right) -4\lambda +4,\frac{3}{5}\left(
q-2\right) +2\lambda +1;4\rrbracket_{q},
\]%
where $1\leq \lambda \leq \frac{q+3}{5}.$
\end{theorem}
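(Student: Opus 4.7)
The plan is to follow the same template used in Theorem \ref{th4*} and Theorem \ref{th2*} for the odd case, adapted to the even characteristic and to the new length $n=\frac{q^2+1}{5}$. First I would verify the MDS parameters of the underlying classical constacyclic code. Each coset $C_{s-rj}$ for $0\le j\le \frac{q-2}{2}$ has exactly two elements, so the defining set $Z=\cup_{j=0}^{\frac{3(q-2)}{10}-1+\lambda}C_{s-rj}$ has cardinality $\frac{3(q-2)}{5}+2\lambda$ consisting of $\frac{3(q-2)}{5}+2\lambda$ consecutive elements in the $\alpha$-arithmetic progression. Proposition~\ref{prop1} then yields minimum distance at least $\frac{3(q-2)}{5}+2\lambda+1$, and the Singleton bound of Proposition~\ref{prop3} forces equality. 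So $\mathcal{C}$ is an $\left[n,\,n-\frac{3(q-2)}{5}-2\lambda,\,\frac{3(q-2)}{5}+2\lambda+1\right]_{q^2}$ MDS constacyclic code.

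The heart of the argument is to compute $Z_1=Z\cap(-qZ)$ and show $|Z_1|=4$, so that Lemma~\ref{lm1} gives $c=4$ and Theorem~\ref{th1} delivers the claimed EAQMDS parameters $\llbracket n,\,2k-n+c,\,d;\,c\rrbracket_q$. I would decompose $Z=A\cup B$ where $A=\cup_{j=0}^{\frac{3(q-2)}{10}-1}C_{s-rj}$ is the ``Hermitian dual-containing'' part handled by Lemma~\ref{lm2}, and $B=\cup_{j=\frac{3(q-2)}{10}}^{\frac{3(q-2)}{10}-1+\lambda}C_{s-rj}$ is the ``overshoot'' part. Then
\[
Z_1=(A\cap -qA)\cup(A\cap -qB)\cup(B\cap -qA)\cup(B\cap -qB).
\]
Lemma~\ref{lm2} kills the first summand. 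For the fourth summand, I would mimic the $\lambda$-vs-$\lambda$ analysis from the proof of Theorem~\ref{th2*}: suppose $s-r(j_1+\frac{3(q-2)}{10})\equiv -q(s\pm r(j_2+\frac{3(q-2)}{10}))\pmod{rn}$ with $0\le j_1,j_2\le\lambda-1\le\frac{q+3}{5}-1$, reduce modulo $n$ using $s\equiv r\frac{n-1}{2}\pmod n$, bound $j_1\pm qj_2$ to isolate the residue class, and then invoke the division algorithm to reach a contradiction with the range of $\lambda$. So $B\cap -qB=\emptyset$.

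For the nonempty cross terms, the expectation (consistent with the earlier theorems) is
\[
B\cap(-qA)=C_{s-r\frac{3q+1}{10}-\text{shift}},\qquad A\cap(-qB)=-q\bigl(B\cap(-qA)\bigr),
\]
a single coset each, thereby contributing $2+2=4$ elements to $Z_1$. To pin down which coset this is, I would compute $-q\bigl(s-r\alpha\bigr)\pmod{rn}$ for the candidate index $\alpha=\frac{3(q-2)}{10}$ using $-qs\equiv s+r\cdot(\text{something})\pmod{rn}$ (analogous to the three identities of Lemma~\ref{lm7*}), and then show, by the same discriminating argument as above, that no other index $j$ in the range $\frac{3(q-2)}{10}+1\le j\le\frac{3(q-2)}{10}-1+\lambda$ can produce an element of $-qA$. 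Verifying that these two cosets are distinct and disjoint (so that they indeed contribute $4$ elements and not $2$) is one subtlety; since they come from $C_{s-rj}$-form cosets with different $|j|$'s, this should be immediate in the even case where no singleton coset $C_s$ appears.

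The main obstacle will be the cross-term bookkeeping: one has to pick the right index $\alpha$ that yields the ``collision'' coset, and run four case analyses (the $\pm q$ choices in each cross term) where the bounds on $j\pm qk$ straddle several multiples of $n$ because $\lambda$ can now be as large as $\frac{q+3}{5}$, which is larger than the corresponding ranges in Theorems~\ref{th4*} and~\ref{th2*}. Provided each case boils down via the division algorithm to an integer equation forcing $j$ out of its permitted range (except for the unique pair that produces the single collision coset), we obtain $|Z_1|=4$, hence $c=4$, and Theorem~\ref{th1} combined with the classical MDS parameters above completes the proof.
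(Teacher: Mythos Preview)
Your proposal is correct and follows essentially the same route as the paper: decompose $Z=A\cup B$ with $A=\cup_{j=0}^{\frac{3(q-2)}{10}-1}C_{s-rj}$ and $B$ the overshoot, invoke Lemma~\ref{lm2} for $A\cap(-qA)=\emptyset$, show $B\cap(-qB)=\emptyset$ by bounding, and identify the two cross-term collision cosets as $C_{s-r\frac{3(q-2)}{10}}$ and $C_{s-r\frac{(q-2)}{10}}=-qC_{s-r\frac{3(q-2)}{10}}$, giving $|Z_1|=4$. The only tactical difference is that the paper handles the even-case contradictions by direct range bounds (showing the relevant quantity lies strictly below $s$ or in an interval missing the required residue class) rather than the division-algorithm extraction you sketch, but either method works here.
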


\begin{proof}
Since the defining set of $\alpha $-constacyclic code $\mathcal{C}$ of
length $n$ is $Z=$ $\cup _{j=0}^{\frac{3}{10}\left( q-2\right) -1+\lambda
}C_{s-rj},$ and the cardinality of $Z,$ which we denoted by $\left\vert
Z\right\vert ,$ is $\left\vert Z\right\vert =\frac{3}{5}\left( q-2\right)
+2\lambda ,$ then by Proposition \ref{prop1} and \ref{prop3}, $\mathcal{C}$
is an $q^{2}$-ary MDS $\alpha $-constacyclic code with parameters
\[
\left[ n,n-\left( \frac{3}{5}\left( q-2\right) +2\lambda \right) ,\frac{3}{5}%
\left( q-2\right) +2\lambda +1\right] _{q^{2}}.
\]%
Hence, we have the following:%
\begin{eqnarray*}
Z_{1} &=&Z\cap (-qZ)= \\
&&\left( \left( \cup _{j=0}^{\frac{3}{10}\left( q-2\right)
-1}C_{s-rj}\right) \cup \left( \cup _{j=\frac{3}{10}\left( q-2\right) }^{%
\frac{3}{10}\left( q-2\right) -1+\lambda }C_{s-rj}\right) \right) \cap \\
&&\left( -q\left( \cup _{j=0}^{\frac{3}{10}\left( q-2\right)
-1}C_{s-rj}\right) \cup -q\left( \cup _{j=\frac{3}{10}\left( q-2\right) }^{%
\frac{3}{10}\left( q-2\right) -1+\lambda }C_{s-rj}\right) \right) \\
&=&\left( \left( \cup _{j=0}^{\frac{3}{10}\left( q-2\right)
-1}C_{s-rj}\right) \cap -q\left( \cup _{j=0}^{\frac{3}{10}\left( q-2\right)
-1}C_{s-rj}\right) \right) \cup \\
&&\left( \left( \cup _{j=0}^{\frac{3}{10}\left( q-2\right)
-1}C_{s-rj}\right) \cap -q\left( \cup _{j=\frac{3}{10}\left( q-2\right) }^{%
\frac{3}{10}\left( q-2\right) -1+\lambda }C_{s-rj}\right) \right) \\
&&\cup \left( \left( \cup _{j=\frac{3}{10}\left( q-2\right) }^{\frac{3}{10}%
\left( q-2\right) -1+\lambda }C_{s-rj}\right) \cap -q\left( \cup _{j=0}^{%
\frac{3}{10}\left( q-2\right) -1}C_{s-rj}\right) \right) \\
&&\cup \left( \left( \cup _{j=\frac{3}{10}\left( q-2\right) }^{\frac{3}{10}%
\left( q-2\right) -1+\lambda }C_{s-rj}\right) \cap -q\left( \cup _{j=\frac{3%
}{10}\left( q-2\right) }^{\frac{3}{10}\left( q-2\right) -1+\lambda
}C_{s-rj}\right) \right) .
\end{eqnarray*}%
We claim that
\[
Z_{1}=Z\cap (-qZ)=C_{s-r\frac{3\left( q-2\right) }{10}}\cup C_{s-r\frac{%
\left( q-2\right) }{10}}.
\]%
By Lemma \ref{lm2}, we have $\left( \cup _{j=0}^{\frac{3}{10}\left(
q-2\right) -1}C_{s-rj}\right) \cap -q\left( \cup _{j=0}^{\frac{3}{10}\left(
q-2\right) -1}C_{s-rj}\right) =\emptyset .$

We need to show that%
\[
\left( \cup _{j=0}^{\frac{3}{10}\left( q-2\right) -1}C_{s-rj}\right) \cap
-q\left( \cup _{j=\frac{3}{10}\left( q-2\right) }^{\frac{3}{10}\left(
q-2\right) -1+\lambda }C_{s-rj}\right) =C_{s-r\frac{\left( q-2\right) }{10}%
},
\]%
\[
\left( \cup _{j=\frac{3}{10}\left( q-2\right) }^{\frac{3}{10}\left(
q-2\right) -1+\lambda }C_{s-rj}\right) \cap -q\left( \cup _{j=0}^{\frac{3}{10%
}\left( q-2\right) -1}C_{s-rj}\right) =C_{s-r\frac{3\left( q-2\right) }{10}%
},
\]%
\[
\left( \cup _{j=\frac{3}{10}\left( q-2\right) }^{\frac{3}{10}\left(
q-2\right) -1+\lambda }C_{s-rj}\right) \cap -q\left( \cup _{j=\frac{3}{10}%
\left( q-2\right) }^{\frac{3}{10}\left( q-2\right) -1+\lambda
}C_{s-rj}\right) =\emptyset .
\]

We first show that
\[
\left( \cup _{j=\frac{3}{10}\left( q-2\right) }^{\frac{3}{10}\left(
q-2\right) -1+\lambda }C_{s-rj}\right) \cap -q\left( \cup _{j=0}^{\frac{3}{10%
}\left( q-2\right) -1}C_{s-rj}\right) =C_{s-r\frac{3\left( q-2\right) }{10}%
}.
\]

We have the following:%
\begin{eqnarray*}
&&\left( \cup _{j=\frac{3}{10}\left( q-2\right) }^{\frac{3}{10}\left(
q-2\right) -1+\lambda }C_{s-rj}\right) \cap -q\left( \cup _{j=0}^{\frac{3}{10%
}\left( q-2\right) -1}C_{s-rj}\right) \\
&=&\left( C_{s-r\frac{3\left( q-2\right) }{10}}\cup \left( \cup _{j=\frac{3}{%
10}\left( q-2\right) +1}^{\frac{3}{10}\left( q-2\right) -1+\lambda
}C_{s-rj}\right) \right) \cap -q\left( \cup _{j=0}^{\frac{3}{10}\left(
q-2\right) -1}C_{s-rj}\right) \\
&=&\left( C_{s-r\frac{3\left( q-2\right) }{10}}\cap -q\left( \cup _{j=0}^{%
\frac{3}{10}\left( q-2\right) -1}C_{s-rj}\right) \right) \\
&&\cup \left( \left( \cup _{j=\frac{3}{10}\left( q-2\right) +1}^{\frac{3}{10}%
\left( q-2\right) -1+\lambda }C_{s-rj}\right) \cap -q\left( \cup _{j=0}^{%
\frac{3}{10}\left( q-2\right) -1}C_{s-rj}\right) \right) .
\end{eqnarray*}%
We claim that
\[
\left( C_{s-r\frac{3\left( q-2\right) }{10}}\cap -q\left( \cup _{j=0}^{\frac{%
3}{10}\left( q-2\right) -1}C_{s-rj}\right) \right) =C_{s-r\frac{3\left(
q-2\right) }{10}}
\]%
and%
\[
\left( \cup _{j=\frac{3}{10}\left( q-2\right) +1}^{\frac{3}{10}\left(
q-2\right) -1+\lambda }C_{s-rj}\right) \cap -q\left( \cup _{j=0}^{\frac{3}{10%
}\left( q-2\right) -1}C_{s-rj}\right) =\emptyset ,
\]%
where $1\leq \lambda \leq \frac{q+3}{5}.$

Contrary to the claim, assume that
\begin{eqnarray*}
&&\left( \cup _{j=\frac{3}{10}\left( q-2\right) +1}^{\frac{3}{10}\left(
q-2\right) -1+\lambda }C_{s-rj}\right) \cap -q\left( \cup _{j=0}^{\frac{3}{10%
}\left( q-2\right) -1}C_{s-rj}\right) \\
&=&\left( \cup _{j=2}^{\lambda }C_{s-r\left( j+\frac{3}{10}\left( q-2\right)
-1\right) }\right) \cap -q\left( \cup _{j=0}^{\frac{3}{10}\left( q-2\right)
-1}C_{s-rj}\right) \neq \emptyset ,
\end{eqnarray*}%
where $1\leq \lambda \leq \frac{q+3}{5}.$ If $\left( \cup _{j=2}^{\lambda
}C_{s-r\left( j+\frac{3}{10}\left( q-2\right) -1\right) }\right) \cap
-q\left( \cup _{j=0}^{\frac{3}{10}\left( q-2\right) -1}C_{s-rj}\right) \neq
\emptyset ,$ then there exists two integers $u$ and $v,$ where $1\leq u\leq
\frac{q+3}{5},$ and $0\leq v\leq \frac{3\left( q-4\right) }{10}$ such that $%
s-r\left( u+\frac{3}{10}\left( q-2\right) -1\right) \equiv -q\left(
s-rv\right) $ $\left( mod\text{ }rn\right) $ or $s-r\left( u+\frac{3}{10}%
\left( q-2\right) -1\right) \equiv -q\left( s+r\left( v+1\right) \right) $ $%
\left( mod\text{ }rn\right) .$

\textbf{Case 1:} Let $s-r\left( u+\frac{3}{10}\left( q-2\right) -1\right)
\equiv -q\left( s-rv\right) $ $\left( mod\text{ }rn\right) .$ It follows
that $s\equiv \left( u+\frac{3}{10}\left( q-2\right) +qv-1\right) \ \left(
mod\text{ }n\right) .$ Then, $\frac{3\left( q-2\right) }{10}\leq u+\frac{3}{%
10}\left( q-2\right) +qv-1<\frac{2\left( q+3\right) }{10}+\frac{3}{10}\left(
q-2\right) +q\frac{3\left( q-4\right) }{10}=\frac{3q^{2}-7q}{10}<\frac{%
q^{2}-2q}{2}=s-\frac{q}{2},$ which is in contradiction with $s\equiv \frac{%
n-r}{2}$ $\left( mod\text{ }n\right) .$

\textbf{Case 2:} Let $s-r\left( u+\frac{3}{10}\left( q-2\right) -1\right)
\equiv -q\left( s+r\left( v+1\right) \right) $ $\left( mod\text{ }rn\right)
. $ Then, $s\equiv \left( u+\frac{3}{10}q-2-q\left( v+1\right) -1\right) \
\left( mod\text{ }n\right) .$ It follows that $-q\frac{3\left( q-4\right) }{%
10}+\frac{3\left( q-2\right) }{10}-q\leq u+\frac{3}{10}\left( q-2\right)
-q\left( v+1\right) -1<\frac{2\left( q+3\right) }{10}+\frac{3}{10}\left(
q-2\right) -q=\frac{-q}{2},$ which is a contradiction, since $s\equiv \frac{%
n-r}{2}$ $\left( mod\text{ }n\right) .$

The above discussions show that
\[
\left( \cup _{j=\frac{3}{10}\left( q-2\right) }^{\frac{3}{10}\left(
q-2\right) -1+\lambda }C_{s-rj}\right) \cap -q\left( \cup _{j=0}^{\frac{3}{10%
}\left( q-2\right) -1}C_{s-rj}\right) =C_{s-r\frac{3\left( q-2\right) }{10}}
\]%
Since $-qs\equiv s-r\frac{\left( 3q+1\right) \left( q-2\right) }{10}$ $%
\left( mod\text{ }rn\right) ,$ we have $-q\left( s-r\frac{3\left( q-2\right)
}{10}\right) \equiv s-r\frac{\left( q-2\right) }{10}$ $\left( mod\text{ }%
rn\right) .$ This fact says that

\[
-q\left( \cup _{j=\frac{3}{10}\left( q-2\right) }^{\frac{3}{10}\left(
q-2\right) -1+\lambda }C_{s-rj}\right) \cap -q\left( \cup _{j=0}^{\frac{3}{10%
}\left( q-2\right) -1}C_{s-rj}\right) =-qC_{s-r\frac{3\left( q-2\right) }{10}%
}=C_{s-r\frac{\left( q-2\right) }{10}},
\]%
and it follows that
\[
\left( \cup _{j=0}^{\frac{3}{10}\left( q-2\right) -1}C_{s-rj}\right)
\cap-q\left( \cup _{j=\frac{3}{10}\left( q-2\right) }^{\frac{3}{10}%
\left(q-2\right) -1+\lambda }C_{s-rj}\right) =C_{s-r\frac{\left( q-2\right)
}{10}}.
\]

Contrary to the claim, suppose that%
\begin{eqnarray*}
&&\left( \cup _{j=\frac{3}{10}\left( q-2\right) }^{\frac{3}{10}\left(
q-2\right) -1+\lambda }C_{s-rj}\right) \cap -q\left( \cup _{j=\frac{3}{10}%
\left( q-2\right) }^{\frac{3}{10}\left( q-2\right) -1+\lambda
}C_{s-rj}\right) \\
&=&\left( \cup _{j=1}^{\lambda }C_{s-r\left( j+\frac{3}{10}\left( q-2\right)
-1\right) }\right) \cap -q\left( \cup _{j=1}^{\lambda }C_{s-r\left( j+\frac{3%
}{10}\left( q-2\right) -1\right) }\right) \neq \emptyset ,
\end{eqnarray*}%
where $1\leq \lambda \leq \frac{q+3}{5}.$ Then, there exists two integers $u$
and $v,$ where $1\leq u,v\leq \frac{q+3}{5},$ such that $s-r\left( u+\frac{3%
}{10}\left( q-2\right) -1\right) \equiv -q\left( s-\left( v+\frac{3}{10}%
\left( q-2\right) -1\right) \right) $ $\left( mod\text{ }rn\right) $ or $%
s-r\left( u+\frac{3}{10}\left( q-2\right) -1\right) \equiv -q\left( s+\left(
v+\frac{3}{10}\left( q-2\right) \right) \right) $ $\left( mod\text{ }%
rn\right) .$

\textbf{Case 1:} Let $s-r\left( u+\frac{3}{10}\left( q-2\right) -1\right)
\equiv -q\left( s-r\left( v+\frac{3}{10}\left( q-2\right) -1\right) \right) $
$\left( mod\text{ }rn\right) .$ It follows that $s\equiv \left( u+qv+\frac{3%
}{10}\left( q+1\right) \left( 3q-16\right) \right) \ \left( mod\text{ }%
n\right) .$ Then, we get $\frac{3\left( q-2\right) \left( q+1\right) }{10}%
\leq u+\frac{\left( q+1\right) \left( 3q-16\right) }{10}+qv\leq \frac{\left(
q+1\right) \left( 5q-10\right) }{10}=\frac{q^{2}-q-2}{2}=s-1.$ This
contradicts with $s\equiv \frac{n-r}{2}$ $\left( mod\text{ }n\right) .$

\textbf{Case 2:} Let $s-r\left( u+\frac{3}{10}\left( q-2\right) -1\right)
\equiv -q\left( s+r\left( v+\frac{3}{10}\left( q-2\right) \right) \right) .$
It follows that $s\equiv \left( u-qv+\left( 1-q\right) \frac{3\left(
q-2\right) }{10}-1\right) \ \left( mod\text{ }n\right) .$ Then, we have $%
\frac{-3q^{2}+3q-6}{10}\leq u-qv+\left( 1-q\right) \frac{3\left( q-2\right)
}{10}-1=\frac{-3q^{2}-9q}{10},$ which is in contradiction with $s\equiv
\frac{n-r}{2}$ $\left( mod\text{ }n\right) .$

This means that%
\[
\left( \cup _{j=\frac{3}{10}\left( q-2\right) }^{\frac{3}{10}%
\left(q-2\right) -1+\lambda }C_{s-rj}\right) \cap -q\left( \cup _{j=\frac{3}{%
10}\left( q-2\right) }^{\frac{3}{10}\left(q-2\right)
-1+\lambda}C_{s-rj}\right) =\emptyset.
\]

From Lemma \ref{lm1},\ we have $c=4,$ and by Theorem \ref{th1}, there exists
EAQMDS codes with parameters
\[
\llbracket  n,n-\frac{6}{5}\left( q-2\right) -4\lambda +4,\frac{3}{5}\left(
q-2\right) +2\lambda +1;4 \rrbracket _{q},
\]
where $1\leq \lambda \leq \frac{q+3}{5}.$
\end{proof}

\begin{example}
We present some parameters of EAQMDS codes obtained from Theorem \ref{th2}
in Table \ref{table1}.
\end{example}

\begin{table}[tbp]
\caption{Some EAQMDS codes obtained by Theorem \protect\ref{th2}.}
\label{table1}\centering
\begin{tabular}{ccc}
\hline
& $\lambda $ & $\llbracket n,n-\frac{6}{5}\left( q-2\right) -4\lambda +4,%
\frac{3}{5}\left( q-2\right) +2\lambda +1;4 \rrbracket _{q}$ \\ \hline
& $1$ & $\llbracket 205{,169,21;4} \rrbracket _{32}$ \\
& $2$ & $\llbracket 205{,165,23;4} \rrbracket _{32}$ \\
& $3$ & $\llbracket 205{,161,25;4} \rrbracket _{32}$ \\
& $4$ & $\llbracket 205{,157,27;4} \rrbracket _{32}$ \\
& $5$ & $\llbracket 205{,153,29;4} \rrbracket _{32}$ \\
& $6$ & $\llbracket 205{,149,31;4} \rrbracket _{32}$ \\
& $7$ & $\llbracket 205{,145,33;4} \rrbracket _{32}$ \\ \hline
\end{tabular}%
\end{table}

\subsection{EAQMDS codes of length $n=\frac{q^{2}+1}{5},$ where $q\equiv
8\left( mod\text{ }10\right) $}

Let $q\equiv 8\left( mod\text{ }10\right) $ and $s=\frac{q^{2}-q}{2}.$ Then,
the $q^{2}$-cyclotomic cosets modulo $rn$ containing some integers from $1$
to $rn$ are $C_{s-rj}=\left\{ s-rj,s+r\left( j+1\right) \right\} ,$ where $%
0\leq j\leq \frac{q-2}{2}.$

\begin{lemma}
\label{lm3} Let $q\equiv 8\left( mod\text{ }10\right) .$ If $\mathcal{C}$ is
a $q^{2}$-ary constacyclic code of length $n$ and its defining set is $%
Z=\cup _{j=0}^{\lambda }C_{s-rj},$ where $0\leq \lambda \leq \frac{3q-14}{10}%
,$ then $\mathcal{C}^{\perp _{H}}\subseteq \mathcal{C}.$
\end{lemma}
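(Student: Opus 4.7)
The plan is to imitate the proof of Lemma~\ref{lm2}. By Proposition~\ref{prop2} it suffices to show $Z\cap(-qZ)=\emptyset$. Assume for contradiction that this intersection is nonempty; then there exist integers $j,k$ with $0\le j,k\le (3q-14)/10$ satisfying either $s-rj\equiv -q(s-rk)\pmod{rn}$ (Case~1) or $s-rj\equiv -q(s+rk)\pmod{rn}$ (Case~2). Using $1+q=r$ and $2s=q^{2}-q=5n-r$ (so that $s\equiv (n-r)/2\pmod n$), each of these reduces modulo $n$ to a relation on $j\pm qk$: Case~1 gives $j+qk\equiv (n-r)/2\pmod n$, and Case~2 gives $j-qk\equiv (n-r)/2\pmod n$.

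Write $q=10m+8$ with $m\ge 0$, so that $n=20m^{2}+32m+13$, $(n-r)/2=10m^{2}+11m+2$, and $(3q-14)/10=3m+1$. In Case~1, bracketing $0\le j+qk\le (q+1)(3q-14)/10=(3q^{2}-11q-14)/10$ and noting that $(n-r)/2+n$ already exceeds this upper bound by $(6q+12)/10>0$ leaves $j+qk=(n-r)/2=10m^{2}+11m+2$ as the only possibility. The division algorithm then yields the unique representation $10m^{2}+11m+2=q\cdot m+(3m+2)$ with $0\le 3m+2<q$, forcing $k=m$ and $j=3m+2$, which contradicts $j\le 3m+1$.

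For Case~2 the range is $-q(3q-14)/10\le j-qk\le (3q-14)/10$. Since $(n-r)/2$ exceeds $(3q-14)/10$ and $(n-r)/2-2n$ lies below the lower bound (the marginal case $m=0$ is resolved at once, because then $(n-r)/2-n=-11$ is already below $-qL=-8$ and no admissible value exists), the only candidate is $j-qk=(n-r)/2-n=-(10m^{2}+21m+11)$. Solving $qk-j=10m^{2}+21m+11$ by the division algorithm produces the unique solution $k=m+2$, $j=7m+5$, again violating $j\le 3m+1$. Hence $Z\cap(-qZ)=\emptyset$ and the lemma follows by Proposition~\ref{prop2}. The only delicate ingredient is the bookkeeping that isolates the correct representative of $(n-r)/2\pmod n$ in the range of $j\pm qk$; the bound $(3q-14)/10$ on $\lambda$ is exactly sharp, since it is precisely what rules out the forced representations $j=3m+2$ and $j=7m+5$.
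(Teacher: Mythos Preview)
Your proof is correct and follows essentially the same strategy as the paper's: reduce via Proposition~\ref{prop2} to showing $Z\cap(-qZ)=\emptyset$, and in each of the two cases derive a congruence $j\pm qk\equiv s\pmod n$ that is incompatible with the range $0\le j,k\le (3q-14)/10$. Your version is in fact tidier than the paper's---you use the small representative $s\equiv (n-r)/2\pmod n$ rather than $r(n-1)/2$, and the explicit parametrization $q=10m+8$ makes the division-algorithm step (forcing $j=3m+2$ in Case~1 and $j=7m+5$ in Case~2) transparent, whereas the paper's contradiction in Case~1 is stated somewhat elliptically.
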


\begin{proof}
By Proposition \ref{prop2}, it is sufficient to prove that $Z\cap \left(
-qZ\right) =\emptyset .$ Assume that $Z\cap \left( -qZ\right) \neq \emptyset
.$ Then, there exists two integers $j,$ $k,0\leq j,k\leq \frac{3q-14}{10},$
such that $s-rj\equiv -q\left( s-rk\right) $ $\left( mod\text{ }rn\right) $
or $s-rj\equiv -q\left( s+rk\right) \left( mod\text{ }rn\right) .$

\textbf{Case 1:} $s-rj\equiv -q\left( s-rk\right) $ $\left( mod\text{ }%
rn\right) .$ This is equivalent to $s\equiv j+qk$ $\left( mod\text{ }%
n\right) .$ As $s\equiv r\frac{\left( n-1\right) }{2}$ $\left( mod\text{ }%
n\right) ,$ we get $j+qk\equiv r\frac{\left( n-1\right) }{2}$ $\left( mod%
\text{ }n\right) .$ Since $0\leq j,k\leq \frac{3q-14}{10},$ $0\leq j+qk\leq
\frac{3q-14}{10}+q\frac{3q-14}{10}\leq \left( q+1\right) \frac{\left(
3q-14\right) }{10}<3r\frac{\left( n-1\right) }{2}.$ Then, we have that $%
j+qk\equiv r\frac{\left( n-1\right) }{2}$ $\left( mod\text{ }n\right) $ if
and only if $j+qk=r\frac{\left( n-1\right) }{2}+nt,$ for some integer $t.$
Then, we obtain $r\frac{\left( n-1\right) }{2}=r\frac{\left( \frac{q^{2}+1}{5%
}-1\right) }{2}=r\frac{\left( q^{2}-4\right) }{10}.$ This is a
contradiction, because $0\leq j\leq \frac{3q-14}{10}<\frac{\left(
q^{2}-4\right) }{10}.$

\textbf{Case 2:} $s-rj\equiv -q\left( s+rk\right) $ $\left( mod\text{ }%
rn\right) .$ This is equivalent to $s\equiv j-qk$ $\left( mod\text{ }%
n\right) .$ Since $s\equiv r\frac{\left( n-1\right) }{2}\left( mod\text{ }%
n\right) ,$ we have $j-qk\equiv r\frac{\left( n-1\right) }{2}$ $\left( mod%
\text{ }n\right) ,$ where $0\leq j,k\leq \frac{3q-14}{10}.$ Then, $-q\frac{%
3q-14}{10}\leq j-qk\leq \frac{3q-14}{10}.$ We have that $j-qk\equiv r\frac{%
\left( n-1\right) }{2}$ $\left( mod\text{ }n\right) $ if and only if $j-qk=-r%
\frac{\left( n+1\right) }{2}+nt,$ for some integer $t.$ Then, $-r\frac{%
\left( n+1\right) }{2}=-r\frac{\frac{q^{2}+1}{5}+1}{2}=-r\frac{q^{2}+6}{10}.$
This is a contradiction, because $-q\frac{3q-14}{10}\leq j-qk\leq \frac{3q-14%
}{10}.$
\end{proof}

\begin{theorem}
\label{th3} Let $q\equiv 8\left( mod\text{ }10\right) .$ If $\mathcal{C}$ is
an $q^{2}$-ary $\alpha $-constacyclic code of length $n$ with defining set $%
Z=\cup _{j=0}^{\frac{3q-14}{10}+\lambda }C_{s-rj},$ then there exists EAQMDS
codes with parameters
\[
\llbracket  n,n-\frac{2}{5}\left( 3q-14\right) -4\lambda ,\frac{%
\left(3q-14\right) }{5}+2\lambda +3;4\rrbracket_{q},
\]%
where $1\leq \lambda \leq \frac{q+2}{5}.$
\end{theorem}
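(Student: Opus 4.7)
The plan is to follow the same three-part strategy used for Theorem~\ref{th2}, adapted to the residue $q\equiv 8\pmod{10}$. The first step is a cardinality count: since every coset $C_{s-rj}$ with $0\le j\le (q-2)/2$ has exactly two elements (as noted at the start of the subsection), the defining set satisfies $|Z|=\frac{3q-14}{5}+2\lambda+2$. Combining Proposition~\ref{prop1} with Proposition~\ref{prop3} then yields that $\mathcal{C}$ is a $q^{2}$-ary MDS $\alpha$-constacyclic code with classical parameters
\[
\Bigl[n,\;n-\tfrac{3q-14}{5}-2\lambda-2,\;\tfrac{3q-14}{5}+2\lambda+3\Bigr]_{q^{2}}.
\]

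The heart of the proof is to show that the number $c=|Z_{1}|=|Z\cap (-qZ)|$ of entangled states equals $4$. I would split $Z$ into the \emph{safe} block $Z'=\cup_{j=0}^{(3q-14)/10}C_{s-rj}$, which is covered by Lemma~\ref{lm3} and hence satisfies $Z'\cap(-qZ')=\emptyset$, and the \emph{overlap} block $Z''=\cup_{j=(3q-14)/10+1}^{(3q-14)/10+\lambda}C_{s-rj}$. Expanding $Z_{1}=(Z'\cup Z'')\cap\bigl(-q(Z'\cup Z'')\bigr)$ into its four pairwise intersections, one piece vanishes immediately by Lemma~\ref{lm3}, and the other three are treated by the same congruence-and-division-algorithm technique used in Theorem~\ref{th2}: for each candidate pair of indices $(j,k)$ one reduces a relation of the form $s-rj\equiv -q(s\mp rk)\pmod{rn}$ to $j\pm qk\equiv r\tfrac{n\mp 1}{2}\pmod{n}$, bounds both sides using the range constraints on $j,k$, and then checks via the division algorithm whether a solution can lie in the admissible window.

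I expect this case analysis to produce exactly two surviving cyclotomic cosets, paired by the involution $C_{s-ra}\mapsto -qC_{s-ra}$, in direct analogy with the pair $C_{s-r(q-2)/10}$ and $C_{s-r\cdot 3(q-2)/10}$ that appeared in Theorem~\ref{th2}. In particular, $Z''\cap(-qZ'')$ should vanish outright, because the bound $1\le \lambda\le (q+2)/5$ keeps the indices too small for any internal $-q$-pairing to close up modulo $n$. Once $|Z_{1}|=4$ is confirmed, Lemma~\ref{lm1} gives $c=4$, and Theorem~\ref{th1} applied to the classical MDS parameters above delivers the code
\[
\llbracket n,\;n-\tfrac{2}{5}(3q-14)-4\lambda,\;\tfrac{3q-14}{5}+2\lambda+3;\,4\rrbracket_{q};
\]
the entanglement-assisted Singleton equality in Proposition~\ref{prop4} is then automatic, certifying the EAQMDS label.

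The main obstacle is the bookkeeping in identifying precisely the two cosets that make up $Z_{1}$ and then sharpening the inequalities on $j+qk$ and $j-qk$ so that the division-algorithm quotient is forced out of the admissible range $[1,\lambda]$. These inequalities depend delicately on the upper bound $\lambda\le (q+2)/5$ (which is exactly why that bound appears in the hypothesis), and the residue condition $q\equiv 8\pmod{10}$ enters through the precise value of $s\pmod n$; consequently the detailed arithmetic differs from the $q\equiv 2\pmod{10}$ case even though the structural argument is identical.
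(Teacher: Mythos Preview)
Your proposal is correct and follows essentially the same approach as the paper: the paper likewise computes $|Z|=\tfrac{3q-14}{5}+2\lambda+2$, splits $Z$ at the index $\tfrac{3q-14}{10}$ into a block handled by Lemma~\ref{lm3} and an overlap block, expands $Z\cap(-qZ)$ into four pairwise intersections, and by the same congruence-plus-division-algorithm bounds identifies $Z_{1}=C_{s-r(q-8)/10}\cup C_{s-r(3q-4)/10}$ (the $-q$-paired analogue of the cosets you anticipated), giving $c=4$ via Lemma~\ref{lm1} and the stated parameters via Theorem~\ref{th1}.
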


\begin{proof}
Because, the defining set of $\alpha $-constacyclic code $\mathcal{C}$ of
length $n$ is $Z=$ $\cup _{j=0}^{\frac{3q-14}{10}+\lambda }C_{s-rj},$ then
the cardinality of $Z$ is $\left\vert Z\right\vert =$ $\frac{3q-14}{5}%
+2\lambda +2.$ From Proposition \ref{prop1} and \ref{prop3}, $\mathcal{C}$
is an MDS $\alpha $-constacyclic code with parameters
\[
\left[ n,n-\left( \frac{3q-14}{5}+2\lambda +2\right) ,\frac{\left(
3q-14\right) }{5}+2\lambda +3\right] _{q^{2}}.
\]%
Thus, we have the following:%
\begin{eqnarray*}
Z_{1} &=&Z\cap (-qZ)= \\
&&\left( \left( \cup _{j=0}^{\frac{3q-14}{10}}C_{s-rj}\right) \cup \left(
\cup _{j=\frac{3q-4}{10}}^{\frac{3q-14}{10}+\lambda }C_{s-rj}\right) \right)
\cap \\
&&\left( -q\left( \cup _{j=0}^{\frac{3q-14}{10}}C_{s-rj}\right) \cup
-q\left( \cup _{j=\frac{3q-4}{10}}^{\frac{3q-14}{10}+\lambda
}C_{s-rj}\right) \right) \\
&=&\left( \left( \cup _{j=0}^{\frac{3q-14}{10}}C_{s-rj}\right) \cap -q\left(
\cup _{j=0}^{\frac{3q-14}{10}}C_{s-rj}\right) \right) \cup \\
&&\left( \left( \cup _{j=0}^{\frac{3q-14}{10}}C_{s-rj}\right) \cap -q\left(
\cup _{j=\frac{3q-4}{10}}^{\frac{3q-14}{10}+\lambda }C_{s-rj}\right) \right)
\\
&&\cup \left( \left( \cup _{j=\frac{3q-4}{10}}^{\frac{3q-14}{10}+\lambda
}C_{s-rj}\right) \cap -q\left( \cup _{j=0}^{\frac{3q-14}{10}}C_{s-rj}\right)
\right) \\
&&\cup \left( \left( \cup _{j=\frac{3q-4}{10}}^{\frac{3q-14}{10}+\lambda
}C_{s-rj}\right) \cap -q\left( \cup _{j=\frac{3q-4}{10}}^{\frac{3q-14}{10}%
+\lambda }C_{s-rj}\right) \right) .
\end{eqnarray*}%
We claim that
\[
Z_{1}=Z\cap (-qZ)=C_{s-r\frac{\left( q-8\right) }{10}}\cup C_{s-r\frac{%
\left( 3q-4\right) }{10}}.
\]%
By Lemma \ref{lm2}, we have $\left( \cup _{j=0}^{\frac{3q-14}{10}%
}C_{s-rj}\right) \cap -q\left( \cup _{j=0}^{\frac{3q-14}{10}}C_{s-rj}\right)
=\emptyset .$

It is sufficient to show that%
\[
\left( \cup _{j=0}^{\frac{3q-14}{10}}C_{s-rj}\right) \cap -q\left( \cup
_{j=0}^{\frac{3q-14}{10}}C_{s-rj}\right) =C_{s-r\frac{\left( q-8\right) }{10}%
},
\]%
\[
\left( \cup _{j=\frac{3q-4}{10}}^{\frac{3q-14}{10}+\lambda }C_{s-rj}\right)
\cap -q\left( \cup _{j=0}^{\frac{3q-14}{10}}C_{s-rj}\right) =C_{s-r\frac{%
\left( 3q-4\right) }{10}},
\]%
\[
\left( \cup _{j=\frac{3q-4}{10}}^{\frac{3q-14}{10}+\lambda }C_{s-rj}\right)
\cap -q\left( \cup _{j=\frac{3q-4}{10}}^{\frac{3q-14}{10}+\lambda
}C_{s-rj}\right) =\emptyset .
\]

We first show that
\[
\left( \cup _{j=\frac{3q-4}{10}}^{\frac{3q-14}{10}+\lambda
}C_{s-rj}\right)\cap -q\left( \cup _{j=0}^{\frac{3q-14}{10}}C_{s-rj}\right)
=C_{s-r\frac{\left( 3q-4\right) }{10}}.
\]

The following is immediate:%
\begin{eqnarray*}
&&\left( \cup _{j=\frac{3q-4}{10}}^{\frac{3q-14}{10}+\lambda
}C_{s-rj}\right) \cap -q\left( \cup _{j=0}^{\frac{3q-14}{10}}C_{s-rj}\right)
\\
&=&\left( C_{s-r\frac{\left( 3q-4\right) }{10}}\cup \left( \cup _{j=\frac{%
3q+6}{10}}^{\frac{3q-14}{10}+\lambda }C_{s-rj}\right) \right) \cap -q\left(
\cup _{j=0}^{\frac{3q-14}{10}}C_{s-rj}\right) \\
&=&\left( C_{s-r\frac{\left( 3q-4\right) }{10}}\cap -q\left( \cup _{j=0}^{%
\frac{3q-14}{10}}C_{s-rj}\right) \right) \\
&&\cup \left( \left( \cup _{j=\frac{3q+6}{10}}^{\frac{3q-14}{10}+\lambda
}C_{s-rj}\right) \cap -q\left( \cup _{j=0}^{\frac{3q-14}{10}}C_{s-rj}\right)
\right) .
\end{eqnarray*}%
We claim that
\[
C_{s-r\frac{\left( 3q-4\right) }{10}}\cap -q\left( \cup _{j=0}^{\frac{3q-14}{%
10}}C_{s-rj}\right) =C_{s-r\frac{\left( 3q-4\right) }{10}},
\]%
and
\[
\left( \cup _{j=\frac{3q+6}{10}}^{\frac{3q-14}{10}+\lambda }C_{s-rj}\right)
\cap -q\left( \cup _{j=0}^{\frac{3q-14}{10}}C_{s-rj}\right) =\emptyset ,
\]%
where $1\leq \lambda \leq \frac{q+2}{5}.$

Contrary to the our claim, assume that
\begin{eqnarray*}
&&\left( \cup _{j=\frac{3q+6}{10}}^{\frac{3q-14}{10}+\lambda
}C_{s-rj}\right) \cap -q\left( \cup _{j=0}^{\frac{3q-14}{10}}C_{s-rj}\right)
\\
&=&\left( \cup _{j=2}^{\lambda }C_{s-r\left( j+\frac{3q-14}{10}\right)
}\right) \cap -q\left( \cup _{j=0}^{\frac{3q-14}{10}}C_{s-rj}\right) \neq
\emptyset ,
\end{eqnarray*}%
where $1\leq \lambda \leq \frac{q+2}{5}.$ If $\left( \cup _{j=2}^{\lambda
}C_{s-r\left( j+\frac{3q-14}{10}\right) }\right) \cap -q\left( \cup _{j=0}^{%
\frac{3q-14}{10}}C_{s-rj}\right) \neq \emptyset ,$ then there exists two
integers $a$ and $b,$ where $1\leq a\leq \frac{q+2}{5},$ and $0\leq b\leq
\frac{\left( 3q-14\right) }{10}$ such that $s-r\left( a+\frac{3q-14}{10}%
\right) \equiv -q\left( s-rb\right) $ $\left( mod\text{ }rn\right) $ or $%
s-r\left( a+\frac{3q-14}{10}\right) \equiv -q\left( s+r\left( b+1\right)
\right) $ $\left( mod\text{ }rn\right) .$

\textbf{Case 1:} Let $s-r\left( a+\frac{3q-14}{10}\right) \equiv -q\left(
s-rb\right) $ $\left( mod\text{ }rn\right) .$ It follows that $s\equiv
\left( a+\frac{3q-14}{10}+qb\right) \ \left( mod\text{ }n\right) .$ Then, $%
\frac{\left( 3q-4\right) }{10}\leq a+\frac{3q-14}{10}+qb\leq \frac{3q^{2}-9q%
}{10}-1<\frac{q^{2}-2q}{2}=s-\frac{q}{2},$ which is in contradiction with $%
s\equiv \frac{n-r}{2}$ $\left( mod\text{ }n\right) .$

\textbf{Case 2:} Let $s-r\left( a+\frac{3q-14}{10}\right) \equiv -q\left(
s+r\left( b+1\right) \right) $ $\left( mod\text{ }rn\right) .$ It follows
that $s\equiv \left( a+\frac{3q-14}{10}-q\left( b+1\right) \right) \ \left(
mod\text{ }n\right) .$ Then, $\frac{-3q^{2}+3q}{10}-1\leq a+\frac{3q-14}{10}%
-q\left( b+1\right) \leq -\frac{q}{2}-1,$ which is a contradiction, since $%
s\equiv \frac{n-r}{2}$ $\left( mod\text{ }n\right) .$

The above arguments show that
\[
\left( \cup _{j=\frac{3q-4}{10}}^{\frac{3q-14}{10}+\lambda }C_{s-rj}\right)
\cap -q\left( \cup _{j=0}^{\frac{3q-14}{10}}C_{s-rj}\right) =C_{s-r\frac{%
\left( 3q-4\right) }{10}}.
\]%
We have $-q\left( s-r\frac{3q-4}{10}\right) \equiv s-r\frac{q-8}{10}$ $%
\left( mod\text{ }rn\right) .$ This means that

\[
-q\left( \cup _{j=\frac{3q-4}{10}}^{\frac{3q-14}{10}+\lambda
}C_{s-rj}\right) \cap -q\left( \cup _{j=0}^{\frac{3q-14}{10}}C_{s-rj}\right)
=-qC_{s-r\frac{3q-4}{10}}=C_{s-r\frac{q-8}{10}},
\]%
and it follows that
\[
\left( \cup _{j=0}^{\frac{3q-14}{10}}C_{s-rj}\right) \cap -q\left( \cup
_{j=0}^{\frac{3q-14}{10}}C_{s-rj}\right) =C_{s-r\frac{\left( q-8\right) }{10}%
}.
\]

For the remaining part of the proof, suppose that%
\begin{eqnarray*}
&&\left( \cup _{j=\frac{3q-4}{10}}^{\frac{3q-14}{10}+\lambda
}C_{s-rj}\right) \cap -q\left( \cup _{j=\frac{3q-4}{10}}^{\frac{3q-14}{10}%
+\lambda }C_{s-rj}\right) \\
&=&\left( \cup _{j=1}^{\lambda }C_{s-r\left( j+\frac{3q-14}{10}\right)
}\right) \cap -q\left( \cup _{j=1}^{\lambda }C_{s-r\left( j+\frac{3q-14}{10}%
\right) }\right) \neq \emptyset ,
\end{eqnarray*}
where $1\leq \lambda \leq \frac{q+2}{5}.$ Then, there exists two integers $a$
and $b,$ where $2\leq a,b\leq \frac{q+2}{5},$ such that $s-r\left( a+\frac{%
3q-14}{10}\right) \equiv -q\left( s-\left( b+\frac{3q-14}{10}\right) \right)
$ $\left( mod\text{ }rn\right) $ or $s-r\left( a+\frac{3q-14}{10}\right)
\equiv -q\left( s+r\left( b+1+\frac{3q-14}{10}\right) \right) $ $\left( mod%
\text{ }rn\right).$

\textbf{Case 1:} Let $s-r\left( a+\frac{3q-14}{10}\right) \equiv -q\left(
s-\left( b+\frac{3q-14}{10}\right) \right) $ $\left( mod\text{ }rn\right) .$
It follows that $s\equiv \left( a+qb+\left( q+1\right) \frac{3q-14}{10}%
\right) \ \left( mod\text{ }n\right) .$ It is immediate that $\frac{\left(
q+1\right) \left( 3q-4\right) }{10}\leq a+qb+\left( q+1\right) \frac{3q-14}{%
10}\leq \frac{\left( q+1\right) \left( 5q-10\right) }{10}=\frac{q^{2}-q-2}{2}%
=s-1.$ This contradicts with $s\equiv \frac{n-r}{2}$ $\left( mod\text{ }%
n\right) .$

\textbf{Case 2:} Let $s-r\left( a+\frac{3q-14}{10}\right) \equiv -q\left(
s+r\left( b+1+\frac{3q-14}{10}\right) \right) $ $\left( mod\text{ }rn\right)
.$ It follows that $s\equiv \left( a-q\left( b+1\right) +\left( 1-q\right)
\frac{3q-14}{10}\right) \ \left( mod\text{ }n\right) .$ We have $\frac{%
-5q^{2}+3q-4}{10}$ $\leq a-q\left( b+1\right) +\left( 1-q\right) \frac{3q-14%
}{10}\leq \frac{-3q^{2}-q}{10}-1<s,$ which is in contradiction with $s\equiv
\frac{n-r}{2}$ $\left( mod\text{ }n\right) .$ This means that%
\[
\left( \cup _{j=\frac{3q-4}{10}}^{\frac{3q-14}{10}+\lambda }C_{s-rj}\right)
\cap -q\left( \cup _{j=\frac{3q-4}{10}}^{\frac{3q-14}{10}+\lambda
}C_{s-rj}\right) =\emptyset .
\]%
From Lemma \ref{lm1},\ we have $c=4,$ and by Theorem \ref{th1}, there exists
EAQMDS codes with parameters
\[
\llbracket n,n-\frac{2}{5}\left( 3q-14\right) -4\lambda ,\frac{\left(
3q-14\right) }{5}+2\lambda +3;4\rrbracket_{q},
\]%
where $1\leq \lambda \leq \frac{q+2}{5}.$
\end{proof}

\begin{example}
We present some parameters of EAQMDS codes obtained from Theorem \ref{th3}
in Table \ref{table2}.
\end{example}

\begin{table}[tbp]
\caption{Some EAQMDS codes obtained by Theorem \protect\ref{th3}.}
\label{table2}\centering
\begin{tabular}{ccc}
\hline
$q$ & $\lambda $ & $\llbracket n,n-\frac{2}{5}\left( 3q-14\right) -4\lambda ,%
\frac{\left( 3q-14\right) }{5}+2\lambda +3;4 \rrbracket _{q}$ \\ \hline
$8$ & $1$ & $\llbracket 13,5,7{;4} \rrbracket _{8}$ \\
$8$ & $2$ & $\llbracket 13,1,9{;4} \rrbracket _{8}$ \\
$128$ & $1$ & $\llbracket 3277,3125,79{;4} \rrbracket _{128}$ \\
$128$ & $2$ & $\llbracket 3277,3121,81{;4} \rrbracket _{128}$ \\
$128$ & $3$ & $\llbracket 3277,3117,83{;4} \rrbracket _{128}$ \\
$128$ & $4$ & $\llbracket 3277,3113,85{;4} \rrbracket _{128}$ \\
$128$ & $5$ & $\llbracket 3277,3109,87{;4} \rrbracket _{128}$ \\
$128$ & $6$ & $\llbracket 3277,3105,89{;4} \rrbracket _{128}$ \\
$\vdots $ & $\vdots $ & $\vdots $ \\
$128$ & $26$ & $\llbracket 3277,3025,129{;4} \rrbracket _{128}$ \\ \hline
\end{tabular}%
\end{table}

\subsection{EAQMDS codes of length $n=\frac{q^{2}+1}{13},$ where $q\equiv
5\left( mod\text{ }13\right) $}

Note that $n=\frac{q^{2}+1}{13},$ and so $ord_{rn}\left( q^{2}\right) =2.$
This means that each $q^{2}$-cyclotomic coset modulo $rn$ includes one or
two elements. Let $q\equiv 5\left( mod\text{ }13\right) $ and $s=\frac{%
q^{2}-q}{2}.$ It is easy to see that the $q^{2}$-cyclotomic cosets modulo $%
rn $ containing some integers from $1$ to $rn$ are $C_{s-rj}=\left\{
s-rj,s+r\left( j+1\right) \right\} ,$ where $0\leq j\leq \frac{q-2}{2}.$

\begin{lemma}
\label{lm4} Let $q\equiv 5\left( mod\text{ }13\right) .$ If $\mathcal{C}$ is
a $q^{2}$-ary constacyclic code of length $n$ and defining set $Z=\cup
_{j=0}^{\lambda }C_{s-rj},$ where $0\leq \lambda \leq \frac{3\left(
q-2\right) }{10}-1,$ then $\mathcal{C}^{\perp _{H}}\subseteq \mathcal{C}.$
\end{lemma}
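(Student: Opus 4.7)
The plan is to follow the template established in Lemma~\ref{lm2} almost verbatim, since the coset structure $C_{s-rj} = \{s-rj,\, s+r(j+1)\}$ and the value $s = (q^{2}-q)/2$ are identical to those of the case $n = (q^{2}+1)/5$; what changes is only the modular identity relating $s$ and $n$. By Proposition~\ref{prop2}, it suffices to show that $Z \cap (-qZ) = \emptyset$. First I would record the key congruence: from $2s = q^{2}-q = 13n - r$ one obtains $s \equiv (n-r)/2 \equiv r(n-1)/2 \pmod{n}$, using that $n$ and $r$ are both odd. This puts the setup on exactly the same footing as Lemma~\ref{lm2}.

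Next, arguing by contradiction, I would assume there exist integers $j, k$ with $0 \le j, k \le \lambda \le \frac{3(q-2)}{10}-1$ satisfying either $s-rj \equiv -q(s-rk) \pmod{rn}$ (Case~1) or $s-rj \equiv -q(s+r(k+1)) \pmod{rn}$ (Case~2). Clearing the factor $r$ reduces these to $j + qk \equiv s \pmod{n}$ and $j - q(k+1) \equiv s \pmod{n}$, respectively. Substituting $s \equiv r(n-1)/2 \pmod{n}$ and using the size bound on $\lambda$ to control $|j \pm qk|$, I would narrow the possible values of $j+qk$ (or $j-q(k+1)$) to a short list of shifts $r(n-1)/2 + tn$ for a small set of integers $t$. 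Each candidate is then converted, via the division algorithm with divisor $q$, into a forced explicit value for $j$ of the form $\frac{r(q-2)}{10}$ or similar; that value lies outside $[0,\frac{3(q-2)}{10}-1]$, producing the required contradiction in every subcase.

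The main obstacle is the arithmetic bookkeeping attached to the new length $n=(q^{2}+1)/13$: one must carefully rewrite $r(n-1)/2 = (q+1)(q^{2}-12)/26$ and its translates by $\pm n$ in the form $q\cdot A + B$ so that the division algorithm cleanly exposes the forbidden remainder. In Lemma~\ref{lm2} this step exploited the factorization $r(n-1)/2 = qr(q-2)/10 + 2r(q-2)/10$, which worked because $10 \mid q-2$; here the analogue relies on the congruence $13 \mid q^{2}+1$ (which is equivalent to $q \equiv \pm 5 \pmod{13}$), and finding the right numerator split that meshes with division by $q$ is the delicate technical point. Once that splitting is in place, the case analysis should collapse, via direct range comparisons against $\frac{3(q-2)}{10}-1$, to routine contradictions exactly as in the proofs of Lemmas~\ref{lm2} and~\ref{lm3}.
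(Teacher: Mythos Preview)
Your proposal is correct and is exactly the paper's approach: the paper's entire proof of Lemma~\ref{lm4} is the single sentence ``The proof is analogous to the proof of the Lemma~\ref{lm2}.'' Your worry about a new mod-$13$ splitting is unnecessary, because $q=2^{e}$ with $q\equiv 5\pmod{13}$ forces $e\equiv 9\pmod{12}$, hence $e\equiv 1\pmod 4$ and $q\equiv 2\pmod{10}$; thus $\frac{3(q-2)}{10}-1$ is integral and the same range estimates and division-algorithm decompositions from Lemma~\ref{lm2} carry over with $n=(q^{2}+1)/13$ replacing $(q^{2}+1)/5$.
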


\begin{proof}
The proof is analogous to the proof of the Lemma \ref{lm2}.
\end{proof}

\begin{theorem}
\label{th4} Let $q\equiv 5\left( mod\text{ }13\right) .$ If $\mathcal{C}$ is
an $q^{2}$-ary $\alpha $-constacyclic code of length $n$ with defining set $%
Z=\cup _{j=0}^{\frac{3\left( q-2\right) }{10}-1+\lambda }C_{s-rj},$ then
there exists EAQMDS codes with parameters
\[
\llbracket n,n-\frac{6}{5}\left( q-2\right) -4\lambda +4,\frac{3}{5}\left(
q-2\right) +2\lambda +1;4\rrbracket_{q},
\]%
where $1\leq \lambda \leq \frac{q+3}{5}.$
\end{theorem}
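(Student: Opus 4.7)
My plan is to follow the approach used in the proof of Theorem \ref{th2}, since the defining set $Z$ and the final parameters take identical form; only the length changes from $(q^2+1)/5$ to $(q^2+1)/13$, and the governing congruence is now $q \equiv 5 \pmod{13}$. Lemma \ref{lm4} already plays the role of Lemma \ref{lm2}, so the inner piece $A := \bigcup_{j=0}^{3(q-2)/10 - 1} C_{s-rj}$ of $Z$ satisfies $A \cap (-qA) = \emptyset$. From here I would split $Z = A \cup B$ with $B := \bigcup_{j=3(q-2)/10}^{3(q-2)/10 - 1 + \lambda} C_{s-rj}$ and expand $Z_1 = Z \cap (-qZ)$ into the four cross-intersections $A \cap (-qA)$, $A \cap (-qB)$, $B \cap (-qA)$, $B \cap (-qB)$.

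First, Propositions \ref{prop1} and \ref{prop3} together force $\mathcal{C}$ to be a $q^2$-ary MDS constacyclic code with parameters $[n, n - |Z|, |Z|+1]_{q^2}$, where $|Z| = \tfrac{3(q-2)}{5} + 2\lambda$. Second, I compute $Z_1$: each equation $s - rj \equiv \pm q(s \mp rk) \pmod{rn}$ reduces, after extracting the factor $r = q+1$, to $j \pm qk \equiv s \pmod n$, which via $2s = q^2 - q = 13n - r$ becomes $2(j \pm qk) \equiv -r \pmod n$. For each of the three nontrivial cross-intersections I would bound the admissible range of $j \pm qk$ using the lengths of the two index intervals, enumerate which multiples of $n$ can fall in that range, and either derive a contradiction or pinpoint the unique surviving coset. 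The expected outcome, matching Theorem \ref{th2}, is that $B \cap (-qA)$ contributes one size-two coset, $A \cap (-qB)$ contributes its $-q$-image (another size-two coset), and $B \cap (-qB)$ is empty, so $|Z_1| = 4$. Third, Lemma \ref{lm1} gives $c = 4$ and Theorem \ref{th1} then produces the stated EAQMDS parameters.

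The main obstacle is that the orbit identity $-q(s - rj_1) \equiv s - rj_2 \pmod{rn}$ that located the two surviving cosets in Theorem \ref{th2} must be re-derived from scratch. Under $n = (q^2+1)/5$ the choice $j_1 = 3(q-2)/10$, $j_2 = (q-2)/10$ worked because $(3q^2 - 5q - 2)/10 - (q^2 - q)/2 = -(q^2+1)/5 = -n$; under $n = (q^2+1)/13$ this difference equals $-13n/5$, not a multiple of the new $n$, so the two surviving cosets in general sit at different indices. I therefore expect to spend the bulk of the effort solving $qj_1 + j_2 \equiv s \pmod n$ within the outer/inner index ranges $[3(q-2)/10,\,3(q-2)/10 - 1 + \lambda]$ and $[0,\,3(q-2)/10 - 1]$, verifying that exactly one such pair occurs, and then checking that the case analyses ruling out all other intersections still close up at the new modulus. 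The upper bound $\lambda \le (q+3)/5$ should come out, as in Theorem \ref{th2}, from requiring the outer index interval to stay within $[0, (q-2)/2]$, the natural range of cyclotomic-coset representatives.
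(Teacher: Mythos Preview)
Your plan coincides with the paper's own one-line justification (``analogous to Theorem~\ref{th2}''), and you are right to flag that the surviving cosets must be re-located for the new modulus. The difficulty, however, is not merely a change of indices: the backbone you rely on---Lemma~\ref{lm4}, giving $A\cap(-qA)=\emptyset$---does not survive the passage from $(q^{2}+1)/5$ to $(q^{2}+1)/13$. Take $q=512$, the smallest admissible value. Then $n=20165$, $s\equiv 9826\pmod{n}$, and since $19q+98=9826$ one has
\[
-q\bigl(s-19r\bigr)\equiv s-98r\pmod{rn}.
\]
Both indices $19$ and $98$ lie in $[0,\,3(q-2)/10-1]=[0,152]$, so the inner block $A$ already meets $-qA$; the argument of Lemma~\ref{lm2} does not transfer, and Lemma~\ref{lm4} is in fact false for $\lambda\ge 98$. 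The four elements of $Z_{1}$ that yield $c=4$ therefore do \emph{not} arise from a $B\leftrightarrow A$ pairing as in Theorem~\ref{th2}; for small $\lambda$ they sit entirely inside $A$.

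Worse, a second $-q$-pair $(137,177)$ (check: $137q+177=70321\equiv 9826\pmod{n}$) enters $Z$ as soon as $\lambda\ge 25$, forcing $|Z_{1}|\ge 8$ and hence $c\ge 8$, contrary to the asserted $c=4$. So the step in your plan that bounds $j\pm qk$ and ``either derive[s] a contradiction or pinpoint[s] the unique surviving coset'' cannot close: over the full range $1\le\lambda\le(q+3)/5$ there is more than one surviving pair, and the stated parameters do not hold. A correct version would need a threshold adapted to the divisor $13$ rather than $10$, together with a correspondingly narrower range for $\lambda$.
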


\begin{proof}
The proof is analogous to the proof of Theorem \ref{th2}.
\end{proof}

\begin{example}
We present some parameters of EAQMDS codes obtained from Theorem \ref{th4}
in Table \ref{table3}.
\end{example}

\begin{table}[tbp]
\caption{Some EAQMDS codes obtained by Theorem \protect\ref{th4}.}
\label{table3}\centering
\begin{tabular}{cc}
\hline
$\lambda $ & $\llbracket n,n-\frac{6}{5}\left( q-2\right) -4\lambda +4,\frac{%
3}{5}\left( q-2\right) +2\lambda +1;4\rrbracket_{q}$ \\ \hline
$1$ & $\llbracket20165,19553,309{;4}\rrbracket_{512}$ \\
$2$ & $\llbracket20165,19549,311{;4}\rrbracket_{512}$ \\
$\vdots $ & $\vdots $ \\
$103$ & $\llbracket20165,19145,513{;4}\rrbracket_{512}$ \\ \hline
\end{tabular}%
\end{table}

\subsection{EAQMDS codes of length $n=\frac{q^{2}+1}{17},$ where $q\equiv
13\left( mod\text{ }17\right) $}

Note that $n=\frac{q^{2}+1}{17},$ and so $ord_{rn}\left( q^{2}\right) =2.$
This means that each $q^{2}$-cyclotomic coset modulo $rn$ includes one or
two elements. Let $q\equiv 13\left( mod\text{ }17\right) $ and $s=\frac{%
q^{2}-q}{2}.$ It is easy to see that the $q^{2}$-cyclotomic cosets modulo $%
rn $ containing some integers from $1$ to $rn$ are $C_{s-rj}=\left\{
s-rj,s+r\left( j+1\right) \right\} ,$ where $0\leq j\leq \frac{q-2}{2}.$

\begin{lemma}
\label{lm5} Let $q\equiv 13\left( mod\text{ }17\right) .$ If $\mathcal{C}$
is a $q^{2}$-ary constacyclic code of length $n$ and defining set $Z=\cup
_{j=0}^{\lambda }C_{s-rj},$ where $0\leq \lambda \leq \frac{3\left(
q-4\right) }{10}+2,$ then $\mathcal{C}^{\perp _{H}}\subseteq \mathcal{C}.$
\end{lemma}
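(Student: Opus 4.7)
The plan is to apply Proposition~\ref{prop2}: it suffices to prove $Z\cap(-qZ)=\emptyset$. I will argue by contradiction, mirroring the pattern used in Lemma~\ref{lm2} and Lemma~\ref{lm3}. Suppose $Z\cap(-qZ)\neq\emptyset$; then there exist integers $j,k$ with $0\leq j,k\leq \frac{3(q-4)}{10}+2$ satisfying either
\[
\text{(Case 1)}\quad s-rj\equiv -q(s-rk)\pmod{rn}\qquad\text{or}\qquad \text{(Case 2)}\quad s-rj\equiv -q(s+rk)\pmod{rn}.
\]

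The next step is to reduce each congruence modulo $n$. Since $17n=q^{2}+1$ we have $q^{2}\equiv -1\pmod{n}$; a short direct computation (analogous to the one preceding Lemma~\ref{lm2}) gives $s\equiv r(n-1)/2\pmod{n}$, since $s-r(n-1)/2=(16-q)n/2$ is an integer multiple of $n$ as $q=2^{e}$ is even. Dividing the Case~1 congruence by $r$ and substituting yields $j+qk\equiv r(n-1)/2\pmod{n}$, while Case~2 yields $j-qk\equiv -r(n+1)/2\pmod{n}$.

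Now I would bound each side. Using $0\leq j,k\leq \frac{3(q-4)}{10}+2$, the sum $j+qk$ lies in the interval $[0,(q+1)(\frac{3(q-4)}{10}+2)]$, which is much smaller than $rn$, so only finitely many lifts $j+qk=r(n-1)/2+tn$ are possible; writing $r(n-1)/2=(q+1)(q-4)(q+4)/34$ and dividing each admissible lift by $q$ (via the division algorithm) forces the residue $j$ to equal an explicit expression in $q$ that lies strictly outside $[0,\tfrac{3(q-4)}{10}+2]$, giving a contradiction. Case~2 is handled symmetrically: $j-qk=-r(n+1)/2+tn$ for a small set of integers $t$, and again the division algorithm forces either $j$ or $k$ out of range.

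The main obstacle is the bookkeeping rather than any conceptual difficulty: enumerating the feasible values of the integer $t$ in each case, and then checking for each resulting candidate that the division algorithm by $q$ produces a residue $j$ (respectively $k$) that provably fails $0\leq j\leq \tfrac{3(q-4)}{10}+2$. The delicate constants arise from the combination $n=(q^{2}+1)/17$ with $q\equiv 13\pmod{17}$, which control exactly when the candidate residues graze the boundary of the admissible window; care is required to keep the inequalities strict throughout.
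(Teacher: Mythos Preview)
Your approach is correct and matches the paper's exactly: the paper's own proof of this lemma is the single sentence ``The proof is analogous to the proof of the Lemma~\ref{lm2},'' and your proposal carries out precisely that analogy (Proposition~\ref{prop2}, contradiction via the two congruences, reduction modulo $n$ using $s\equiv r(n-1)/2\pmod n$, then bounding and applying the division algorithm). Your sketch in fact supplies more detail than the paper does, and the computation $s-r(n-1)/2=(16-q)n/2$ is correct for $n=(q^{2}+1)/17$.
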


\begin{proof}
The proof is analogous to the proof of the Lemma \ref{lm2}.
\end{proof}

\begin{theorem}
\label{th5} Let $q\equiv 13\left( mod\text{ }17\right) .$ If $\mathcal{C}$
is an $q^{2}$-ary $\alpha $-constacyclic code of length $n$ with defining
set $Z=\cup _{j=0}^{\frac{3\left( q-2\right) }{10}+2+\lambda }C_{s-rj},$
then there exists EAQMDS codes with parameters
\[
\llbracket n,n-\frac{6}{5}\left( q-4\right) -4\lambda -8,\frac{3}{5}\left(
q-4\right) +2\lambda +4;4\rrbracket_{q},
\]%
where $1\leq \lambda \leq \frac{q+4}{17}.$
\end{theorem}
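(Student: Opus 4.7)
The plan is to imitate the structure of Theorem~\ref{th2} (and Theorem~\ref{th4*}), since this is another ``extension beyond the dual-containing range'' result with $c=4$. First I would record that since each $q^{2}$-cyclotomic coset $C_{s-rj}$ has size $2$, the defining set $Z=\bigcup_{j=0}^{M}C_{s-rj}$ (with $M=\tfrac{3(q-4)}{10}+2+\lambda$, reading the upper bound in light of Lemma~\ref{lm5}) has cardinality $|Z|=2M+2$, and corresponds to $2M+2$ consecutive elements of the form $1+r\ell$. Combining the BCH bound (Proposition~\ref{prop1}) with the Singleton bound (Proposition~\ref{prop3}) would then immediately give that $\mathcal C$ is a $q^{2}$-ary MDS $\alpha$-constacyclic code with parameters $[n,n-|Z|,|Z|+1]_{q^{2}}$.

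Next I would split $Z=A\cup B$, where $A=\bigcup_{j=0}^{\frac{3(q-4)}{10}+2}C_{s-rj}$ is the portion already handled by Lemma~\ref{lm5}, and $B=\bigcup_{j=\frac{3(q-4)}{10}+3}^{\frac{3(q-4)}{10}+2+\lambda}C_{s-rj}$ is the new $\lambda$ cosets. The computation of $Z_{1}=Z\cap(-qZ)$ then breaks into four pieces:
\[
Z_{1}=\bigl(A\cap -qA\bigr)\cup\bigl(A\cap -qB\bigr)\cup\bigl(B\cap -qA\bigr)\cup\bigl(B\cap -qB\bigr).
\]
Lemma~\ref{lm5} already gives $A\cap(-qA)=\emptyset$. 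The goal is then to show that $B\cap(-qA)$ equals exactly one coset (call it $C_{s-r\alpha}$ for some explicit $\alpha$ computed by hand from $-q(s-r\cdot 0)\bmod rn$ landing inside $B$), that $A\cap(-qB)=-qC_{s-r\alpha}$ is the corresponding partner coset, and that $B\cap(-qB)=\emptyset$.

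The routine part is the first two identifications: one checks by the same $-q$-action computation used in Lemma~\ref{lm4*} / Lemma~\ref{lm7*} which single element of $B$ hits $-qA$, then takes $-q$ of it to get the mirror coset in $A\cap(-qB)$. The main obstacle, as in the proofs of Theorems~\ref{th2}, \ref{th3}, \ref{th4*}, is to rule out all other intersections, i.e.\ to show $B\cap(-qB)=\emptyset$ and that the ``outer'' part of $B\cap(-qA)$ beyond the single coset is empty. Assuming a congruence of the form $s-r(j+\tfrac{3(q-4)}{10}+2)\equiv -q(s\mp rk)\pmod{rn}$ with $1\le j\le \lambda$ and $0\le k\le\tfrac{3(q-4)}{10}+2$, one reduces to $j\pm qk\equiv 0\pmod n$. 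The key inequality to establish is that, under the hypothesis $1\le\lambda\le\tfrac{q+4}{17}$ together with $n=\tfrac{q^{2}+1}{17}$, the quantity $j+qk$ (resp.\ $j-qk$) lies in a window too narrow to hit any nonzero multiple of $n$ other than the unique one corresponding to the identified coset; the bound $\tfrac{q+4}{17}$ is precisely tailored so that $j+qk<n$ and $|j-qk|<n$ outside the one allowed multiple. This is a careful but routine division-algorithm argument, identical in flavor to Cases~1 and~2 in Theorem~\ref{th2}, only with the modulus $17$ replacing $10$ in the bookkeeping.

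Once this is done, $|Z_{1}|=2\cdot|C_{s-r\alpha}|+2\cdot|C_{-q\cdot(s-r\alpha)}|/2=4$ gives $c=4$ via Lemma~\ref{lm1}, and Theorem~\ref{th1} then yields the stated EAQMDS parameters $\llbracket n,n-\tfrac{6}{5}(q-4)-4\lambda-8,\tfrac{3}{5}(q-4)+2\lambda+4;4\rrbracket_{q}$. The saturation of the entanglement-assisted Singleton bound (Proposition~\ref{prop4}) is then an arithmetic check on these parameters and requires no further work.
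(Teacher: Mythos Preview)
Your proposal is essentially the paper's own approach: the paper's proof is literally ``The proof is analogous to the proof of the Theorem~\ref{th2},'' and you have spelled out exactly that analogy, splitting $Z=A\cup B$ with $A$ covered by Lemma~\ref{lm5} and analyzing the four intersection pieces to extract $|Z_1|=4$.

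One small slip to watch: in the even-$q$ setting (which is the context of Section~5 and hence of Theorem~\ref{th5}), the congruence $s-r(j')\equiv -q(s\mp rk)\pmod{rn}$ does \emph{not} reduce to $j'\pm qk\equiv 0\pmod n$. Here $s=\tfrac{q^2-q}{2}$ and $s\equiv r\,\tfrac{n-1}{2}\pmod n$, not $s\equiv 0\pmod n$ as in the odd-$q$ sections; so the target residue in the division-algorithm argument is $r\,\tfrac{n\pm 1}{2}$ rather than a multiple of $n$ (compare Cases~1 and~2 in the proofs of Lemma~\ref{lm2} and Theorem~\ref{th2}). This does not affect your strategy, only the arithmetic bookkeeping when you carry it out.
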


\begin{proof}
The proof is analogous to the proof of the Theorem \ref{th2}.
\end{proof}

\begin{example}
We present some parameters of EAQMDS codes obtained from Theorem \ref{th5}
in Table \ref{table4}.
\end{example}

\begin{table}[tbp]
\caption{Some EAQMDS codes obtained by Theorem \protect\ref{th5}.}
\label{table4}\centering
\begin{tabular}{cc}
\hline
$\lambda $ & $\llbracket n,n-\frac{6}{5}\left( q-4\right) -4\lambda -8,\frac{%
3}{5}\left( q-4\right) +2\lambda +4;4\rrbracket_{q}$ \\ \hline
$1$ & $\llbracket241,157,42{;4}\rrbracket_{64}$ \\
$2$ & $\llbracket241,153,44{;4}\rrbracket_{64}$ \\
$3$ & $\llbracket241,149,46{;4}\rrbracket_{64}$ \\
$4$ & $\llbracket241,145,48{;4}\rrbracket_{64}$ \\ \hline
\end{tabular}%
\end{table}

\section{Conclusion}

In this work, via a decomposition of the defining set of constacyclic codes
we have constructed eight new families of EAQMDS codes. In addition to the
parameters of EAQMDS and EAQC codes given in \cite{Lu1} and \cite{Guenda},
we remark that the parameters of EAQMDS and EAQC codes listed below haven't
covered ones given in this paper.

\begin{enumerate}
\item $\llbracket  q^{2}+1,q^{2}-2d+4,d;1\rrbracket  _{q},$ where $q$ is a
prime power, $2\leq d\leq 2q$ is an even integer (\cite{Fan}).

\item $\llbracket  \frac{q^{2}-1}{2},\frac{q^{2}-1}{2}-2d+4,d;2\rrbracket  %
_{q},$ where $q$ is an odd prime power, $\frac{q+5}{2}\leq d\leq \frac{3q-1}{%
2}$ (\cite{Fan}).

\item $\llbracket  n,n-2\delta -1,2\delta +2;2\delta +1\rrbracket  _{q},$
where $q$ is an odd prime power, $n=q^{2}+1,$ $s=\frac{n}{2},r|q-1,$ $r\nmid
q+1,$and $0\leq \delta \leq \frac{(r-1)(s-1)}{r}$ (\cite{Qian2}).

\item $\llbracket  n,n-2\delta -2,2\delta +3;2\delta +2\rrbracket  _{q},$%
where $q=2^{m},n=q^{2}+1,r|q-1,r\nmid q+1,u=\frac{n-r}{2}$ and $0\leq \delta
\leq \frac{u-1}{r}$ (\cite{Qian2}).

\item $\llbracket  n,n-2\delta -1,2\delta +2;2\delta +1\rrbracket  _{q},$
where $q=2^{m},n=q^{2}+1,r|q-1,r\nmid q+1$ and $0\leq \delta \leq \frac{%
(r-1)(n-2)}{2r}$ (\cite{Qian2}).

\item $\llbracket  q^{2}+1,q^{2}+5-2q-4t,q+2t+1;4\rrbracket  _{q},$ where $%
2\leq t\leq \frac{q-1}{2},$ $q$ is an odd prime power with $q\geq 5$ and $%
q\equiv 1$ $\left( mod\text{ }4\right) $ (\cite{ChenJ}).

\item $\llbracket  \frac{q^{2}+1}{2},\frac{q^{2}+1}{2}-2q-4t+5,q+2t+1;5%
\rrbracket  _{q},$ where $2\leq t\leq \frac{q-1}{2},$ $q$ is an odd prime
power with $q>7$ (\cite{ChenJ}).

\item $\llbracket  \lambda(q+1),\lambda(q+1)-2\lambda-2t-q+5,\frac{q+1}{2}%
+t+\lambda ;4\rrbracket  _{q},$ where $q$ is an odd prime power with $q\geq
7,$ $\lambda$ is an odd divisor of $q-1$ with $\lambda \geq 3$ and $\frac{q+3%
}{2}\leq t\leq \frac{q-1}{2}+\lambda $ (\cite{ChenJ}).

\item $\llbracket  2\lambda(q+1),2\lambda(q+1)-4\lambda-2t-q+5,\frac{q+1}{2}%
+t+2\lambda ;4\rrbracket  _{q},$ where $q$ is an odd prime power with $q\geq
13,$ $q\equiv 1$ $\left( mod\text{ }4\right) ,$ $\lambda$ is an odd divisor
of $q-1 $ with $\lambda\geq 3$ and $\frac{q+3}{2}\leq t\leq \frac{q-1}{2}%
+2\lambda$ (\cite{ChenJ}).

\item $\llbracket  \frac{q^{2}+1}{2},\frac{q^{2}+1}{2}-5,d\geq 3;5%
\rrbracket
_{q},$ where $q$ is an odd prime power with $q>3$ (\cite{ChenJ}).

\item $\llbracket  q^{2}+1,q^{2}-3,d\geq 3;4\rrbracket  _{q},$ where $q$ is
an odd prime power with $q\geq 5$ and $q\equiv 1$ $\left( mod\text{ }%
4\right) $ (\cite{ChenJ}).

\item $\llbracket  \frac{q^{2}+1}{10},\frac{q^{2}+1}{10}-2d+3,d;1\rrbracket  %
_{q},$ where $q$ is an odd prime power of the form $10m +3,$ $2 \leq d \leq
6m +2$ is even. (\cite{Lu2}).

\item $\llbracket  \frac{q^{2}+1}{10},\frac{q^{2}+1}{10}-2d+3,d;1\rrbracket  %
_{q},$ where $q$ is an odd prime power of the form $10m +7,$ $2 \leq d \leq
6m +4$ is even. (\cite{Lu2}).
\end{enumerate}

\end{document}